\documentclass{article}
\usepackage{authblk}
\usepackage{amsmath,amssymb,amsthm,amsfonts,bm}
\usepackage{geometry}
\usepackage{color,enumerate}
\usepackage{hyperref}
\usepackage{breqn}
\usepackage{graphicx,subfigure,float}
\usepackage[capitalize,nameinlink]{cleveref}

\newcounter{subeq}
\renewcommand{\thesubeq}{\theequation.\arabic{subeq}}
\newcommand{\newsubeqblock}{\setcounter{subeq}{0}\refstepcounter{equation}}
\newcommand{\mysubeq}{\refstepcounter{subeq}\tag{\thesubeq}}
\renewcommand{\Re}{\operatorname{Re}}
\renewcommand{\Im}{\operatorname{Im}}
\renewcommand{\i}{\mathrm{i}}
\DeclareMathOperator{\sech}{sech}

\newtheorem{theorem}{Theorem}
\newtheorem{lemma}{Lemma}
\newtheorem{remark}{Remark}%

\usepackage{natbib}

\begin{document}

\title{Soliton solutions to the coupled Sasa-Satsuma equation under mixed boundary conditions}

\author[1]{Changyan Shi}
\author[2]{Xiyao Chen}
\author[2]{Guangxiong Zhang}
\author[2,3]{Chengfa Wu}
\author[1,*]{Bao-Feng Feng}

\affil[1]{School of Mathematical and Statistical Sciences, The University of Texas Rio Grande Valley, Edinburg, TX 78539, USA}
\affil[2]{Institute for Advanced Study, Shenzhen University, Shenzhen, 518060, People's Republic of China}
\affil[3]{School of Mathematical Sciences, Shenzhen University, Shenzhen, 518060, People's Republic of China}

\affil[*]{Corresponding author: \href{mailto:baofeng.feng@utrgv.edu}{baofeng.feng@utrgv.edu}}

\date{} 

\maketitle

\abstract{In this paper, we derive general bright-dark soliton solutions to the coupled Sasa-Satsuma (CSS) equation using the Kadomtsev-Petviashvili (KP) reduction method. Since the CSS equation is a special case of the four-component Hirota equation, our approach begins with the construction of two-bright-two-dark soliton solutions for the four-component Hirota equation. By imposing specific parameter constraints, these solutions are subsequently reduced to the bright-dark soliton solutions of the CSS equation. Finally, the dynamical behaviors of the one- and two-bright-dark soliton solutions are thoroughly analyzed and illustrated.}

\section{Introduction}
Solitons have been extensively studied in both mathematics and physics due to their unique property of preserving shape and velocity after interactions with other solitary waves of the same type. The first numerical observation of soliton behavior in media governed by the Korteweg-de Vries (KdV) equation \cite{Korteweg1895Vries} was carried out by Zabusky and Kruskal \cite{Zabusky1965Kruskal}. A few years later, Gardner, Greene, Kruskal, and Miura \cite{Gardner1967GreeneKruskalMiura} introduced the inverse scattering transform (IST) method, which provided exact soliton solutions to the KdV equation with rapidly decaying initial conditions. Lax \cite{Lax1968} extended this approach, demonstrating that the IST method could be applied to any nonlinear wave equation equivalent to the compatibility condition of two linear problems, now known as the Lax pair. Following this, Zakharov and Shabat \cite{Zakharov1971Shabat} showed that the nonlinear Schr{\"o}dinger (NLS) equation \cite{Benney1967Newell} also had a Lax pair, enabling its solution  through the IST method. Subsequently, Ablowitz, Kaup, Newell, and Segur (AKNS) \cite{Ablowitz1973KaupNewellSegur,Ablowitz1974KaupNewellSegur} discovered a broad class of integrable equations, including the sine-Gordon equation and the modified Korteweg-de Vries (mKdV) equation, that could also be solved or linearized using the IST method. These foundational contributions sparked widespread research into soliton solutions of nonlinear evolution equations.

The NLS equation governs the evolution of weakly nonlinear wave packets in various media \cite{malomed2005nonlinear}, including nonlinear optical fibers \cite{hasegawa1973transmission}, planar waveguides \cite{fibich2015nonlinear}, Bose-Einstein condensates \cite{dalfovo1999theory,pitaevskii2003bose}, plasmas \cite{zakharov1972collapse,kato2005nonlinear} and water \cite{Benney1967Newell}. A natural extension of this equation is the coupled NLS equation, a multi-component model introduced by Manakov \cite{manakov1974theory}, given by
\begin{align}\label{Manakov}
    \begin{split}
        \i q_{1,t} &= q_{1,xx} - 2 \left(\sigma_1 |q_1|^2 + \sigma_2 |q_2|^2\right) q_1,\\
        \i q_{2,t} &= q_{2,xx} - 2 \left(\sigma_1 |q_1|^2 + \sigma_2 |q_2|^2\right) q_2,
    \end{split}
\end{align}
where \(\sigma_1, \sigma_2\) are real coefficients, and the system reduces to the standard NLS equation if \(q_2 = q_1\), \(\sigma_2 = \sigma_1\). Known as the Manakov system, \cref{Manakov} has been widely applied in nonlinear optics, particularly to model optical pulses in birefringent optical fibers \cite{maimistov2013nonlinear, agrawal2000nonlinear}, with the two components representing different light polarizations \cite{gelash2023vector}. 
Like the NLS equation, the Manakov system has focusing and defocusing cases. 
In the focusing regime, where \(\sigma_1 < 0, \sigma_2 < 0 \), the Manakov system \eqref{Manakov} supports bright-bright solitons \cite{manakov1974theory,kanna2003exact}, which can undergo shape-changing collisions and exhibit breather-like oscillations. In the defocusing regime, where 
\(\sigma_1 > 0, \sigma_2 > 0\), the system admits dark-dark \cite{sheppard1997polarized} and bright-dark solitons \cite{radhakrishnan1995integrability}. 
Additionally, the system can exhibit a mixed regime where \(\sigma_1\) and \(\sigma_2\) 
have opposite signs, resulting in the existence of bright-bright \cite{wang2010integrable,kanna2006soliton}, dark-dark \cite{ohta2011general}, and bright-dark soliton solutions \cite{vijayajayanthi2008bright}. 
Both the NLS equation and the Manakov system are considered integrable, possessing an infinite number of conserved quantities \cite{hitchin2013integrable}. 
Consequently, various methods have been developed to obtain their solutions, including Darboux transformation \cite{matveev1991darboux}, Hirota's bilinear method \cite{hirota2004direct}, and the Kadomtsev–Petviashvili (KP) reduction method \cite{jimbo1983solitons,sato1989kp}.

In the study of nonlinear optical waves, Sasa and Satsuma \cite{sasa1991new} discovered an integrable higher order NLS equation based on the work of Hasegawa and Kodama \cite{Kodama1987Hasegawa}. This generalized NLS equation, now called the Sasa-Satsuma (SS) equation, incorporates higher-order effects such as third-order dispersion, self-steepening, and stimulated Raman scattering. Motivated by the integrability and the concrete physical relevance of the SS equation, extensive research has focused on exploring its properties. For example, soliton solutions to the SS equation have been derived in several studies \cite{sasa1991new, Mihalache1993TornerMoldoveanuPanoiuTruta, gilson2003sasa, ohta2010dark, Xu_2015, guo2019darboux, yang2019high}. Additionally, its rogue wave solutions have been derived by various methods, such as Darboux transformation \cite{Chen2013, MU2020QinRogerNail, Mu2016Qin, Ling2016} and the KP reduction method \cite{feng2022higher,wu2022general}.
To describe the propagation of optical pulses in birefringent fibers, the Sasa-Satsuma equation has been extended into a coupled form \cite{PhysRevLett.80.1425}, given by
\begin{align}
    u_{1,t}&=u_{1,xxx} - 6 \left(\epsilon_1|u_1|^2 + \epsilon_2|u_2|^2\right) u_{1,x} - 3 u_1 \left(\epsilon_1|u_1|^2 + \epsilon_2|u_2|^2\right)_x,\label{css_1}\\
    u_{2,t}&=u_{2,xxx} - 6 \left(\epsilon_1|u_1|^2 + \epsilon_2|u_2|^2\right) u_{2,x} - 3 u_2 \left(\epsilon_1|u_1|^2 + \epsilon_2|u_2|^2\right)_x,\label{css_2}
\end{align}
which was originally proposed by Porsezian, Shanmugha, and Mahalingam in \cite{porsezian1994coupled}.  Multiple studies are devoted to investigating the exact solutions and their dynamics for the above coupled Sasa-Satsuma equation, including bright-bright and bright-dark soliton solutions \cite{LU20143969,PhysRevE.87.032913,liu2018vector,liu2023riemann,liu2018dark}, as well as dark-dark soliton and rogue wave solutions \cite{zhang2017binary,zhao2014localized}.
Very recently, by employing the KP reduction method, the authors derived the general dark-dark soliton, breather, and rogue wave solutions to the CSS equation \cite{zhang2024dark,zhang2025rogue}.

Early investigations into the Manakov system revealed that bright-bright \cite{manakov1974theory,kanna2003exact} and bright-dark \cite{radhakrishnan1995integrability,vijayajayanthi2008bright} soliton solutions exhibit shape-changing collisions and breather-like oscillations. As the coupled Sasa-Satsuma (CSS) equation is a higher-order extension of the Manakov system, similar behavior of the bright-bright soliton solution was observed for the CSS equation  \cite{PhysRevE.87.032913,LU20143969,liu2018vector,ma2023sasa,ma2024soliton}.
However, for the bright--dark soliton solutions of the CSS equation, the existing literature remains incomplete. 
In \cite{liu2018dark}, only breather-type oscillatory behaviors were reported, and the possibility of shape-changing collisions between bright--dark solitons was not addressed. 
Moreover, although first- and second-order bright--dark soliton solutions were obtained via Darboux transformation, a general formulation of bright--dark soliton solutions has not yet been established, to the best of our knowledge.
These observations naturally lead to the following questions:
\begin{itemize}
     
\item [(i)] How can we construct the general bright-dark soliton solution to the CSS equation using the Kadomtsev-Petviashvili reduction method? 



\item [(ii)] Do the bright-dark soliton solutions to the CSS equation exhibit shape-changing collisions similar to those observed in the Manakov system? 
\end{itemize}
The present work addresses these issues and provides a unified treatment of the CSS equation under mixed boundary conditions. 
Our main contribution is the derivation of explicit general bright--dark soliton solutions in a determinant form. 



We should point out that there exists a direct connection between the Sasa-Satsuma or coupled Sasa-Satsuma equations \eqref{css_1}-\eqref{css_2} and the vector Hirota equation
\begin{align}  
    v_{j,t}&=v_{j,xxx} - 3 \left(\sum_{k=1}^n c_k|v_k|^2\right) v_{j,x} - 3 v_j \left(\sum_{k=1}^n c_k v_k^* v_{k,x}\right),\label{4cmkdv}
\end{align} 
where \(j = 1,2, \cdots,n\). The two-component Hirota equation can be reduced to the Sasa-Satsuma equation, through which both the bright- and dark-soliton solutions are rederived \cite{shi2025general}.  For $n=4$, if we impose the following conditions
\begin{equation}\label{cond_4cmkdv_to_css}
    \epsilon_1 = c_1 = c_2, \quad \epsilon_2 = c_3 = c_4, \quad u_1 = v_1 = v_2^*, \quad u_2 = v_3 = v_4^*,
\end{equation}
then the four-component Hirota equation reduces to the
  CSS equation \eqref{css_1}-\eqref{css_2}. This observation plays a crucial role for us in deriving bright-dark soliton solutions to the CSS equation. Starting from the matrix AKNS hierarchy, several generalized Sasa-Satsuma-type equations have been derived and their bright soliton solutions have been investigated \cite{ma2025combined,ma2026reduced}.
 Due to the complex bilinear form to the CSS equation \cite{zhang2024dark}, the relation between the CSS equation and the four-component Hirota equation
has inspired us to explore an alternative method for deriving the bright-dark soliton solutions to the CSS equation. 
The derivation relies essentially on the KP reduction technique. By formulating the problem within the KP hierarchy, the CSS equation is reduced to a compatible set of bilinear equations, which are related to the (Hirota) bilinear forms of \eqref{4cmkdv} with $n=4$ through the reduction condition \eqref{cond_4cmkdv_to_css}. This four-component Hirota structure is the central mechanism behind the construction of the tau functions and the multi-soliton solutions. In this sense, the KP reduction provides the theoretical bridge, while the four-component Hirota equation constitutes the core computational framework of the paper.


The remainder of the paper is organized as follows. In \cref{sect:main_results}, we present our results on bright-dark soliton solutions to the CSS equation \eqref{css_1}-\eqref{css_2} in \cref{thm:bd_css}. In \cref{sect:bd}, we provide the proof of \cref{thm:bd_css} by first deriving the solutions to the four-component Hirota equation \eqref{4cmkdv} and then reducing it to the solution to the CSS equation \eqref{css_1}-\eqref{css_2}. Finally, the article concludes in \cref{sect:conclusion}.

\section{Main results}\label{sect:main_results}
As discussed in the previous section, the CSS equation \eqref{css_1}-\eqref{css_2} is a special case to the four-component Hirota equation \eqref{4cmkdv}, obtained through the reduction condition \eqref{cond_4cmkdv_to_css}. To derive the two-bright-two-dark soliton solution for \eqref{4cmkdv} with $n=4$, we begin by bilinearizing the equation using Hirota's \(D\)-operator \cite{hirota2004direct} defined by
\begin{eqnarray*}\label{doperator}
D_x^mD_t^nf\cdot g=\left.\left(\frac{\partial}{\partial x}-\frac{\partial}{\partial {x'}}\right)^m\left(\frac{\partial}{\partial t}-\frac{\partial}{\partial{t'}}\right)^n
[f(x,t)g(x',t')]\right|_{x'=x,t'=t}.
\end{eqnarray*}
By the transformation
\begin{align}\label{bbdd_transformation}
        \begin{split}
            &v_1 = \frac{g_1}{f}\exp\left(\i (-\omega_1 t) \right), \quad v_2 = \frac{g_2}{f}\exp\left(\i (-\omega_2 t) \right), \\
            &v_3 = \rho_3\frac{h_3}{f} \exp\left(\i (\alpha_3 x - \omega_3 t)\right), \quad v_4 = \rho_4\frac{h_4}{f} \exp\left(\i (\alpha_4 x - \omega_4 t)\right),
        \end{split}
    \end{align}
where
    \begin{equation}\label{bbdd_disp}
        \begin{split}
            \omega_1 &= \omega_2 = 3 \rho_3^2 c_3 \alpha_3 + 3 \rho_4^2 c_4 \alpha_4, \\ 
            \omega_3 &= \alpha_3^3 + 3(c_3 \rho_3^2 + c_4 \rho_4^2) \alpha_3 + 3(c_3 \rho_3^2 \alpha_3 + c_4 \rho_4^2 \alpha_4),\\
            \omega_4 &= \alpha_4^3 + 3(c_3 \rho_3^2 + c_4 \rho_4^2) \alpha_4 + 3(c_3 \rho_3^2 \alpha_3 + c_4 \rho_4^2 \alpha_4),
        \end{split}
\end{equation}
and applying the equality \(a D_x b \cdot f - b D_x a \cdot f = f D_x b \cdot a\),
the first component of equation \eqref{4cmkdv} can be simplified into
\begin{align*}
& f^2 \left(D_x^3 - D_t - 6(c_3 \rho_3^2 + c_4 \rho_4^2) D_x+ 3\i (\rho_3^2 c_3 \alpha_3 + \rho_4^2 c_4 \alpha_4) \right) g_1 \cdot f \\
& - 3 (D_x g_1 \cdot f) \left[\left(D_x^2 - 2 c_3 \rho_3^2 - 2 c_4 \rho_4^2\right)f\cdot f  + 2c_1 |g_1|^2 + 2c_2 |g_2|^2 + 2\rho_3^2 c_3 |h_3|^2 + 2\rho_4^2 c_4 |h_4|^2\right] \\
& +3 c_2 g_2^* f D_x g_1 \cdot g_2 +3 \rho_3^2 c_3 h_3^* f\left(D_x g_1 \cdot h_3 - \i g_1 h_3 \alpha_3\right) +3 \rho_4^2 c_4 h_4^* f\left(D_x g_1 \cdot h_4 - \i g_1 h_4 \alpha_4\right)= 0.
\end{align*}
Introducing the auxiliary functions \(s_{12},r_{13},r_{14}\), we obtain the bilinear form for the first component of \eqref{4cmkdv}
\begin{align*}
& \left(D_x^3 - D_t - 6(c_3 \rho_3^2 + c_4 \rho_4^2) D_x+ 3\i (\rho_3^2 c_3 \alpha_3 + \rho_4^2 c_4 \alpha_4) \right) g_1 \cdot f \\
&\qquad = -3 c_2 s_{12} g_2^* + 3 \i \rho_3^2 c_3\alpha_3 r_{13} h_3^* + 3 \i \rho_4^2 c_4\alpha_4 r_{14} h_4^*,\\
& \left(D_x^2 - 2 c_3 \rho_3^2 - 2 c_4 \rho_4^2\right)f\cdot f  + 2c_1 |g_1|^2 + 2c_2 |g_2|^2 + 2\rho_3^2 c_3 |h_3|^2 + 2\rho_4^2 c_4 |h_4|^2 = 0,\\
& D_x g_1 \cdot g_2 = s_{12} f,\\
& D_x g_1 \cdot h_3 - \i g_1 h_3 \alpha_3 = -\i\alpha_3 r_{13} f,\\
& D_x g_1 \cdot h_4 - \i g_1 h_4 \alpha_4 = -\i\alpha_4 r_{14} f.
\end{align*}
Similarly, the third component of \eqref{4cmkdv} is transformed as
{\allowdisplaybreaks\begin{align*}
& \left[D_x^3 - D_t + 3\i \alpha_3 D_x^2 - 3\left(\alpha_3^2 + 2 c_3 \rho_3^2 + 2 c_4 \rho_4^2\right) D_x \right. \\ 
& \left. -3\i (c_3 \rho_3^2 + c_4 \rho_4^2) \alpha_3 + 3\i (c_3 \rho_3^2 \alpha_3 + c_4 \rho_4^2 \alpha_4)\right] h_3 \cdot f\\
& \qquad = -3\i c_1 \alpha_3  r_{31} g_1^* -3\i c_2 \alpha_3 r_{32} g_2^* - 3\i c_4 (\alpha_3-\alpha_4) \rho_4^2 r_{34} h_4^*,\\
& \left(D_x^2 - 2 c_3 \rho_3^2 - 2 c_4 \rho_4^2\right)f\cdot f  + 2c_1 |g_1|^2 + 2c_2 |g_2|^2 + 2\rho_3^2 c_3 |h_3|^2 + 2\rho_4^2 c_4 |h_4|^2 = 0,\\
& D_x h_3 \cdot g_1 + \i \alpha_3 g_1 h_3 = \i\alpha_3 r_{31} f,\\
& D_x h_3 \cdot g_2 + \i \alpha_3 g_2 h_3 = \i\alpha_3 r_{32} f,\\
& D_x h_3 \cdot h_4 + \i (\alpha_3 - \alpha_4) h_3 h_4 = \i (\alpha_3 - \alpha_4) r_{34} f,
\end{align*}}
where \(s_{31},s_{32},s_{34}\) are auxiliary functions.
The bilinear forms for the second and fourth components of \eqref{4cmkdv} can be derived in a similar manner. The results are summarized in the following lemma.
\begin{lemma}\label{bilinear_4cmkdv}
    Under the transformation \eqref{bbdd_transformation}, the four-component Hirota equation \eqref{4cmkdv} has the following bilinear form
    {\allowdisplaybreaks
    \begin{align}
    & \left(D_x^3 - D_t - 6(c_3 \rho_3^2 + c_4 \rho_4^2) D_x+ 3\i (\rho_3^2 c_3 \alpha_3 + \rho_4^2 c_4 \alpha_4) \right) g_1 \cdot f \nonumber \\
    &\qquad = -3 c_2 s_{12} g_2^* + 3 \i \rho_3^2 c_3\alpha_3 r_{13} h_3^* + 3 \i \rho_4^2 c_4\alpha_4 r_{14} h_4^*,\label{4H_BL_1}\\
    & \left(D_x^3 - D_t - 6(c_3 \rho_3^2 + c_4 \rho_4^2) D_x+ 3\i (\rho_3^2 c_3 \alpha_3 + \rho_4^2 c_4 \alpha_4) \right) g_2 \cdot f \nonumber \\
    &\qquad = -3 c_1 s_{21} g_1^* + 3 \i \rho_3^2 c_3\alpha_3 r_{23} h_3^* + 3 \i \rho_4^2 c_4\alpha_4 r_{24} h_4^*,\label{4H_BL_2}\\
    & \left[D_x^3 - D_t + 3\i \alpha_3 D_x^2 - 3\left(\alpha_3^2 + 2 c_3 \rho_3^2 + 2 c_4 \rho_4^2\right) D_x\right. \nonumber \\
    & \qquad \left.-3\i (c_3 \rho_3^2 + c_4 \rho_4^2) \alpha_3 + 3\i (c_3 \rho_3^2 \alpha_3 + c_4 \rho_4^2 \alpha_4)\right] h_3 \cdot f \nonumber \\
    & \qquad = -3\i c_1 \alpha_3 r_{31} g_1^* -3\i c_2 \alpha_3 r_{32} g_2^* - 3\i c_4 (\alpha_3-\alpha_4) \rho_4^2 r_{34} h_4^*,\label{4H_BL_3}\\
    & \left[D_x^3 - D_t + 3\i \alpha_3 D_x^2 - 3\left(\alpha_3^2 + 2 c_3 \rho_3^2 + 2 c_4 \rho_4^2\right) D_x  \right. \nonumber \\
    & \qquad \left.-3\i (c_3 \rho_3^2 + c_4 \rho_4^2) \alpha_3+ 3\i (c_3 \rho_3^2 \alpha_3 + c_4 \rho_4^2 \alpha_4)\right] h_4 \cdot f \nonumber \\
    & \qquad = -3\i c_1 \alpha_4  r_{41} g_1^* -3\i c_2 \alpha_4 r_{42} g_2^* - 3\i c_4 (\alpha_4-\alpha_3) \rho_4^2 r_{43} h_3^*,\label{4H_BL_4}\\
    & \left(D_x^2 - 2 c_3 \rho_3^2 - 2 c_4 \rho_4^2\right)f\cdot f  + 2c_1 |g_1|^2 + 2c_2 |g_2|^2 + 2\rho_3^2 c_3 |h_3|^2 + 2\rho_4^2 c_4 |h_4|^2 = 0,\label{4H_BL_5}\\
    & D_x g_1 \cdot g_2 = s_{12} f,\label{4H_BL_6}\\
    & D_x g_1 \cdot h_3 - \i g_1 h_3 \alpha_3 = -\i\alpha_3 r_{13} f,\label{4H_BL_7}\\
    & D_x g_1 \cdot h_4 - \i g_1 h_4 \alpha_4 = -\i\alpha_4 r_{14} f,\label{4H_BL_8}\\
    & D_x g_2 \cdot h_3 - \i g_2 h_3 \alpha_3 = -\i\alpha_3 r_{23} f,\label{4H_BL_9}\\
    & D_x g_2 \cdot h_4 - \i g_2 h_4 \alpha_4 = -\i\alpha_4 r_{24} f,\label{4H_BL_10}\\
    & D_x h_3 \cdot h_4 + \i (\alpha_3 - \alpha_4) h_3 h_4 = \i (\alpha_3 - \alpha_4) r_{34} f,\label{4H_BL_11}
    \end{align}
    }where \(s_{12} = -s_{21}\), \(r_{jk} = r_{kj}\) for \(j,k = 1,2,3,4\). 
\end{lemma}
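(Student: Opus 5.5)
The plan is to show that the system \eqref{4H_BL_1}--\eqref{4H_BL_11} is sufficient: any $f,g_1,g_2,h_3,h_4$ and auxiliary functions satisfying it give, through \eqref{bbdd_transformation} with $\omega_j$ as in \eqref{bbdd_disp}, a solution of \eqref{4cmkdv} with $n=4$. Throughout, $f$ is real, so $|g_j|^2 = g_jg_j^*$ and $f^*=f$.

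\textbf{Step 1: the bright components.} Substitute $v_1 = (g_1/f)e^{-\i\omega_1 t}$ into the first component of \eqref{4cmkdv}. The standard identities give $v_{1,x}$, $v_{1,xxx}$ and $v_{1,t}$ in terms of $D_x g_1\cdot f$, $D_x^3 g_1\cdot f$, $D_t g_1\cdot f$ and $D_x^2 f\cdot f$, divided by powers of $f$. The nonlinear terms $|v_k|^2$ and $v_k^*v_{k,x}$ are rewritten likewise; the phases of $v_3,v_4$ produce $\i\alpha_3\rho_3^2|h_3|^2/f^2$-type contributions. Multiplying by $f^3$ and collecting terms reproduces the displayed expression that precedes the lemma: a $f^2(\cdots)g_1\cdot f$ term, $-3(D_xg_1\cdot f)[\cdots]$, and three cross terms. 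In this bookkeeping one uses $\omega_1=3\rho_3^2c_3\alpha_3+3\rho_4^2c_4\alpha_4$, which accounts for the constant $3\i(\cdots)$ term.

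\textbf{Step 2: inserting the bilinear equations.} By \eqref{4H_BL_5}, the bracket multiplying $D_xg_1\cdot f$ vanishes. The term $c_1 g_1^* f D_x g_1\cdot g_1$ is zero automatically. We then replace $D_xg_1\cdot g_2$ by $s_{12}f$, and $D_xg_1\cdot h_k-\i\alpha_k g_1h_k$ by $-\i\alpha_k r_{1k}f$ for $k=3,4$, using \eqref{4H_BL_6}--\eqref{4H_BL_8}. Every remaining term then carries a factor $f^2$. Dividing by $f^2$ leaves exactly \eqref{4H_BL_1}, so the first component holds.

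The second component follows by interchanging the indices $1\leftrightarrow2$. Here we need $D_xg_2\cdot g_1=-s_{12}f$, which is the role of $s_{21}=-s_{12}$, and $r_{2k}$ from \eqref{4H_BL_9}--\eqref{4H_BL_10}.

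\textbf{Step 3: the dark components.} For $v_3=\rho_3(h_3/f)e^{\i(\alpha_3x-\omega_3t)}$, the $x$-phase shifts $\partial_x\to\partial_x+\i\alpha_3$. This generates the terms $3\i\alpha_3D_x^2$ and $-3\alpha_3^2D_x$, and, together with $\omega_3$ from \eqref{bbdd_disp}, the constant terms in \eqref{4H_BL_3}.

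The cross terms with $g_1,g_2,h_4$ must be matched with \eqref{4H_BL_7}, \eqref{4H_BL_9} and \eqref{4H_BL_11}. For this we rewrite $D_xh_3\cdot g_j+\i\alpha_3g_jh_3 = -(D_xg_j\cdot h_3-\i\alpha_3 g_jh_3)=\i\alpha_3r_{j3}f$. This is precisely why $r_{3j}=r_{j3}$ is consistent. The $h_4$ coupling is handled the same way: $D_xh_4\cdot h_3+\i(\alpha_4-\alpha_3)h_4h_3$ is the negative of \eqref{4H_BL_11}, which gives $r_{43}=r_{34}$. Again \eqref{4H_BL_5} kills the $D_xh_3\cdot f$ coefficient, and dividing by $f^2$ leaves \eqref{4H_BL_3}. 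The fourth component is the same computation with $3\leftrightarrow4$.

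\textbf{Main obstacle.} The expected difficulty is the bookkeeping of phase-induced constants in the dark components. One must check that the $\alpha$-dependent constants produced by the phases and by $\omega_3,\omega_4$ exactly give the coefficient $-3\i(c_3\rho_3^2+c_4\rho_4^2)\alpha_3+3\i(c_3\rho_3^2\alpha_3+c_4\rho_4^2\alpha_4)$. I would verify this first by taking $f=1$ with constant $h$'s, which reduces the check to the plane-wave dispersion relation.

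A second point to watch is the coefficient of the $h_4^*$ coupling. The $v_4$ contributions to $v_3^*$-free terms scale with $c_4\rho_4^2$, so \eqref{4H_BL_3} should carry $c_4\rho_4^2$. By the same reasoning \eqref{4H_BL_4} should read $c_3\rho_3^2$, and I would note this if the computation forces it.
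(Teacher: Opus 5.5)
Your proposal is correct and is essentially the paper's own derivation: substitute \eqref{bbdd_transformation}, use $aD_xb\cdot f-bD_xa\cdot f=fD_xb\cdot a$ to split off a bracket that \eqref{4H_BL_5} annihilates, and close the cross terms with \eqref{4H_BL_6}--\eqref{4H_BL_11}. Two corrections: in the dark components that bracket multiplies $D_xh_3\cdot f+\i\alpha_3h_3f$, not $D_xh_3\cdot f$ alone, because the $3\i\alpha_3\partial_x^2$ term also produces an undifferentiated $h_3D_x^2f\cdot f$ term; and your suspicion about \eqref{4H_BL_4} is right, but the $3\leftrightarrow 4$ swap also replaces $\alpha_3$ by $\alpha_4$ in that operator (in $3\i\alpha_3D_x^2$, $\alpha_3^2$ and $-3\i(c_3\rho_3^2+c_4\rho_4^2)\alpha_3$), so the displayed \eqref{4H_BL_4} has typos that your computation and your plane-wave check would both expose.
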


Next, by employing the method of KP reduction, we can obtain the two-bright-two-dark soliton solution to the four-component Hirota equation \eqref{4cmkdv}. The detailed proofs of the following theorems are provided in \cref{sect:bd}.

\begin{theorem}\label{thm:2b2d_4cmkdv}
    The four-component Hirota equation \eqref{4cmkdv} admits the following two-bright-two-dark soliton solution
    \begin{align*}
        &v_1 = \frac{g_1}{f}\exp\left(-3\i (\rho_3^2 c_3 \alpha_3 +  \rho_4^2 c_4 \alpha_4)t \right), \quad v_2 = \frac{g_2}{f}\exp\left(-3\i (\rho_3^2 c_3 \alpha_3 + \rho_4^2 c_4 \alpha_4)t \right), \\
        &v_3 = \rho_3\frac{h_3}{f} \exp\left(\i (\alpha_3 x - (\alpha_3^3 + 3(c_3 \rho_3^2 + c_4 \rho_4^2) \alpha_3 + 3(c_3 \rho_3^2 \alpha_3 + c_4 \rho_4^2 \alpha_4)) t)\right), \\
        &v_4 = \rho_4\frac{h_4}{f} \exp\left(\i (\alpha_4 x - (\alpha_4^3 + 3(c_3 \rho_3^2 + c_4 \rho_4^2) \alpha_4 + 3(c_3 \rho_3^2 \alpha_3 + c_4 \rho_4^2 \alpha_4)) t)\right),
    \end{align*}
    where 
    \begin{align*}
        &f = \begin{vmatrix}
            M_{0,0}
        \end{vmatrix},\quad 
        g_1 = \begin{vmatrix}
            M_{0,0} & \Phi \\
            -\left(\bar{\Psi}\right)^T & 0
        \end{vmatrix},\quad
        g_2 = \begin{vmatrix}
            M_{0,0} & \Phi \\
            -\left(\bar{\Upsilon}\right)^T & 0
        \end{vmatrix},\quad
        h_3 = \begin{vmatrix}
            M_{1,0}
        \end{vmatrix},\quad
        h_4 = \begin{vmatrix}
            M_{0,1}
        \end{vmatrix}.
    \end{align*}
    Here, \(M_{k,l}, k,l = 0,1\), is an \(N \times N\) matrix, \(\Phi\), \(\bar{\Psi}\) and \(\bar{\Upsilon}\) are vectors, whose entries are defined as
    \begin{align*}
        &m_{ij}^{k,l}=\frac{1}{p_i+p_j^*}\left(-\frac{p_i - \i \alpha_3}{p_j^* + \i \alpha_3}\right)^k \left(-\frac{p_i - \i \alpha_4}{p_j^* + \i \alpha_4}\right)^l e^{\xi_i+\xi_j^*}+ \frac{C_i^ * C_j}{q_i+q_j^*} + \frac{D_i^ * D_j}{r_i+r_j^*} ,\\
        &\Phi = \left(e^{\xi _{1}}, e^{\xi _2},\ldots, e^{\xi _{N}}\right)^T,\quad 
        \bar{\Psi}=\left(C_1, C_2,\ldots, C_N \right)^T,\quad
        \bar{\Upsilon}=\left(D_1, D_2,\ldots, D_N \right)^T, 
        \\
        &\xi_i=p_i (x - 3(c_3\rho_3^2+c_4\rho_4^2) t) + p_i^3 t + \xi_{i0}, \\
        &q_i = \frac{1}{c_1} \left(-p_i + \frac{c_3 \rho_3^2}{p_i - \i\alpha_3} + \frac{c_4 \rho_4^2}{p_i - \i \alpha_4}\right), \quad
        r_i = \frac{1}{c_2} \left(-p_i + \frac{c_3 \rho_3^2}{p_i - \i \alpha_3} + \frac{c_4 \rho_4^2}{p_i - \i \alpha_4}\right),
    \end{align*}
    and \(\rho_3,\rho_4, \alpha_3,\alpha_4 \in \mathbb{R}\), \(p_i, \xi_{i,0}, C_i,D_i \in \mathbb{C}\).
\end{theorem}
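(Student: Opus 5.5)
We will prove the theorem by the KP reduction method: first exhibit Gram-type determinant solutions of a suitable set of bilinear equations of the (multi-component) KP hierarchy, and then reduce them to the bilinear system of Lemma~\ref{bilinear_4cmkdv}. By Lemma~\ref{bilinear_4cmkdv}, it suffices to show that $f,g_1,g_2,h_3,h_4$, together with suitably chosen auxiliary functions $s_{12},r_{jk}$, satisfy \eqref{4H_BL_1}--\eqref{4H_BL_11}. The auxiliary functions will themselves be bordered Gram determinants. For example, we take $s_{12}$ bordered by both $\bar\Psi$ and $\bar\Upsilon$ and by $\Phi$ and $\partial_x\Phi$. We take $r_{13}$ to be the determinant of $g_1$ built with $M_{1,0}$ instead of $M_{0,0}$, and similarly for $r_{14},r_{23},r_{24}$, while $r_{34}$ is built with a shifted index of the form $M_{1,-1}$ (or $M_{1,1}$ suitably normalised).

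Step 1 is the KP-level construction. We introduce independent variables $x_1,x_2,x_3$ (and $x_{-1}^{(3)}, x_{-1}^{(4)}, y_1, y_2$ for the bright components) and the general Gram entry
\[
m_{ij}^{k,l}=\frac{1}{p_i+\bar p_j}\Big(-\frac{p_i-a}{\bar p_j+a}\Big)^k\Big(-\frac{p_i-b}{\bar p_j+b}\Big)^l e^{\xi_i+\bar\xi_j}+\frac{\bar C_iC_j}{q_i+\bar q_j}e^{\eta_i+\bar\eta_j}+\frac{\bar D_iD_j}{r_i+\bar r_j}e^{\chi_i+\bar\chi_j}.
\]
Here $\xi_i=p_ix_1+p_i^2x_2+p_i^3x_3+(p_i-a)^{-1}x_{-1}^{(3)}+(p_i-b)^{-1}x_{-1}^{(4)}$, and $\eta_i=q_iy_1$, $\chi_i=r_iy_2$ carry the extra components. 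We verify the standard differential rules for $m_{ij}$, $\Phi$, $\Psi$ and $\Upsilon$, including the shift rules in $k,l$. Then the Jacobi identity on the bordered determinants yields the KP-hierarchy bilinear equations: $(D_{x_1}^3+3D_{x_1}D_{x_2}-4D_{x_3})\tau\cdot\tau'$ type equations, $D_{x_1}D_{y_1}\tau\cdot\tau=-2\,g\bar g$ for the bright parts, $(D_{x_1}D_{x_{-1}^{(3)}}-2)\tau_{0,0}\cdot\tau_{0,0}=-2\tau_{1,0}\tau_{-1,0}$ for the dark parts, and the first-order equations $D_{x_1}g_1\cdot\tau_{1,0}=\dots$ that give \eqref{4H_BL_6}--\eqref{4H_BL_11}. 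This is routine and parallels the two-component construction in \cite{shi2025general}.

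Step 2 is the dimension reduction and the complex-conjugate reduction. We set $a=\i\alpha_3$, $b=\i\alpha_4$, take $\bar p_j=p_j^*$, $\bar C=C^*$ and so on, and fix $x_2$-independence by imposing $\partial_{x_2}$ acting trivially; the $x_2$ terms combine into $D_x^2$ terms after gauge. The key reduction is to impose
\[
c_1\partial_{y_1}+c_2\partial_{y_2}\;\text{ related to }\;-\partial_{x_1}+c_3\rho_3^2\partial_{x_{-1}^{(3)}}+c_4\rho_4^2\partial_{x_{-1}^{(4)}}.
\]
The choice of $q_i,r_i$ in the theorem is exactly what makes the exponent $p_i-c_3\rho_3^2/(p_i-\i\alpha_3)-c_4\rho_4^2/(p_i-\i\alpha_4)$ combine with $c_1q_i$ and $c_2r_i$ so that each entry $m_{ij}$ satisfies a linear relation. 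This gives $(\partial_{x_1}-c_3\rho_3^2\partial_{x_{-1}^{(3)}}-c_4\rho_4^2\partial_{x_{-1}^{(4)}}+c_1\partial_{y_1}+c_2\partial_{y_2})\tau=C\tau$. Combining this with the KP bilinear equations eliminates the $y$ and $x_{-1}$ derivatives and produces \eqref{4H_BL_5}. Finally, we set $x_1=x-3(c_3\rho_3^2+c_4\rho_4^2)t$ and $x_3=t$ and check that the constant terms reproduce the frequencies \eqref{bbdd_disp}.

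Step 3 is the main obstacle: deriving the third-order equations \eqref{4H_BL_1}--\eqref{4H_BL_4}, whose right-hand sides mix the auxiliary functions with conjugates. For these we will use the KP equation $(D_{x_1}^3+3D_{x_1}D_{x_2}-4D_{x_3})g\cdot\tau=0$. We eliminate $D_{x_2}$ using the auxiliary equations $(D_{x_1}^2-D_{x_2})g\cdot\tau=\dots$, and the $y$- and $x_{-1}$-derivatives using the reduction condition together with bilinear identities of the form $D_{y_1}g_1\cdot\tau\propto s_{12}\bar g_2$ and $D_{x_{-1}^{(3)}}g_1\cdot\tau\propto r_{13}\bar h_3$. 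The latter come from a further Jacobi identity on determinants bordered twice. The delicate part is checking that these extra identities hold with exactly the coefficients $-3c_2$, $3\i\rho_3^2c_3\alpha_3$, and so on, and that the conjugation relations $\bar\tau_{k,l}=\tau_{-k,-l}$ (after the $\alpha$-gauge) make the right-hand sides match. We will carry this out carefully for \eqref{4H_BL_1} and \eqref{4H_BL_3}, and the remaining equations will follow by symmetry.
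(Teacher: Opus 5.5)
Your overall route is the paper's: Gram-determinant $\tau$-functions of a multi-component KP--Toda hierarchy, a dimension reduction tying the $y$- and $x_{-1}$-flows to $x_1$, and then the conjugation $a=\i\alpha_3$, $b=\i\alpha_4$, $\bar p_j=p_j^*$. Your outline for \eqref{4H_BL_5}--\eqref{4H_BL_11} is sound. The gap is in Step~3, which you rightly call the crux. The identities you plan to use there are false, and the elimination of $x_2$ you describe cannot work.

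\emph{The first-order identities do not exist.} There is no relation $D_{y}g_1\cdot f\propto s_{12}\bar g_2$. For $N=1$, $s_{12}=\tau^{(1,2)}$ vanishes identically, because its two bordering rows (from $\bar\Upsilon$ and $\bar\Psi$) are proportional. But $g_1$ is $e^{\xi_1}$ times a function of $y_1$ alone, while $f$ depends on $y_2$, so for generic parameters $D_{y_2}g_1\cdot f=-g_1\,\partial_{y_2}f\neq0$. Likewise, an $N=1$ computation shows that $D_{x^{(3)}_{-1}}\tau^{(1)}_{k,l}\cdot\tau^{(0)}_{k,l}$ is not a constant multiple of $\tau^{(1)}_{k+1,l}\tau^{(0)}_{k-1,l}$. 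Matching the two terms of $f$ forces the constant to be both $-a/(p_1-a)^2$ and $-a/(\bar p_1+a)^2$.

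\emph{The $x_2$ elimination fails.} The relation $(D_{x_1}^2-D_{x_2})g\cdot f=0$ cannot be used to remove the third-order term $3D_{x_1}D_{x_2}g\cdot f$ from the KP equation. Nor can you make $\partial_{x_2}$ act trivially: the entries depend on $x_2$ through $(p_i^2-\bar p_j^2)x_2$. So $x_2$ must be eliminated from the bilinear equations before it can be set to zero.

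\emph{What is missing.} You need the third-order mixed equations \eqref{KP_bright_dark_2_1}--\eqref{KP_bright_dark_2_4}, written here in your notation:
\begin{itemize}
\item $D_{y_1}(D_{x_1}^2-D_{x_2})\tau^{(1)}\cdot\tau^{(0)}=0$;
\item $D_{y_2}(D_{x_1}^2-D_{x_2})\tau^{(1)}\cdot\tau^{(0)}=-4\tau^{(1,2)}\bar\tau^{(2)}$;
\item $\bigl(D_{x^{(3)}_{-1}}(D_{x_1}^2-D_{x_2})-4(D_{x_1}-a)\bigr)\tau^{(1)}_{k,l}\cdot\tau^{(0)}_{k,l}=4a\,\tau^{(1)}_{k+1,l}\tau^{(0)}_{k-1,l}$, together with its $b$-analogue.
\end{itemize}
Weight these by $c_1,c_2,c_3\rho_3^2,c_4\rho_4^2$ and sum. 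The reduction then turns the first-order prefactors into $D_{x_1}$. For the theorem's $q_i,r_i$ the reduction reads $(c_1\partial_{y_1}+c_2\partial_{y_2}+c_3\rho_3^2\partial_{x^{(3)}_{-1}}+c_4\rho_4^2\partial_{x^{(4)}_{-1}})\tau=(\partial_{x_1}+\mathrm{const})\tau$. In your version the $y$-terms have the wrong sign relative to the rest.

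The sum gives $(D_{x_1}^3-D_{x_1}D_{x_2})\tau^{(1)}\cdot\tau^{(0)}$ plus the lower-order terms $-4(c_3\rho_3^2+c_4\rho_4^2)D_{x_1}+4(c_3\rho_3^2a+c_4\rho_4^2b)$. Taking $\tfrac14(3\times\text{this}+\text{KP})$ cancels $D_{x_1}D_{x_2}$ and leaves $D_{x_1}^3-D_{x_3}$. The shift $x_1\to x_1-3(c_3\rho_3^2+c_4\rho_4^2)x_3$ then yields exactly the coefficients of \eqref{4H_BL_1}. The $-4(D_{x_1}-a)$ term supplies the constant $3\i(\rho_3^2c_3\alpha_3+\rho_4^2c_4\alpha_4)$ and half of the $D_x$ coefficient, which no first-order identity could provide.

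\emph{Dark components.} For \eqref{4H_BL_3}--\eqref{4H_BL_4} the unshifted KP equation is not the right starting point. You need the shifted counterparts for $\tau^{(0)}_{k+1,l}\cdot\tau^{(0)}_{k,l}$:
\begin{itemize}
\item $(D_{x_1}^2-D_{x_2}+2aD_{x_1})\tau_{k+1,l}\cdot\tau_{k,l}=0$;
\item the third-order equation with the extra terms $3a(D_{x_1}^2+D_{x_2})+6a^2D_{x_1}$;
\item the mixed $D_{x_{-1}}(\cdot)$ and $D_{y}(\cdot)$ equations \eqref{KP_bright_dark_7_1}--\eqref{KP_bright_dark_7_4}.
\end{itemize}
These are combined as $\tfrac14(3\times\text{reduced}+3a\times\text{second-order}+\text{third-order})$.
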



Finally, by imposing the condition that 
\(v_1\) and \(v_3\)  are complex conjugates of 
\(v_2\) and \(v_4\), respectively, we ensure that the reduction condition \eqref{cond_4cmkdv_to_css} is satisfied. This leads to the bright-dark soliton solution for the coupled Sasa-Satsuma equation \eqref{css_1}-\eqref{css_2}.

\begin{theorem}\label{thm:bd_css}
    The coupled Sasa–Satsuma equation \eqref{css_1}-\eqref{css_2} has the following bright-dark soliton solution
    \begin{align}\label{css_bd_sln}
        u_1 = \frac{g}{f}, \quad u_2 = \rho \frac{h}{f} \exp\left(\i (\alpha x - (\alpha^3 + 6\epsilon_2 \rho^2 \alpha) t)\right)
    \end{align}
    where 
    \begin{align*}
        &f = \begin{vmatrix}
            M_{0}
        \end{vmatrix},\quad g = \begin{vmatrix}
            M_{0} & \Phi \\
            -\left(\bar{\Psi}\right)^T & 0
        \end{vmatrix},\quad
        h = \begin{vmatrix}
            M_{1}
        \end{vmatrix}.
    \end{align*}
    Here, \(M_k, k = 0,1\), is an \(N \times N\) matrix, \(\Phi\) and \(\bar{\Psi}\) are vectors, whose entries are defined as
    \begin{align*}
        &m_{ij}^{k}=\frac{1}{p_i+p_j^*}\left(-\frac{p_i - \i \alpha}{p_j^* + \i \alpha}\right)^k e^{\xi_i+\xi_j^*}+d_{i,j},\\
        &\Phi = \left(e^{\xi _{1}}, e^{\xi _2},\ldots, e^{\xi _{N}}\right)^T,
        \quad \bar{\Psi}=\left(C_1, C_2,\ldots, C_N \right)^T, 
        \quad \xi_i=p_i (x - 6\epsilon_2 \rho^2 t) + p_i^3 t + \xi_{i0},\\
        &d_{i,j}=\frac{\epsilon_1 }{(p_i+p_j^*)}\left(C_i^* C_j + C_{N+1-i} C_{N+1-j}^*\right)\left(-1+ \dfrac{2\epsilon_2 \rho^2 (p_i p_j^* + \alpha^2)}{(p_i^2 + \alpha^2)((p_j^*)^2 + \alpha^2)} \right)^{-1},
    \end{align*}
    and \(\rho, \alpha \in \mathbb{R}\), \(p_i, \xi_{i,0}, C_i \in \mathbb{C}\), with \(p_i,\xi_{i,0}\) satisfying the complex conjugate restrictions 
    \begin{equation}\label{cmp_conj_rest}
        p_i = p_{N+1-i}^*,\quad \xi_{i,0} = \xi_{N+1-i,0}^*.
    \end{equation}
\end{theorem}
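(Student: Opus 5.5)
The plan is to obtain \cref{thm:bd_css} from \cref{thm:2b2d_4cmkdv} by imposing the reduction \eqref{cond_4cmkdv_to_css}. Concretely, we set $c_1=c_2=\epsilon_1$ and $c_3=c_4=\epsilon_2$. We then collapse the two dark components into one, and show that under suitable parameter choices $v_2=v_1^*$ and $v_4=v_3^*$. Since the four-component Hirota equation with these coefficients reduces to \eqref{css_1}-\eqref{css_2} once $v_2=v_1^*$ and $v_4=v_3^*$, the functions $u_1=v_1$ and $u_2=v_3$ then solve the CSS equation.

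For the dark part, we take $\rho_3=\rho_4=\rho$ and $\alpha_3=-\alpha_4=\alpha$. Then $c_3\rho_3^2\alpha_3+c_4\rho_4^2\alpha_4=0$, so $\omega_1=\omega_2=0$ and $\omega_3=\alpha^3+6\epsilon_2\rho^2\alpha$. This gives exactly the phases in \eqref{css_bd_sln}, with $\omega_4=-\omega_3$, and the $\xi_i$ of \cref{thm:2b2d_4cmkdv} reduce to those stated. Since the generalized Cauchy matrix of \cref{thm:2b2d_4cmkdv} carries two separate ``$q,r$'' terms, we next combine them into the single term $d_{i,j}$. Using $c_1=c_2$ we get $q_i=r_i$, and therefore
\[
\frac{C_i^*C_j+D_i^*D_j}{q_i+q_j^*}, \qquad q_i+q_j^*=\frac{1}{\epsilon_1}\bigl(p_i+p_j^*\bigr)\Bigl(-1+\frac{2\epsilon_2\rho^2(p_ip_j^*+\alpha^2)}{(p_i^2+\alpha^2)((p_j^*)^2+\alpha^2)}\Bigr).
\]
This is a routine partial-fraction computation: the factor $p_i+p_j^*$ splits off from $\frac{1}{p_i-\i\alpha}+\frac{1}{p_j^*+\i\alpha}$ and from the corresponding $-\i\alpha$ terms, and the two terms then combine. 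Choosing $D_j=C_{N+1-j}^*$ produces the stated $d_{i,j}$. The term $m^{0,1}$ with $\alpha_4=-\alpha$ corresponds to the factor $-\frac{p_i+\i\alpha}{p_j^*-\i\alpha}$, and $M_{0,0}$ becomes $M_0$.

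The core step is to verify the conjugation symmetry, using the constraints \eqref{cmp_conj_rest}. Let $P$ denote the antidiagonal permutation $i\mapsto N+1-i$. Under $p_i=p_{N+1-i}^*$ and $\xi_{i0}=\xi_{N+1-i,0}^*$ we have $\xi_i^*=\xi_{N+1-i}$ for real $x,t$. We then check entrywise that
\[
(m^{k,l}_{ij})^*=m^{l,k}_{N+1-j,\,N+1-i}, \quad\text{i.e.}\quad \overline{M_{k,l}}=P\,M_{l,k}^TP .
\]
For the exponential part this amounts to conjugating $-\frac{p_i-\i\alpha}{p_j^*+\i\alpha}$, which yields the $\alpha_4=-\alpha$ factor with the indices reflected. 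For the $d$-part we use the symmetric pairing $C_i^*C_j+C_{N+1-i}C^*_{N+1-j}$, which is invariant under $(i,j)\mapsto(N+1-j,N+1-i)$ combined with conjugation. Taking determinants gives $f^*=f$ (so $f$ is real) and $h_3^*=h_4$. Combined with $\rho_3=\rho_4$ and the opposite phases, this gives $v_3^*=v_4$.

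The main obstacle I expect is the bordered determinant, namely showing $g_1^*=g_2$ with $\bar\Upsilon=PC^*$. Here I would conjugate $g_1$, apply the same transpose-and-permute operation to the bordered matrix, and check that $\Phi^*=P\Phi$. It remains to check that the border $-(\bar\Psi^*)^T$ is turned into $-(P\bar\Psi^*)^T$, which equals $-\bar\Upsilon^T$; this requires a careful bookkeeping of transpose versus conjugation on the border row and column. Once $g_1^*=g_2$ is established, $v_1^*=v_2$ follows, and the theorem follows from \cref{thm:2b2d_4cmkdv} together with the reduction \eqref{cond_4cmkdv_to_css}.
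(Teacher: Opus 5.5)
You follow the paper's route (\cref{lma:4cmkdv_to_css}). Your specialization of the parameters, the phases, and the computation of $q_i+q_j^*$ that produces $d_{i,j}$ are all correct.

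The gap is in the step you call the core one. The identity $(m^{k,l}_{ij})^*=m^{l,k}_{N+1-j,\,N+1-i}$, i.e.\ $\overline{M_{k,l}}=PM_{l,k}^TP$ with the bar denoting entrywise conjugation, is false. Take $N=2$, $k=l=0$, $(i,j)=(1,2)$: it asserts that $m^{0,0}_{12}=e^{2\xi_1}/(2p_1)+d_{1,2}$ is real, which fails as soon as $p_1\notin\mathbb{R}$.

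Your claim about the $d$-part fails for the same reason. Set $n_{ij}:=C_i^*C_j+C_{N+1-i}C^*_{N+1-j}$. Then $n_{N+1-j,N+1-i}=n_{ij}$ holds \emph{without} conjugation, and $n_{ji}=n_{ij}^*$. This is \cref{lemma:dij}, and together these give $d_{i,j}^*=d_{N+1-i,N+1-j}$. Combined with $p_i^*=p_{N+1-i}$, $\xi_i^*=\xi_{N+1-i}$ and $q_i^*=q_{N+1-i}$, the correct symmetry reflects both indices and has no transpose. This is the identity the paper uses:
\[
(m^{k,l}_{ij})^*=m^{l,k}_{N+1-i,\,N+1-j},\qquad \overline{M_{k,l}}=PM_{l,k}P .
\]
For $f$ and $h$ your conclusions survive only because a determinant is insensitive to transposition. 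The entrywise verification you propose would not go through.

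The spurious transpose is also what creates the ``main obstacle'' you foresee for $g$. With $\overline{M_{0,0}}=PM_{0,0}^TP$ you would get
$g_1^*=\begin{vmatrix}M_{0,0}^T&\Phi\\-\bar\Upsilon^T&0\end{vmatrix}=\begin{vmatrix}M_{0,0}&-\bar\Upsilon\\ \Phi^T&0\end{vmatrix}$.
Here the border row and column are interchanged relative to $g_2$, and this determinant is not $g_2$ in general. No bookkeeping of transposes fixes that.

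With the correct identity there is nothing to untangle. When $D_j=C^*_{N+1-j}$ you have $\Phi^*=P\Phi$ and $(\bar\Psi)^*=P\bar\Upsilon$. Multiplying by $\mathrm{diag}(P,1)$ on both sides then gives
\[
g_1^*=\begin{vmatrix}PM_{0,0}P&P\Phi\\-\bar\Upsilon^TP&0\end{vmatrix}=\begin{vmatrix}M_{0,0}&\Phi\\-\bar\Upsilon^T&0\end{vmatrix}=g_2 .
\]
With this correction your argument coincides with the paper's proof.
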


\begin{remark}\label{lemma:dij}
    Under the parameter restrictions \eqref{cmp_conj_rest}, the following identities hold for \(d_{i,j}\)
    \begin{align*}
        d_{j,i}^* = d_{i,j},  \quad
        d_{N+1-j,N+1-i} = d_{i,j}.
    \end{align*}
    These identities can facilitate the simplification of the solution expressions for further analysis. 
\end{remark}

\section{Dynamics of the bright-dark soliton solutions}\label{sect:dyn}
In this section, we investigate the dynamic behaviors for the solutions obtained in \cref{thm:bd_css}. For the case of $N=1$, we provide the expression of the solution and illustrated the profiles. For the case of $N=2$,  the analytical expression of the solution is presented. In particular, we analyze and identify the condition under which the solution is either breather or two-hump soliton. For the cases of $N=3$ and $N=4$, the collision behaviors between two solitons are analyzed. Specifically, for $N=3$, collision between a soliton and a breather is identified, while  collisions between breathers are observed for $N=4$. Furthermore, inelastic collisions are presented for both $N=3$ and $N=4$ cases.
\subsection{One bright-dark soliton for $N=1$}
For case $N=1$ in \cref{thm:bd_css}, one bright-dark soliton solution can be derived. By introducing the following parameters,
\begin{equation}\label{N=1 parameters}
\begin{aligned}
    G &= -\frac{2C_1^2(\alpha^2 +p_1^2)\epsilon_1}{\alpha^2+p_1^2-2\rho^2\epsilon_2}, \quad H = \frac{p_1 -\mathrm{i}\alpha}{p_1 + \mathrm{i}\alpha}, \\
    \xi_1 &= p_1x + (p_1^3 - 6p_1\rho^2\epsilon_2)t + \xi_{1,0},\quad \theta = \alpha x - \left(\alpha^3-6\alpha \rho^2\epsilon_2\right)t,
\end{aligned}
\end{equation}
the solution \(u_1, u_2\) can be rewritten as
\begin{equation}\label{b-d N=1 solution}
\begin{aligned}
    u_1 & = \frac{2C_1p_1 \exp(-\xi_1)}{1 + G\exp(-2\xi_1)} = \frac{C_1 p_1}{\sqrt{G}} \sech\left(\xi_1-\log \sqrt{G}\right), \\
    u_2 & = -\rho \exp{\left(\mathrm{i}\theta\right)} \frac{H-G\exp(-2\xi_1)}{1+G\exp(-2\xi_1)}  = -\rho\frac{\exp{\left(\mathrm{i}\theta\right)}}{2}\left((H+1) \tanh{(\xi_1-\log\sqrt{G})} + H -1\right) ,
\end{aligned}
\end{equation}
where $\rho, \alpha, p_1, \xi_{1,0} \in \mathbb{R}$. It is noteworthy that the component \(u_1\) represents a bright soliton, while \(u_2\) corresponds to a dark soliton. Both solitons propagate with the same speed given by \(-(p_1^3 - 6p_1\rho^2\epsilon_2)/p_1 = 6p_1\rho^2\epsilon_2 - p_1^2\).

To ensure the regularity of the bright-dark soliton solution, i.e., \(1 + G\exp(-2\xi_1) \neq 0\), one can find that \(G > 0\) is a necessary condition. Hence, when $\epsilon_1 = 1$ and $\epsilon_2 = -1$, we have $G = -2C_1^2(\alpha^2 + p_1^2)/(\alpha^2 + p_1^2 + 2\rho^2) < 0$, which leads to a singular solution. 

The first order bright-dark soliton solution \eqref{b-d N=1 solution} is illustrated in Fig. \ref{fig:one bright-dark soliton for N=1, case1} with the following parameter choices,
\begin{equation}
    \alpha=2,\quad \rho=1, \quad \epsilon_1=-1,\quad \epsilon_2=1, \quad p_1=1,\quad C_1=1.
\end{equation}

\begin{figure}[H]
    \centering
\subfigure[]{
        \includegraphics[width=30mm]{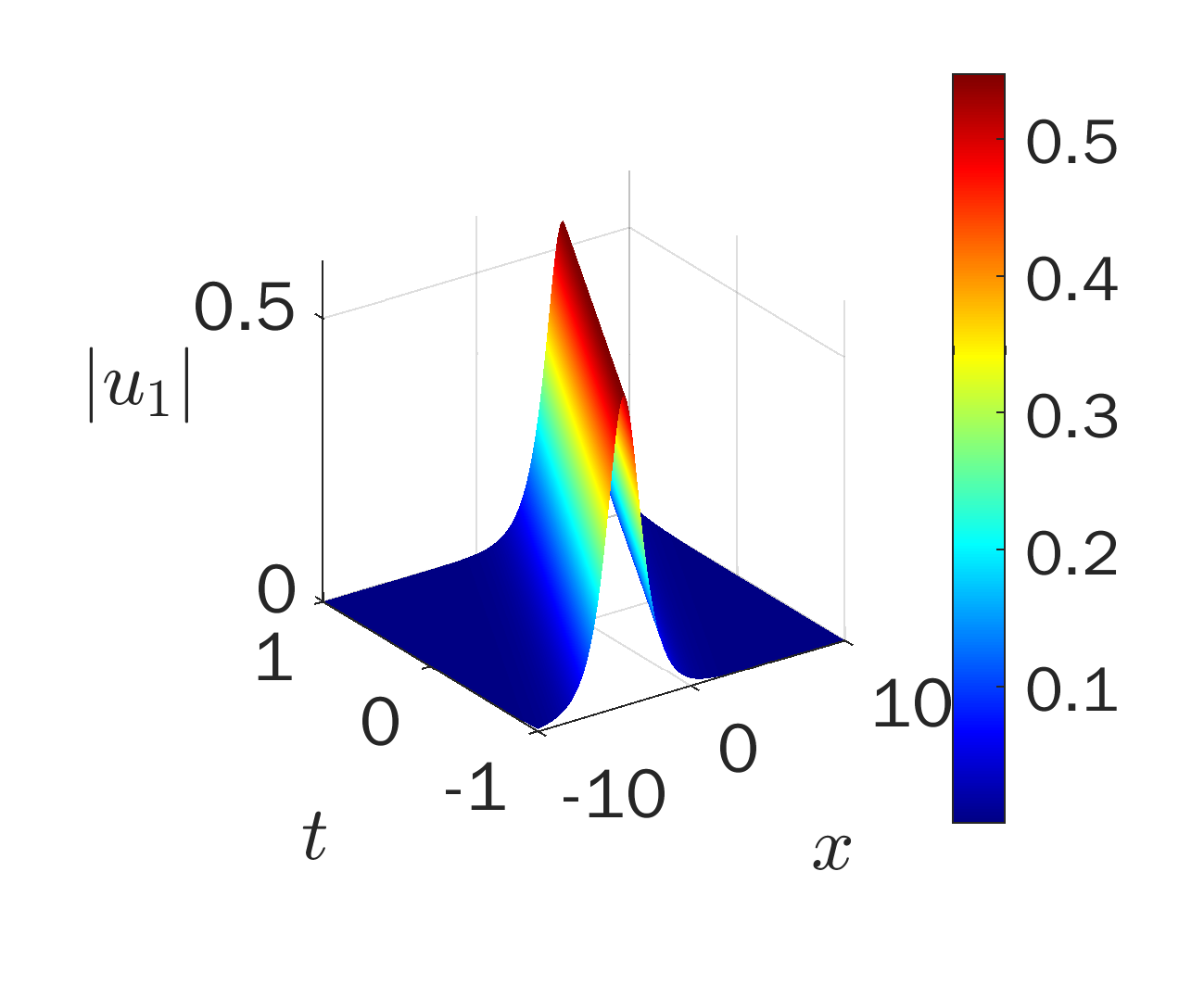}}
\subfigure[]{ 
        \includegraphics[width=30mm]{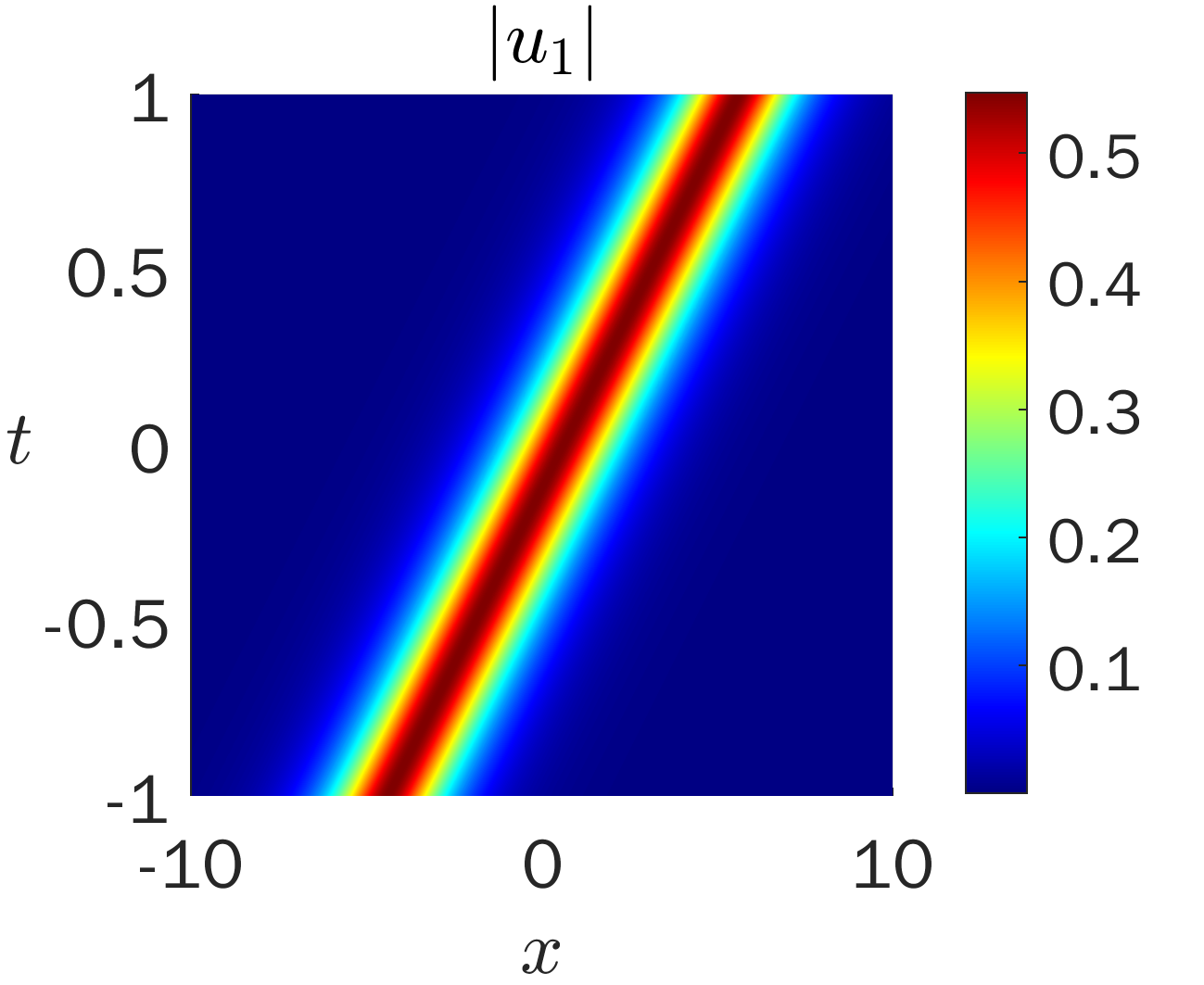}}
\subfigure[]{
        \includegraphics[width=30mm]{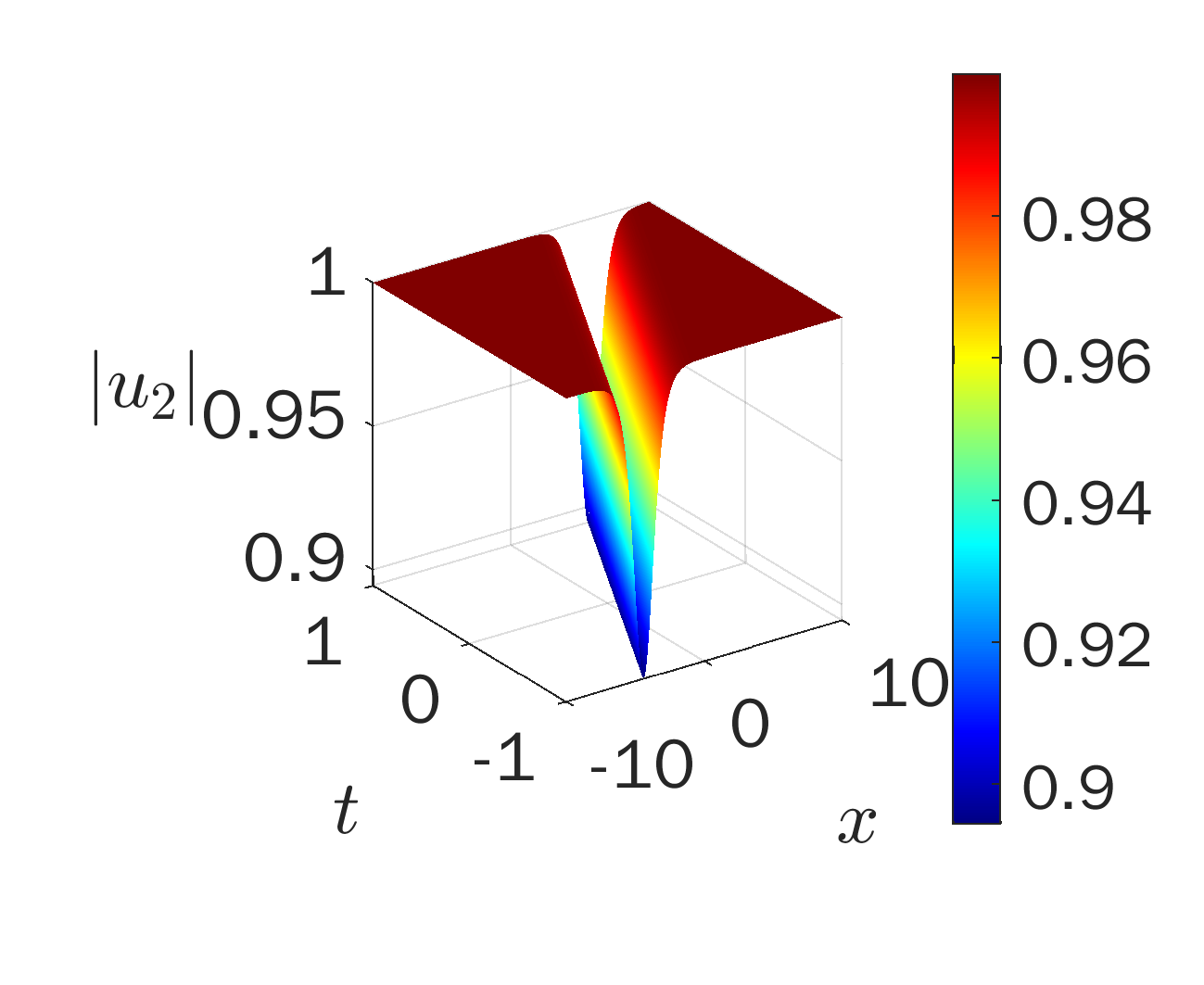}}
\subfigure[]{%
        \includegraphics[width=30mm]{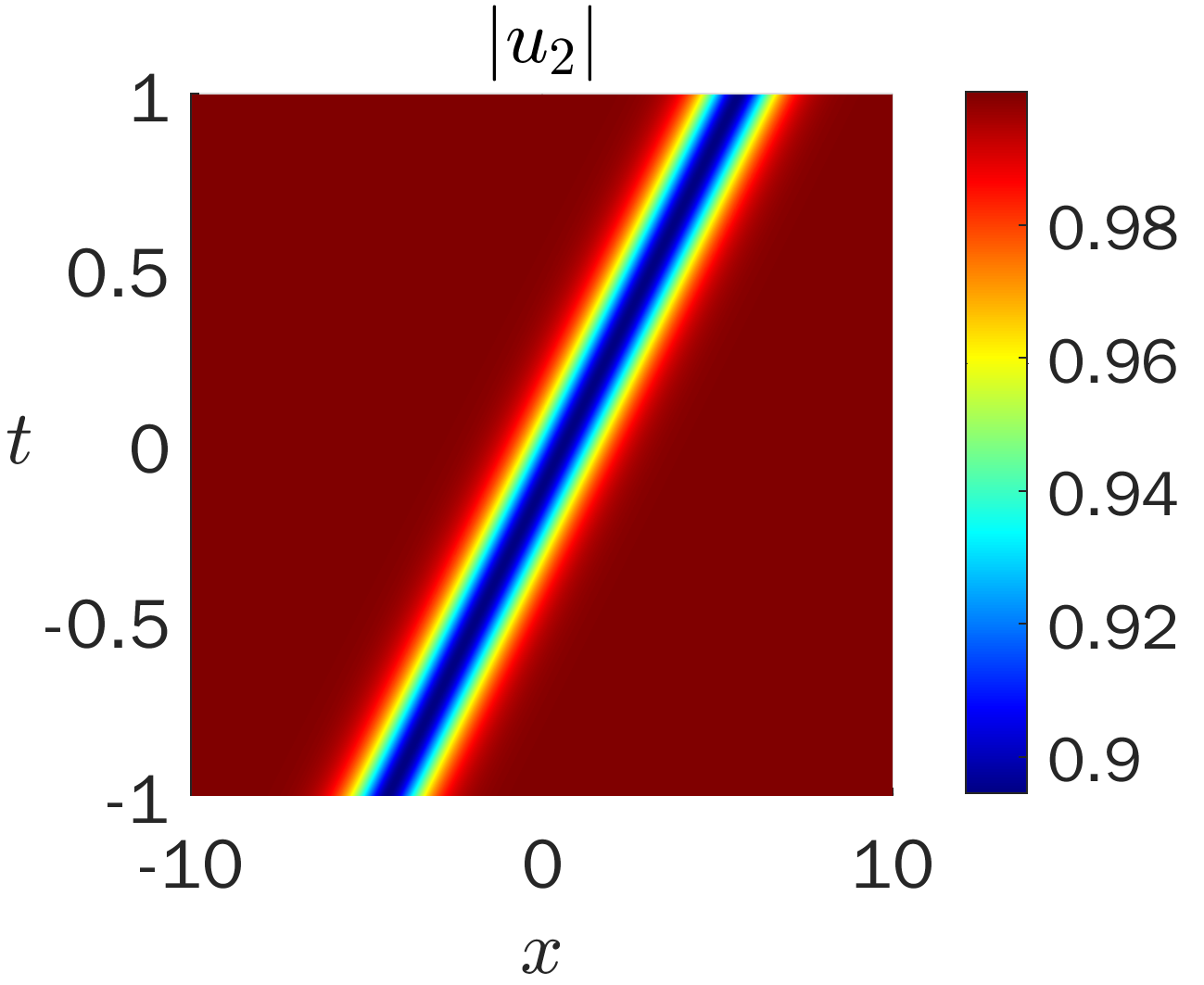}}
    \caption{One bright-dark soliton solution to the coupled Sasa-Satsuma equation under parameters \( N=1, \alpha=2, \rho=1,  \epsilon_1=-1, \epsilon_2=1, p_1=1, C_1=1,\xi_{1,0}=0\). (b) and (d) are the corresponding density plots of (a) and (c), respectively}
    \label{fig:one bright-dark soliton for N=1, case1}
\end{figure}

\begin{figure}[H]
    \centering
\subfigure[]{
        \includegraphics[width=30mm]{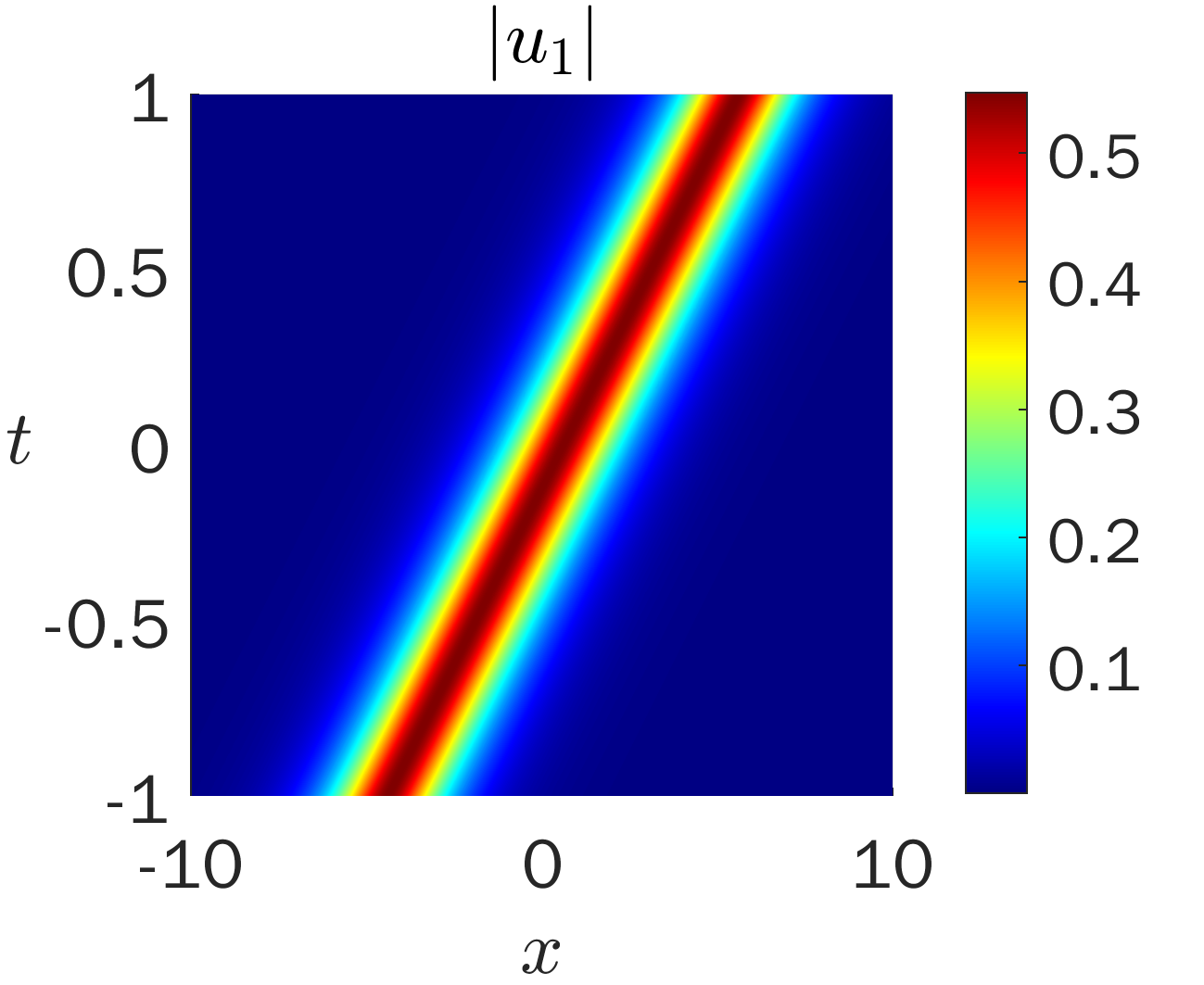}}
\subfigure[]{ 
        \includegraphics[width=30mm]{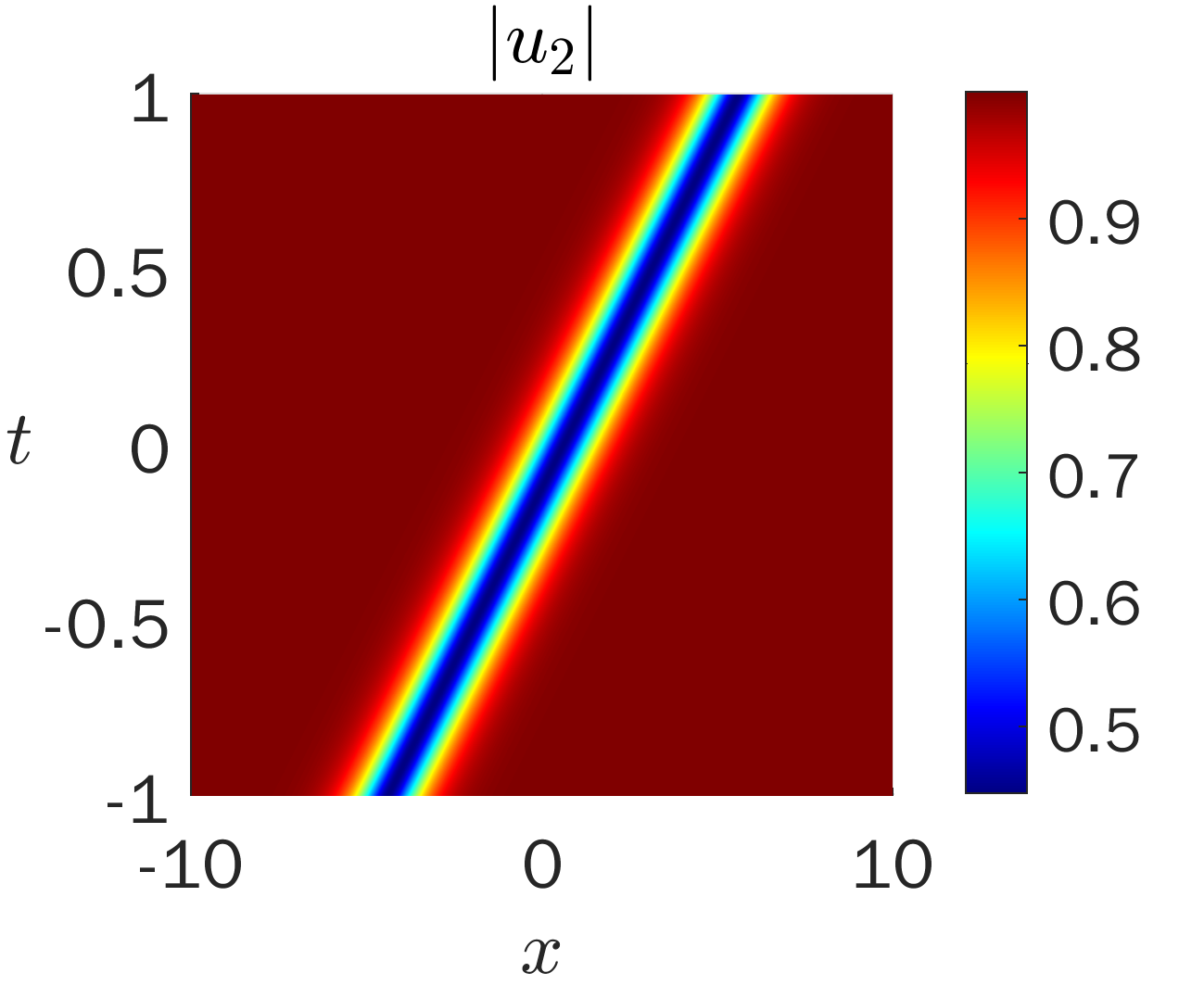}}
\subfigure[]{
        \includegraphics[width=30mm]{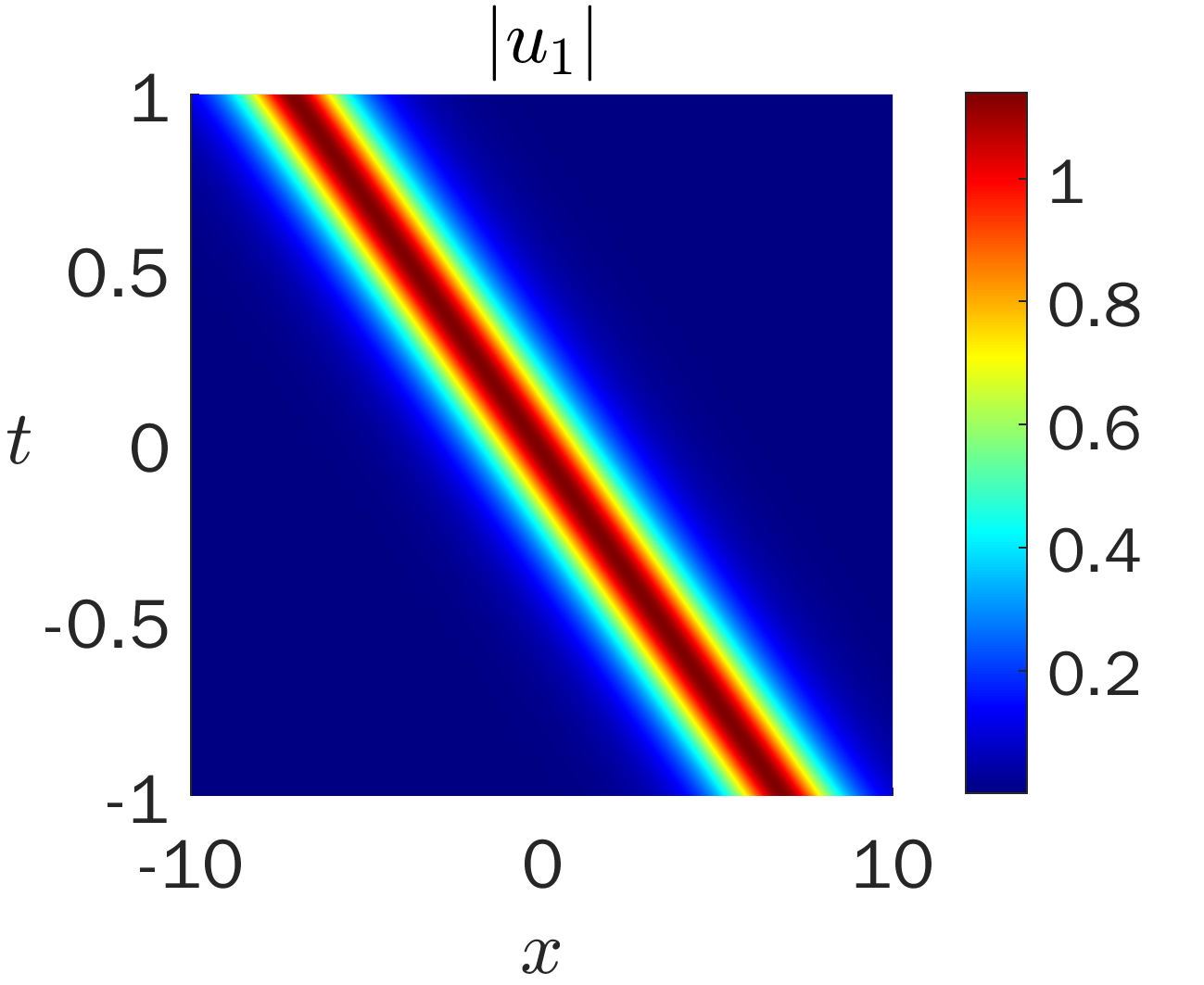}}
\subfigure[]{
        \includegraphics[width=30mm]{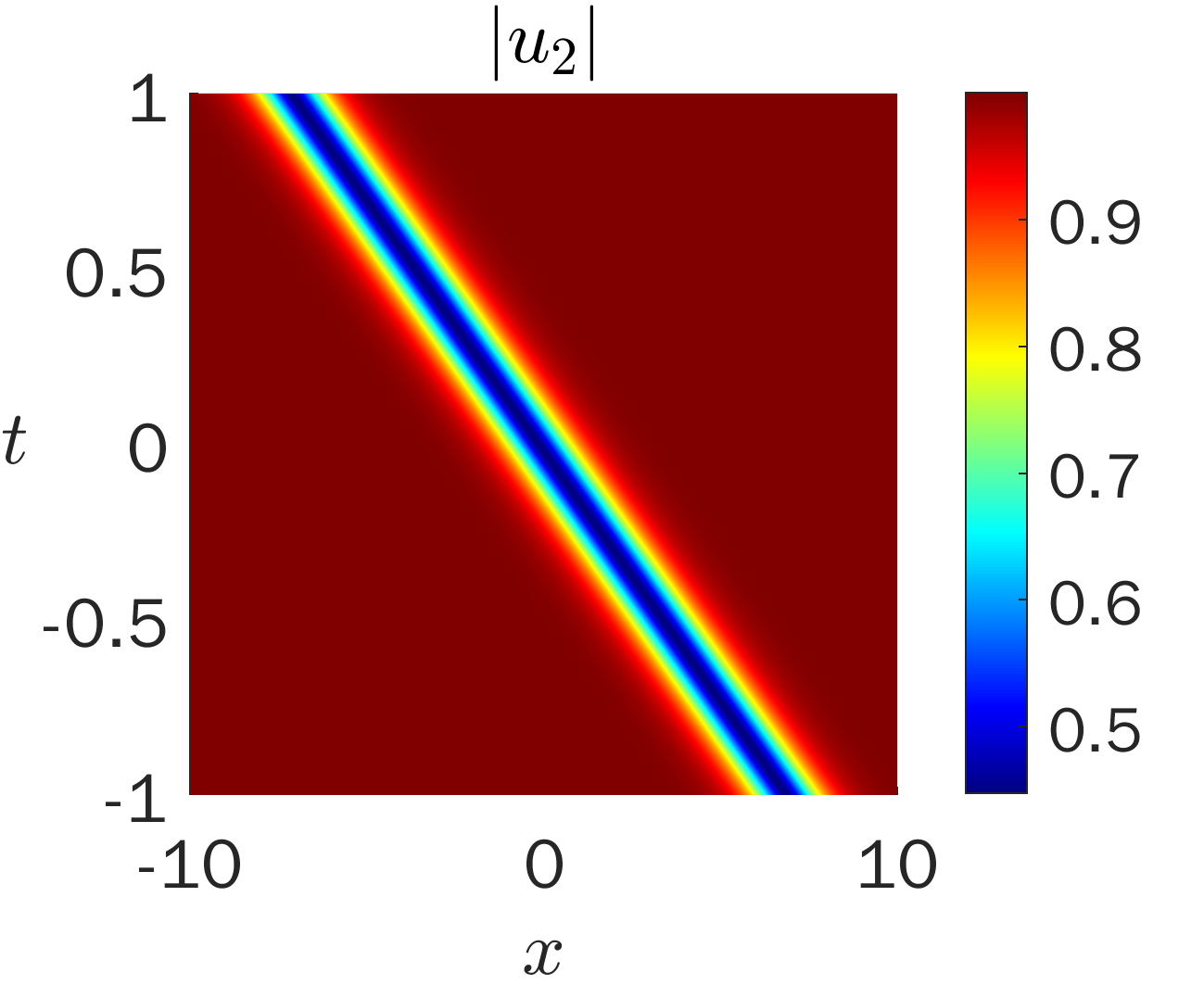}}
    \caption{One bright-dark soliton solution to the coupled Sasa-Satsuma equation under parameters \( N=1, \alpha=1/2, \rho=1, p_1=1, C_1=1,\xi_{1,0}=0\) with (a) and (b): \(  \epsilon_1=\epsilon_2=1\); (c) and (d): \(\epsilon_1= \epsilon_2=-1\).}
    \label{fig:one bright-dark soliton for N=1, case2}
\end{figure}

Furthermore, different sign combinations of \(\epsilon_1\) and \(\epsilon_2\), such as \((\epsilon_1, \epsilon_2) = (1, 1)\) or \((-1, -1)\), can be explored. Corresponding examples are depicted in Fig. \ref{fig:one bright-dark soliton for N=1, case2}.

\subsection{One bright-dark soliton and breather for $N=2$}
When $N=2$ in \cref{thm:bd_css}, the parameters are subject to the constraints $p_1 = p_2^*$ and $\xi_{1,0} = \xi_{2,0}^*$. Assume $p_1 = a + b\mathrm{i}$, where $a,b \in \mathbb{R}$, and recall \(d_{1,1} = d_{2,2}\), and \(d_{1,2} = d_{2,1}^*\) (see \cref{lemma:dij}), then the solution \eqref{css_bd_sln} can be rewritten as
\begin{equation}\label{n2form}
\begin{aligned}
    f ={} & \frac{b^2 e^{2 \xi _1^*+2 \xi _1}}{4 a^2 \left(a^2+b^2\right)}+\frac{d_{1,1} e^{\xi _1^*+\xi _1}}{a}-\frac{d_{1,2}^* e^{2 \xi _1}}{2 a+2\mathrm{i} b}-\frac{d_{1,2} e^{2 \xi _1^*}}{2 a-2 \mathrm{i} b}-|d_{1,2}|^2 +d_{1,1}^2, \\
    g ={} & \frac{b e^{\xi _1^*+\xi _1} \left(C_2 (b-\mathrm{i} a) e^{\xi _1^*}+C_1 (b+\mathrm{i} a) e^{\xi _1}\right)}{2 a \left(a^2+b^2\right)}+\left(C_1 d_{1,1}-C_2 d_{1,2}^*\right) e^{\xi _1}\\
    &+\left(C_2 d_{1,1}-C_1 d_{1,2}\right) e^{\xi _1^*}, \\
    h ={} &\frac{b^2 \left(b^2+(a-\mathrm{i} \alpha )^2\right) e^{2 \xi _1^*+2 \xi _1}}{4 a^2 \left(a^2+b^2\right) \left(b^2+(a+\mathrm{i} \alpha )^2\right)}-\frac{d_{1,1} \left(a^2+\alpha ^2-b^2\right) e^{\xi _1^*+\xi _1}}{a \left(b^2+(a+\mathrm{i}\alpha )^2\right)}\\
    & +\frac{d_{1,2}^* (a+\mathrm{i} (b-\alpha )) e^{2 \xi _1}}{2 (a+\mathrm{i} b) (a+\mathrm{i} (\alpha +b))}+\frac{d_{1,2} (a-\mathrm{i} (\alpha +b)) e^{2 \xi _1^*}}{2 \left(\alpha  (b+\mathrm{i} a)+(a-\mathrm{i} b)^2\right)}-|d_{1,2}|^2 +d_{1,1}^2, \\
\end{aligned}
\end{equation}
with the dispersion relation \(\xi_1 = p_1x + (p_1^3 - 6p_1\rho^2\epsilon_2)t + \xi_{1,0}\).
When $C_1 C_2 \neq 0$, we define,
\begin{equation}
\exp(\phi) = \sqrt{\frac{C_1 d_{1,1} - C_2d_{1,2}^*}{C_2 d_{1,1}-C_1 d_{1,2}}},
\end{equation}
which yields
\begin{eqnarray}
    g &=& \frac{b e^{\xi _1^*+\xi _1} \left(C_2 (b-\mathrm{i} a) e^{\xi _1^*}+C_1 (b+\mathrm{i} a) e^{\xi _1}\right)}{2 a \left(a^2+b^2\right)}\nonumber\\
    &+& 2\sqrt{(C_1 d_{1,1} - C_2d_{1,2}^*)(C_2 d_{1,1}-C_1 d_{1,2})}e^{\Re{(\xi_1)}}\cos(\Im(\xi_1) - \mathrm{i}\phi).
\end{eqnarray}
This expression contains a periodic term if \(\Im(p_1) \neq 0\), leading to a breather solution for $u_1$ (see \cref{fig:f3e,fig:f4e}). A similar analytical approach can be extended to $u_2$ and generate a breather solution as well, which is illustrated in \cref{fig:f3g,fig:f4g}. Moreover, when $p_1\in \mathbb{R}$, it can be shown that the solution degenerates into a solution of $N = 1$.

\begin{figure}[H]
    \centering
\subfigure[]{\label{fig:f4a}
        \includegraphics[width=30mm]{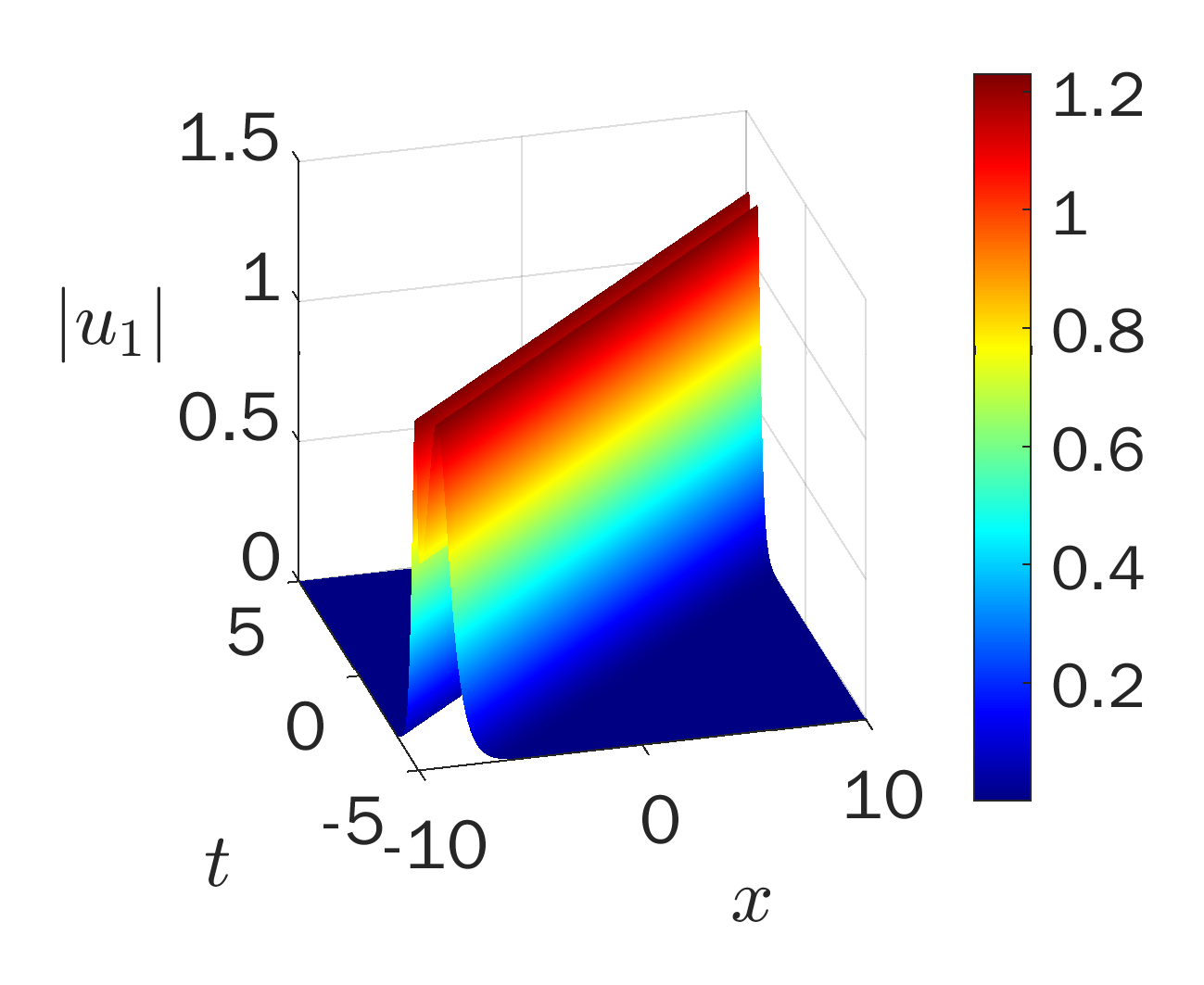}}
\subfigure[]{ 
        \includegraphics[width=30mm]{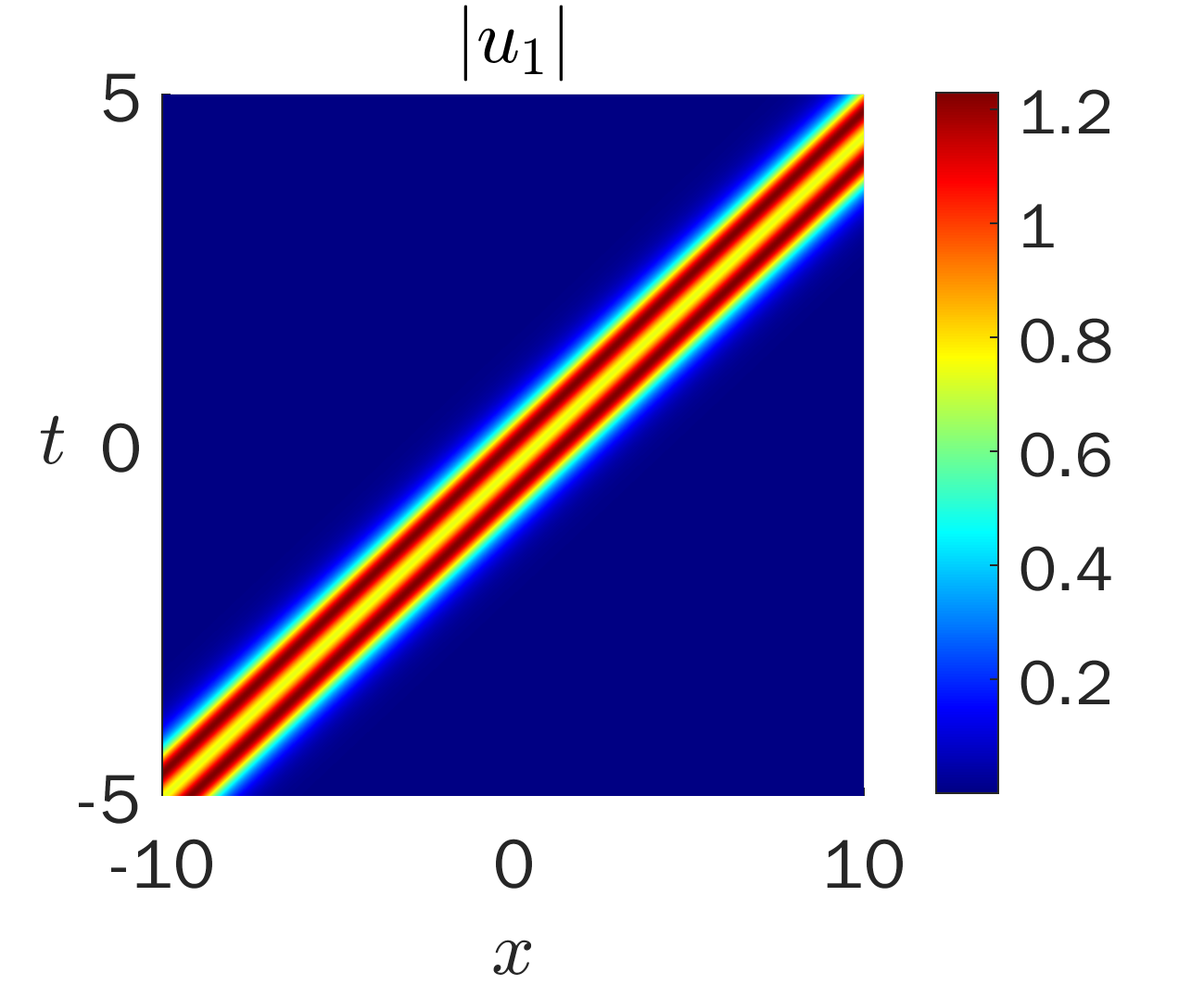}}
\subfigure[]{
        \includegraphics[width=30mm]{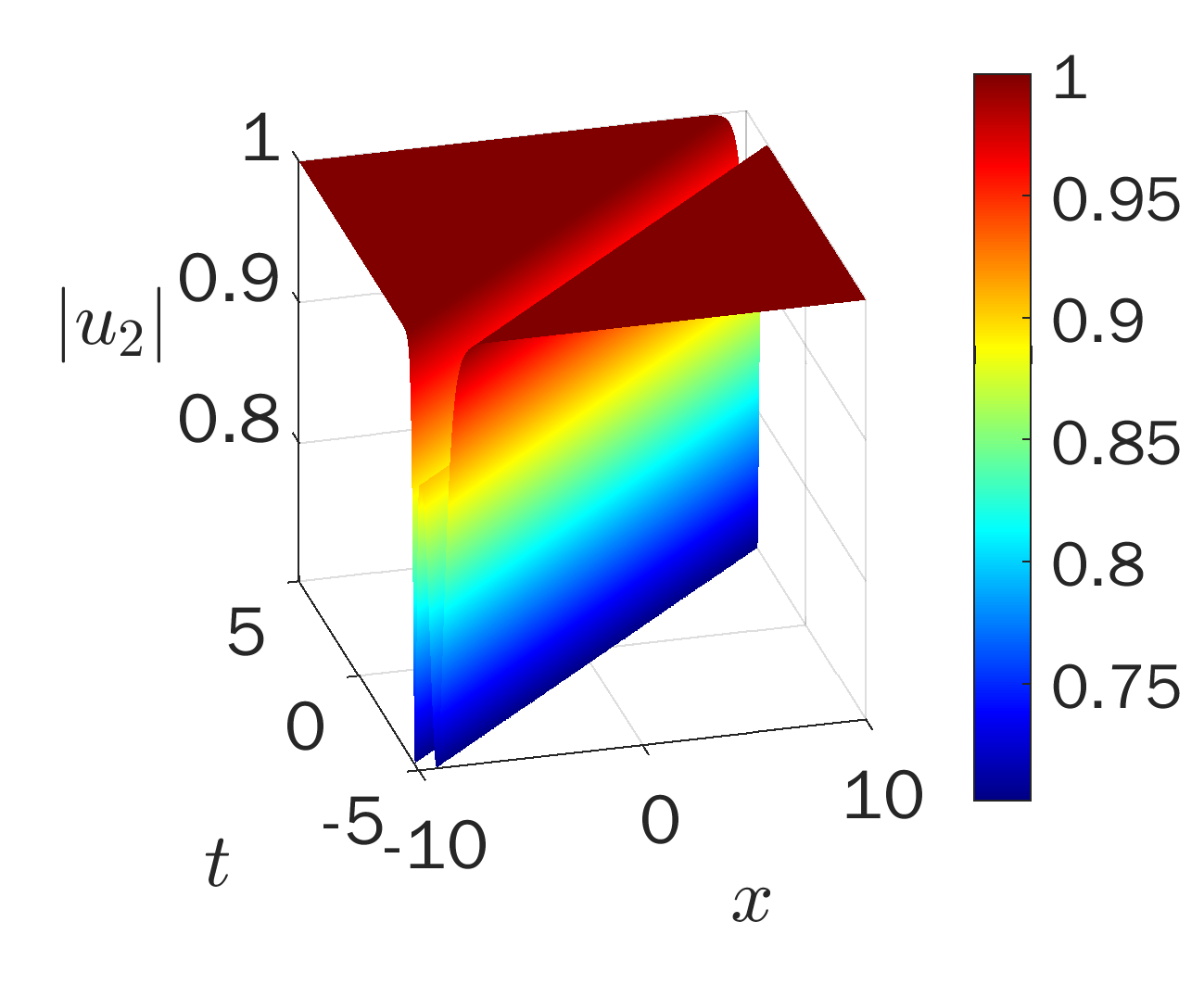}}
\subfigure[]{\label{fig:f4c}
        \includegraphics[width=30mm]{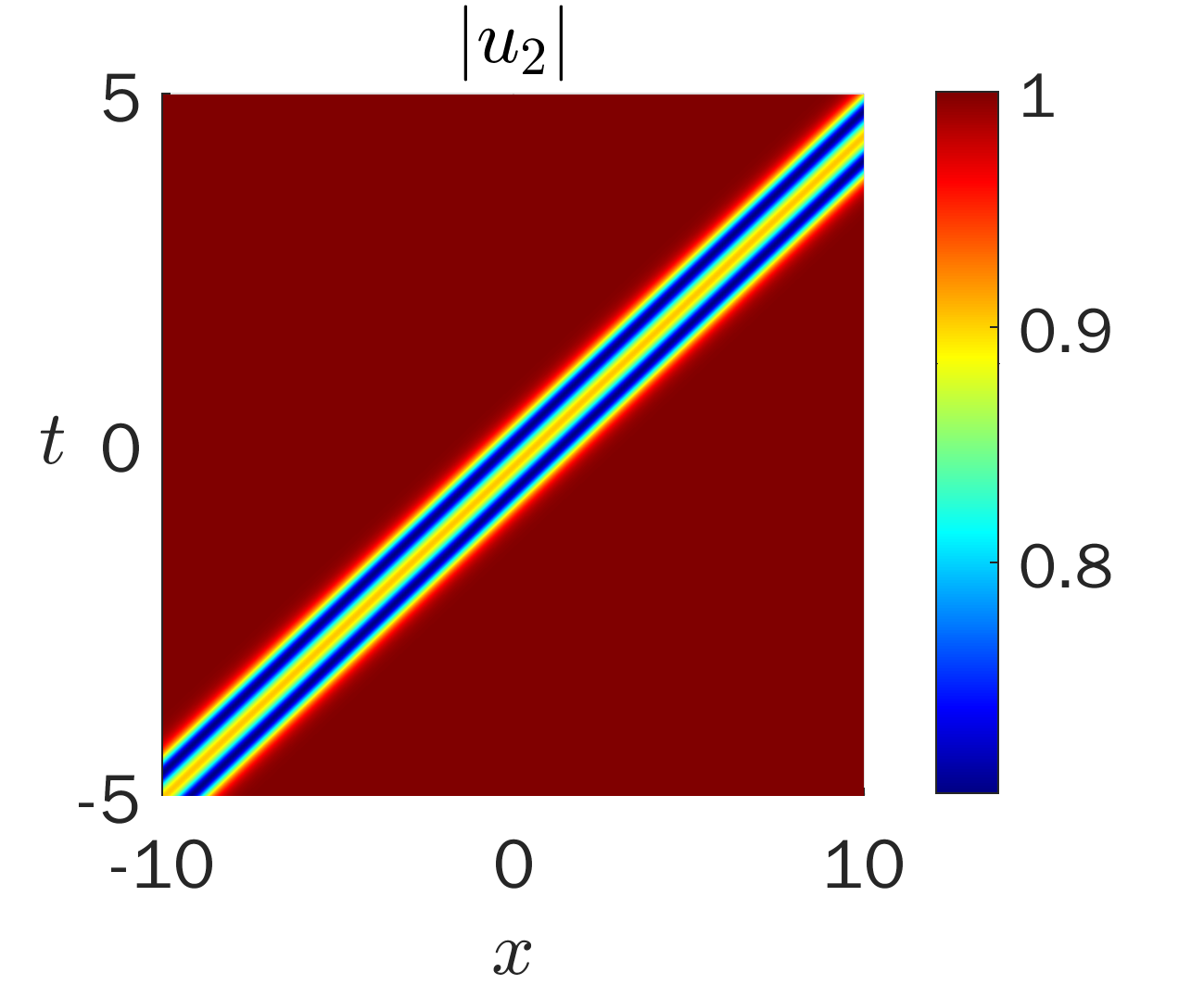}}
\subfigure[]{\label{fig:f4e}
        \includegraphics[width=30mm]{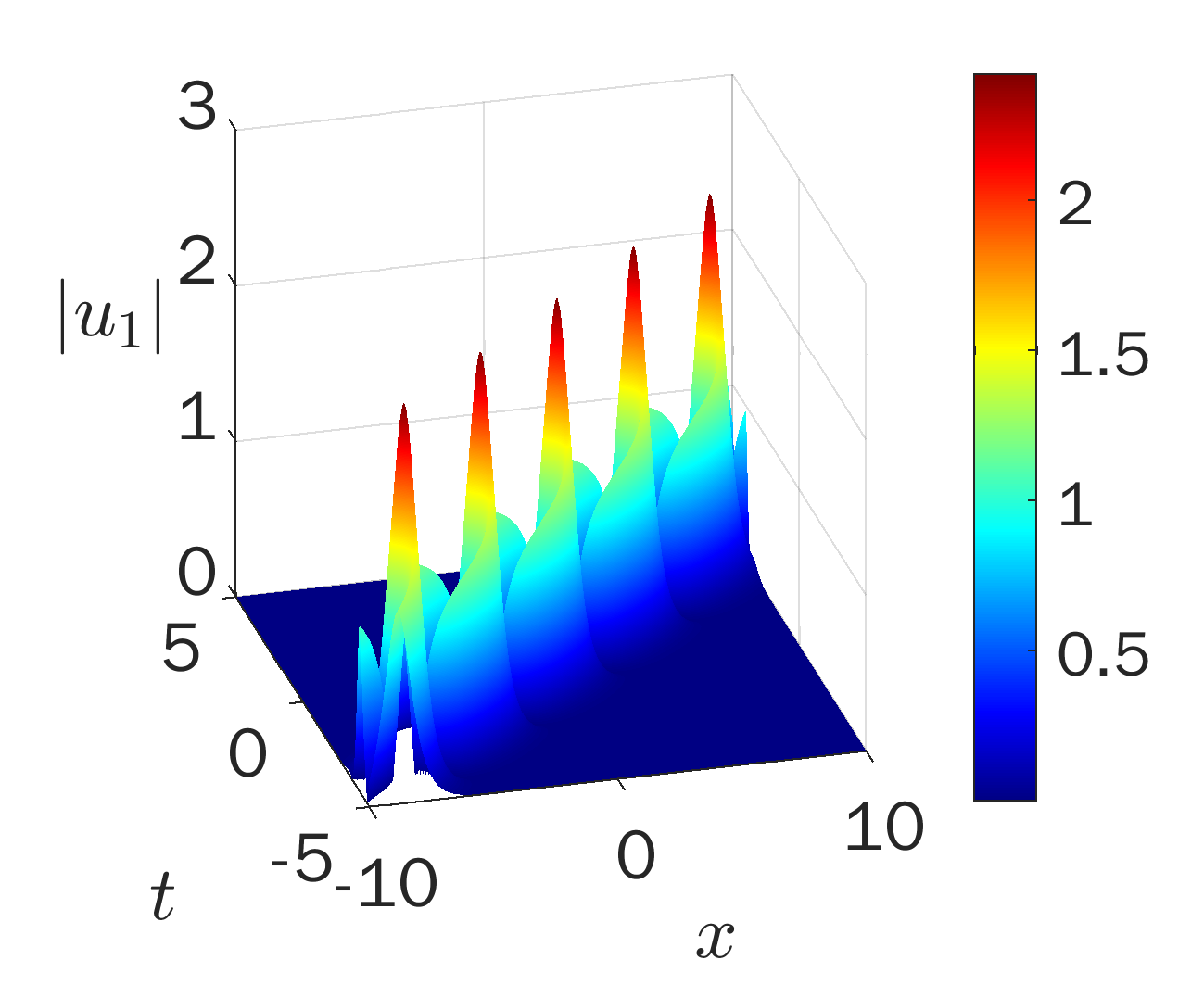}}
\subfigure[]{ 
        \includegraphics[width=30mm]{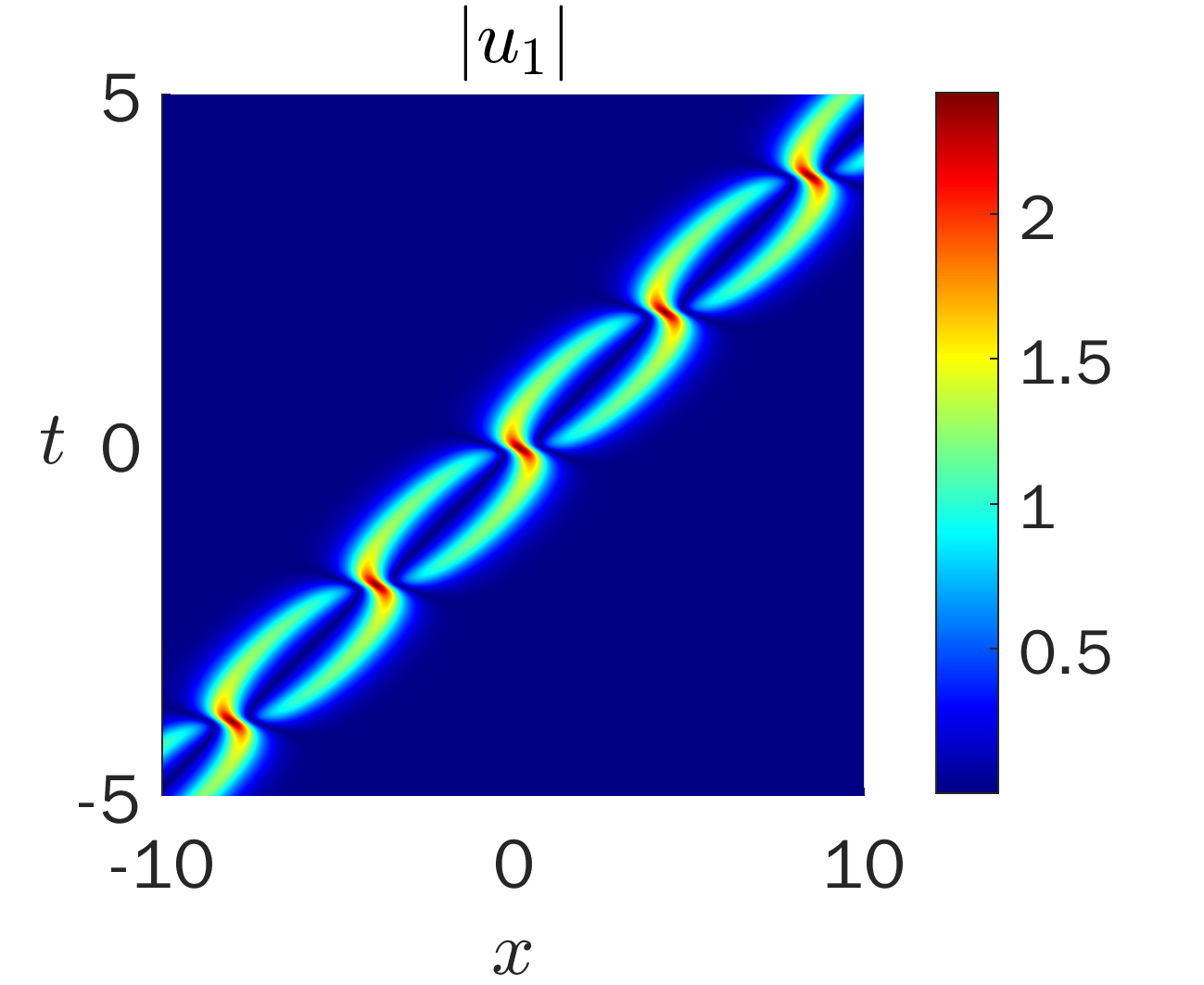}}
\subfigure[]{\label{fig:f4g}
        \includegraphics[width=30mm]{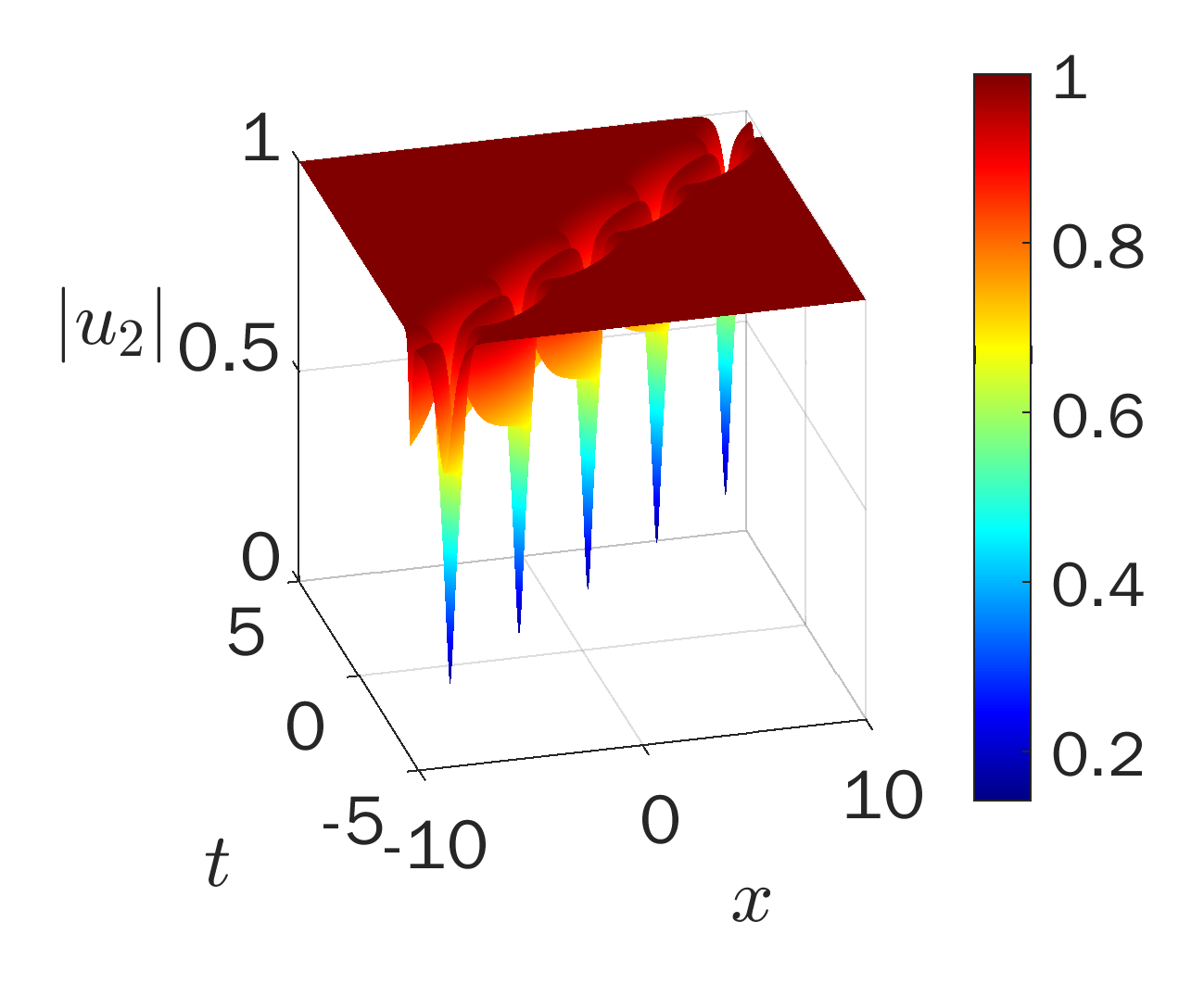}}
\subfigure[]{ 
        \includegraphics[width=30mm]{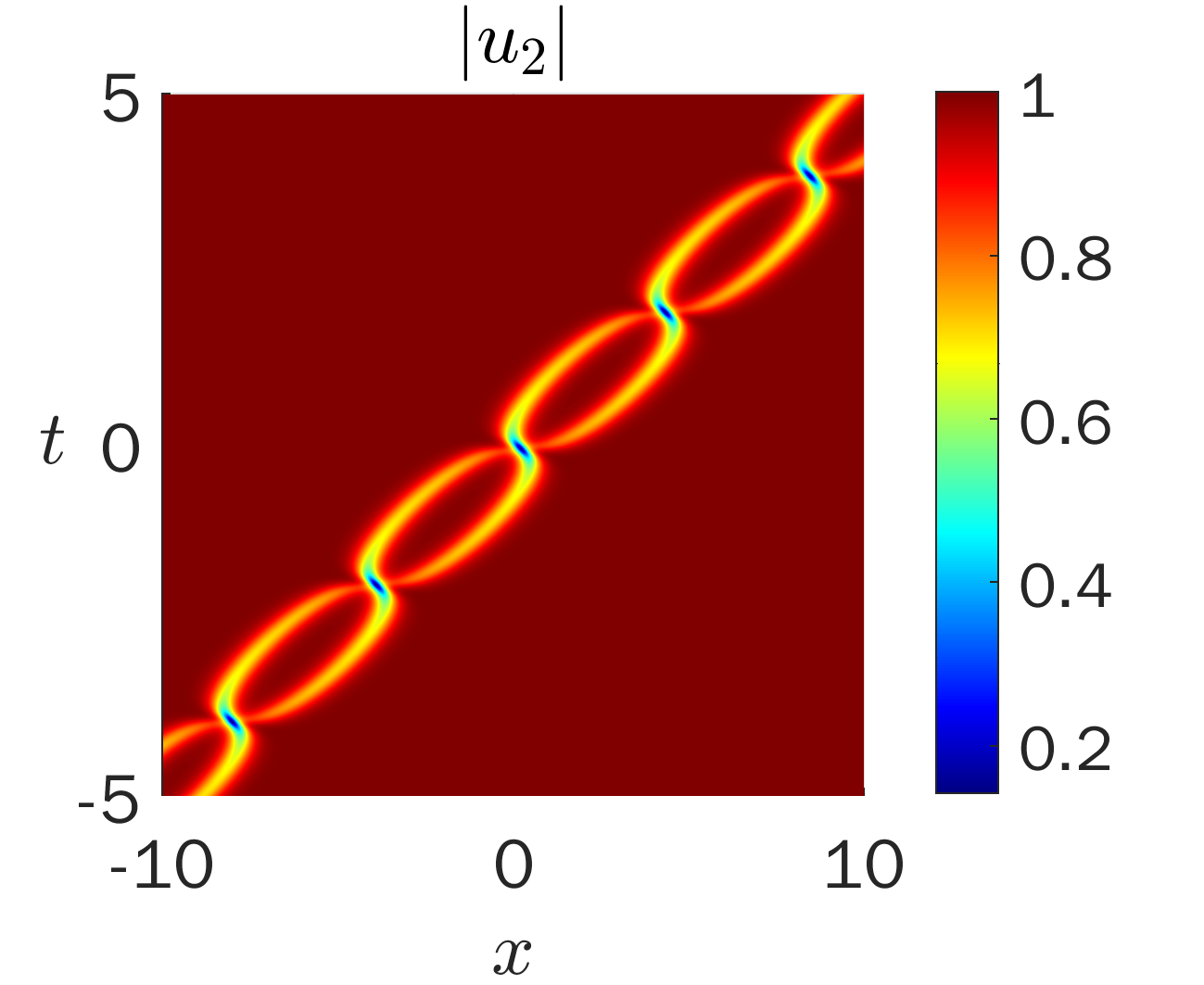}}
    \caption{One bright-dark soliton solution to the coupled Sasa-Satsuma equation under parameters \( N=2, \alpha=-2, \rho=1,  \epsilon_1=-1, \epsilon_2=1, p_1=2+0.2\mathrm{i}\) with the first row \( C_1=1,C_2=0,\xi_{1,0}=\xi_{2,0}=0\) and the second row \(C_1=C_2=1\). The second column and the forth column show the density plots of the first and third columns, respectively.}
    \label{fig:one bright-dark soliton for N=2 case2}
\end{figure}

To avoid periodicity and obtain a soliton solution, we set $C_1 \neq 0$, $C_2 = 0$, or $C_1 = 0$, $C_2 \neq 0$ in \eqref{n2form}. For example, consider the case $C_1 \neq 0$, $C_2 = 0$, then the solution \eqref{css_bd_sln} becomes,
\begin{equation}\label{n2formC2=0}
\begin{aligned}
    f & = \frac{b^2 }{4 a^2 \left(a^2+b^2\right)}e^{ 2 \xi_1^* +  2\xi_1 }+\frac{d_{1,1} }{a}e^{ \xi_1^* +  \xi_1}+d_{1,1}^2,\\
    g & = C_1 e^{\xi _1} \left(d_{1,1}+\frac{b e^{\xi _1^*+\xi _1}}{2 a (b-\mathrm{i} a)}\right), \\
    h & = \frac{b^2 \left(b^2+(a-\mathrm{i} \alpha )^2\right)}{4 a^2 \left(a^2+b^2\right) \left(b^2+(a+\mathrm{i} \alpha )^2\right)} e^{2 \xi_1^* +  2\xi_1} -\frac{d_{1,1} \left(a^2+\alpha ^2-b^2\right)}{a \left(b^2+(a+\mathrm{i} \alpha )^2\right)} e^{\xi_1^* + \xi_1} + d_{1,1}^2.\\
\end{aligned}
\end{equation}

For the following analysis, we assume \(a > 0\) and \(b > 0\) without loss of generality. Under this assumption, the condition \(d_{1,1} > 0\) must be satisfied to ensure \(f \neq 0\), thereby guaranteeing the regularity of the solution.

Assuming $y = \exp(\xi_1 + \xi_1^*)$ in \eqref{n2formC2=0}, we can solve \(y\) from equations
\begin{equation*}
    \partial_y|u_1|^2 = 0, \quad \partial_y|u_2|^2 = 0.
\end{equation*}
This yields four possible roots
{\allowdisplaybreaks\begin{equation}
\begin{aligned}
    & y_1 = -\frac{2a}{b}\sqrt{a^2 + b^2}|d_{1,1}|, 
    \quad y_2 = \frac{2a}{b}\sqrt{a^2 + b^2}|d_{1,1}|,\\
    &y_3  = \frac{2\left(a(a^2 - b^2)  - a^2 \sqrt{a^2 - 3b^2}\right)|d_{1,1}|}{b^2}, \quad y_4  = \frac{2\left(a(a^2 - b^2)  + a^2 \sqrt{a^2 - 3b^2}\right)|d_{1,1}|}{b^2}.
\end{aligned}
\end{equation}}
Each of these \(y_i > 0, i = 1, 2, 3, 4\) could indicate an extreme of \(|u_1|\) and \(|u_2|\). Therefore, the soliton solutions of \eqref{n2formC2=0} can be classified into two cases.

\begin{itemize}
    \item [1.] When \(a^2 > 3b^2\), the roots satisfy \(y_1 < 0\) and \(0 < y_3 < y_2 < y_4\), and 
    {\allowdisplaybreaks\begin{equation}
    \begin{aligned}
        |u_1(y_2)|^2 &= \frac{4 a^2 c_1^2 y \left(a^2+b^2\right) \left(4 a d_{11} \left(a d_{11} \left(a^2+b^2\right)+b^2 y\right)+b^2 y^2\right)}{\left(4 a d_{11} \left(a^2+b^2\right) \left(a d_{11}+y\right)+b^2 y^2\right)^2},\\
        |u_1(y_3)|^2 &= |u_1(y_4)|^2 = \frac{c_1^2 \left(a^2+b^2\right)}{4 a d_{11}},\\
        |u_2(y_2)|^2 & = \frac{\rho^2}{d_{1,1} \left(a^2+(b-\alpha )^2\right) \left(a^2+(\alpha +b)^2\right)}\left(-4 b \left(a^2+\alpha ^2+b^2\right) \sqrt{d_{1,1}^2 \left(a^2+b^2\right)}\right.\\
        &\left.+d_{1,1} \left(a^4+\alpha ^4+2 \alpha ^2 \left(a^2+b^2\right)+6 a^2 b^2+5 b^4\right)\right)
        \\
        |u_2(y_3)|^2 & = |u_2(y_4)|^2 = \rho^2 \frac{\alpha ^2 \left(a^2+\alpha ^2-3 b^2\right)}{\alpha ^4+\left(a^2+b^2\right)^2+2 \alpha ^2 (a^2 - b^2)}.
    \end{aligned}  
    \end{equation}
    Further, it can be proven that
    \begin{equation}
        0 < |u_1(y_2)| < |u_1(y_3)| = |u_1 (y_4)|,\quad
        \rho > |u_2(y_2)| > |u_2(y_3)| = |u_2(y_4)|, 
    \end{equation}}
    As a consequence, this case represents a two-hump soliton for both \(u_1\) and \(u_2\), with the humps located at \(\xi_1 + \xi_1^* = \log y_3\) and \(\xi_1 + \xi_1^* = \log y_4\),  as illustrated in  \cref{fig:f4a,fig:f4c}.
    \item [2.] When \(a^2 < 3b^2\), only \(y_2>0\) is valid, which means the solution is a single-hump soliton with the hump located at \(\xi_1 + \xi_1^* = \log y_2\) for both of the components \(u_1\) and \(u_2\), as shown in \cref{fig:f3a,fig:f3c}.
\end{itemize} 


\begin{figure}[H]
    \centering
\subfigure[]{\label{fig:f3a}
        \includegraphics[width=30mm]{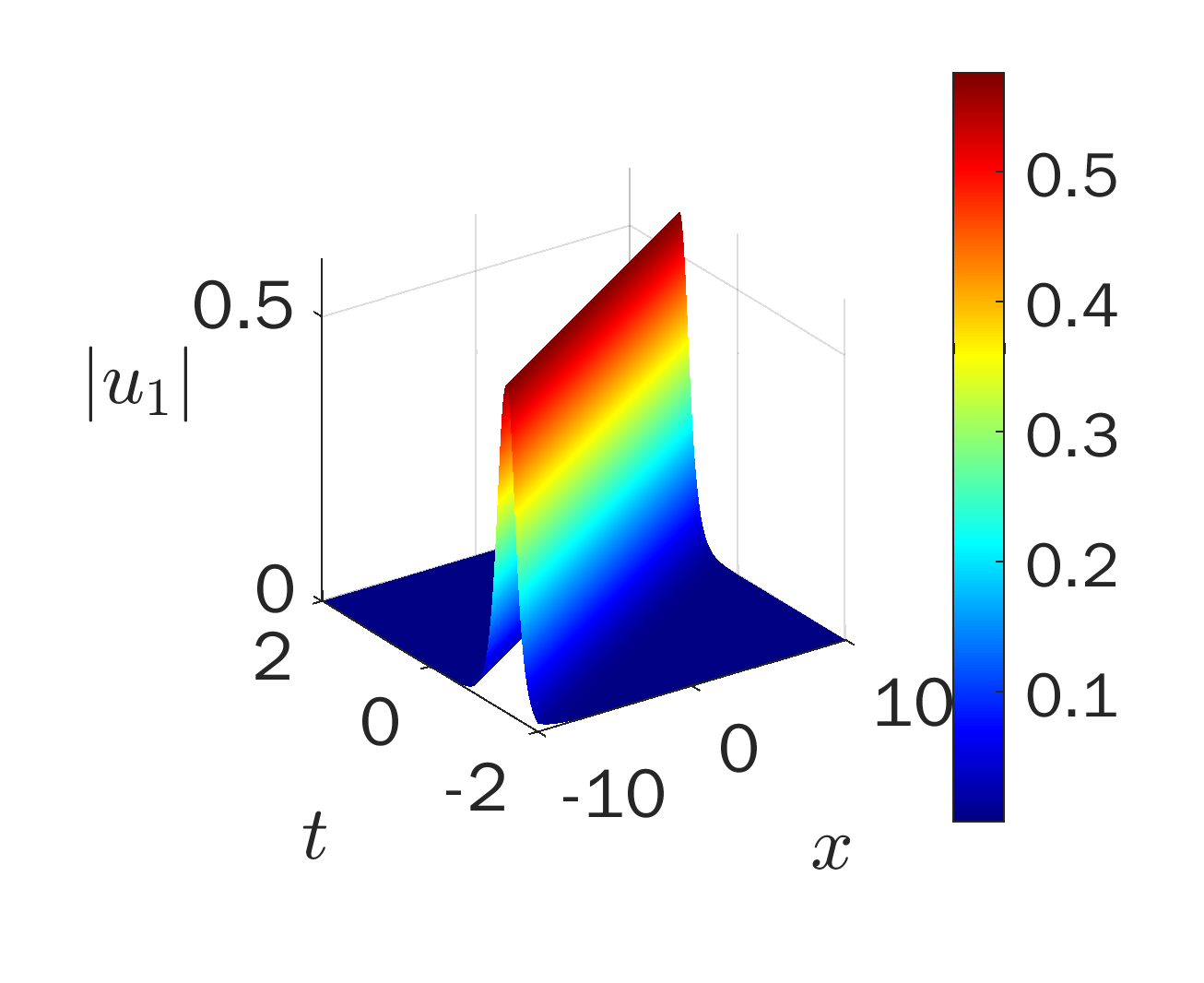}}
\subfigure[]{ 
        \includegraphics[width=30mm]{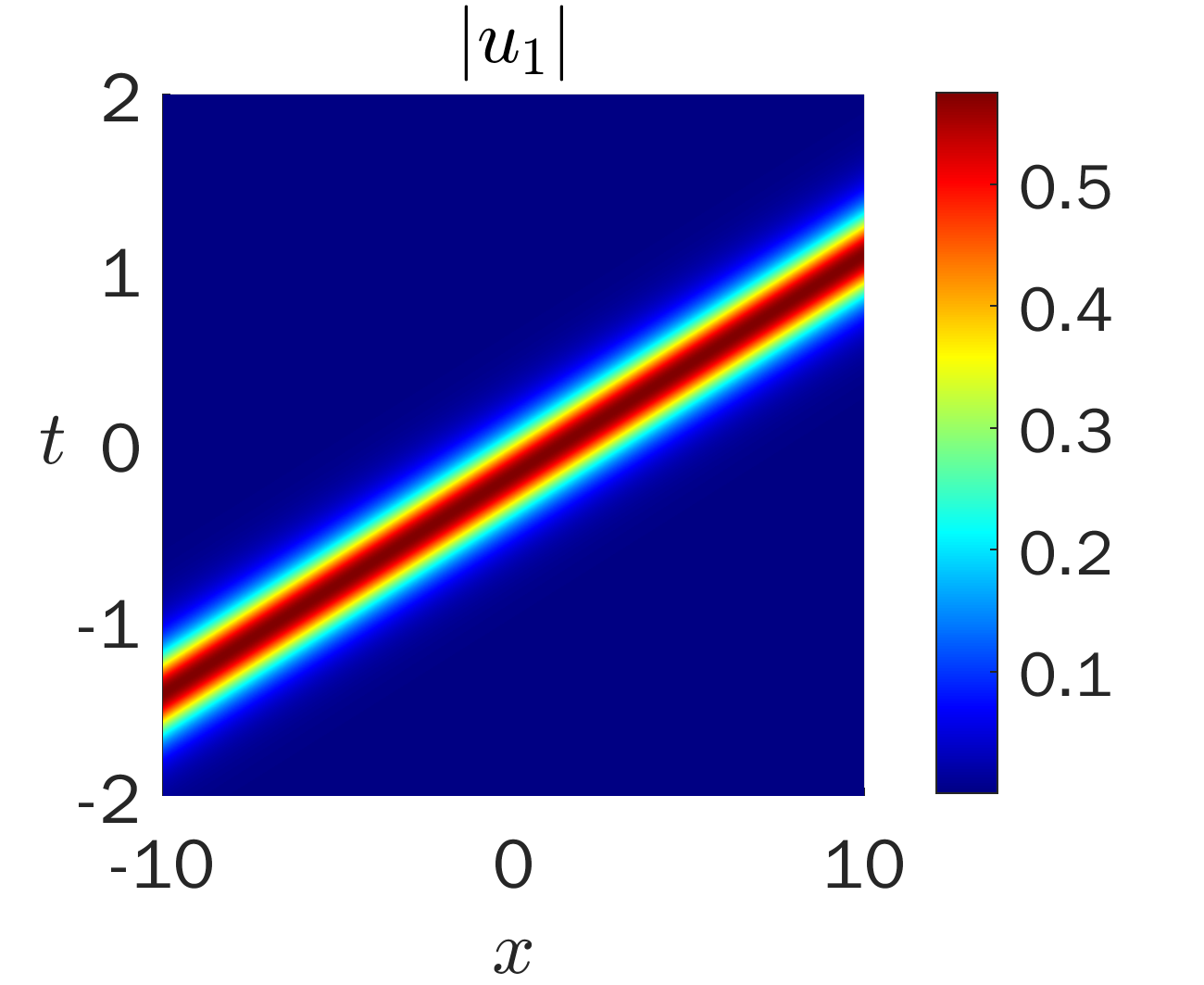}}
\subfigure[]{\label{fig:f3c}
        \includegraphics[width=30mm]{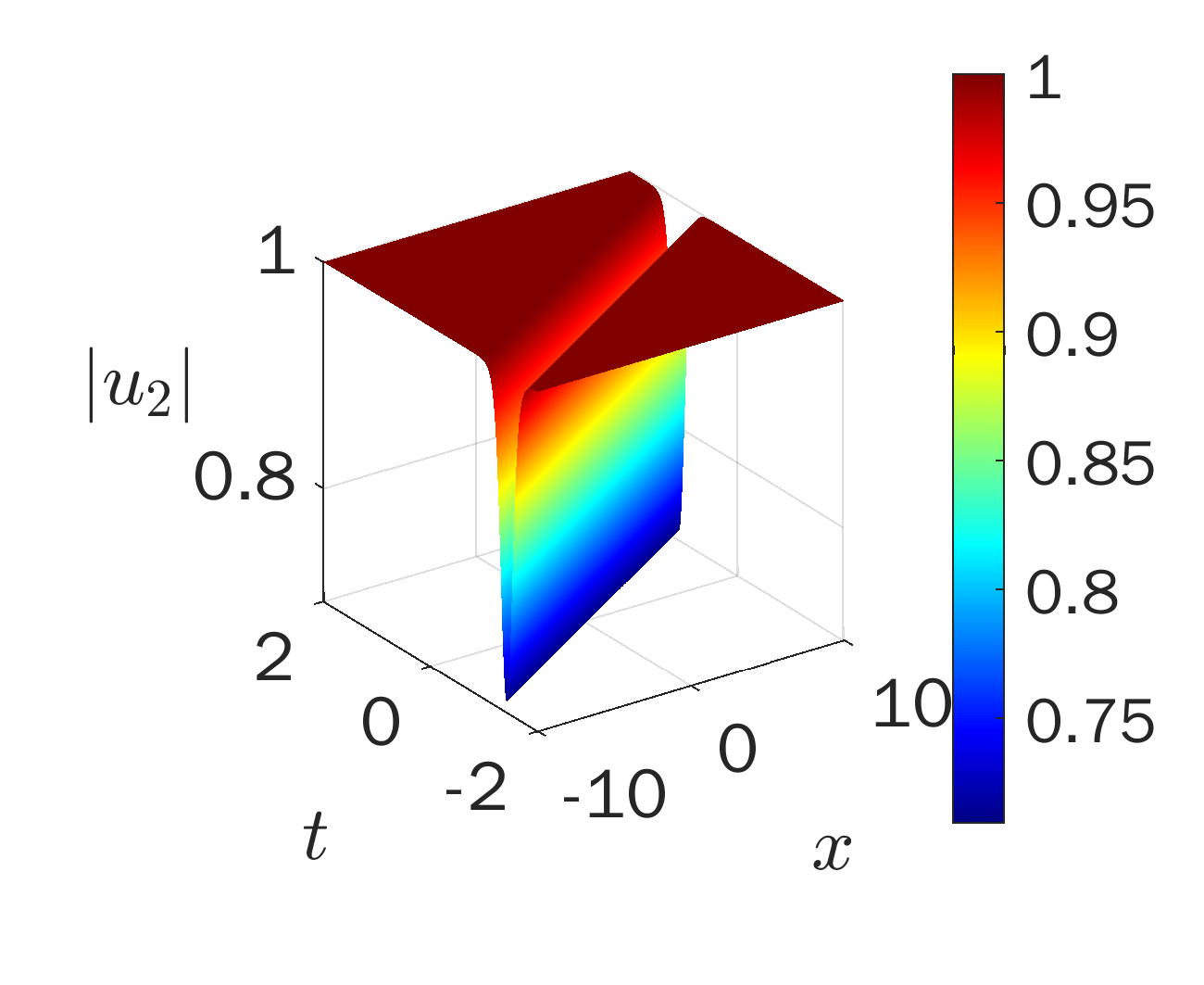}}
\subfigure[]{%
        \includegraphics[width=30mm]{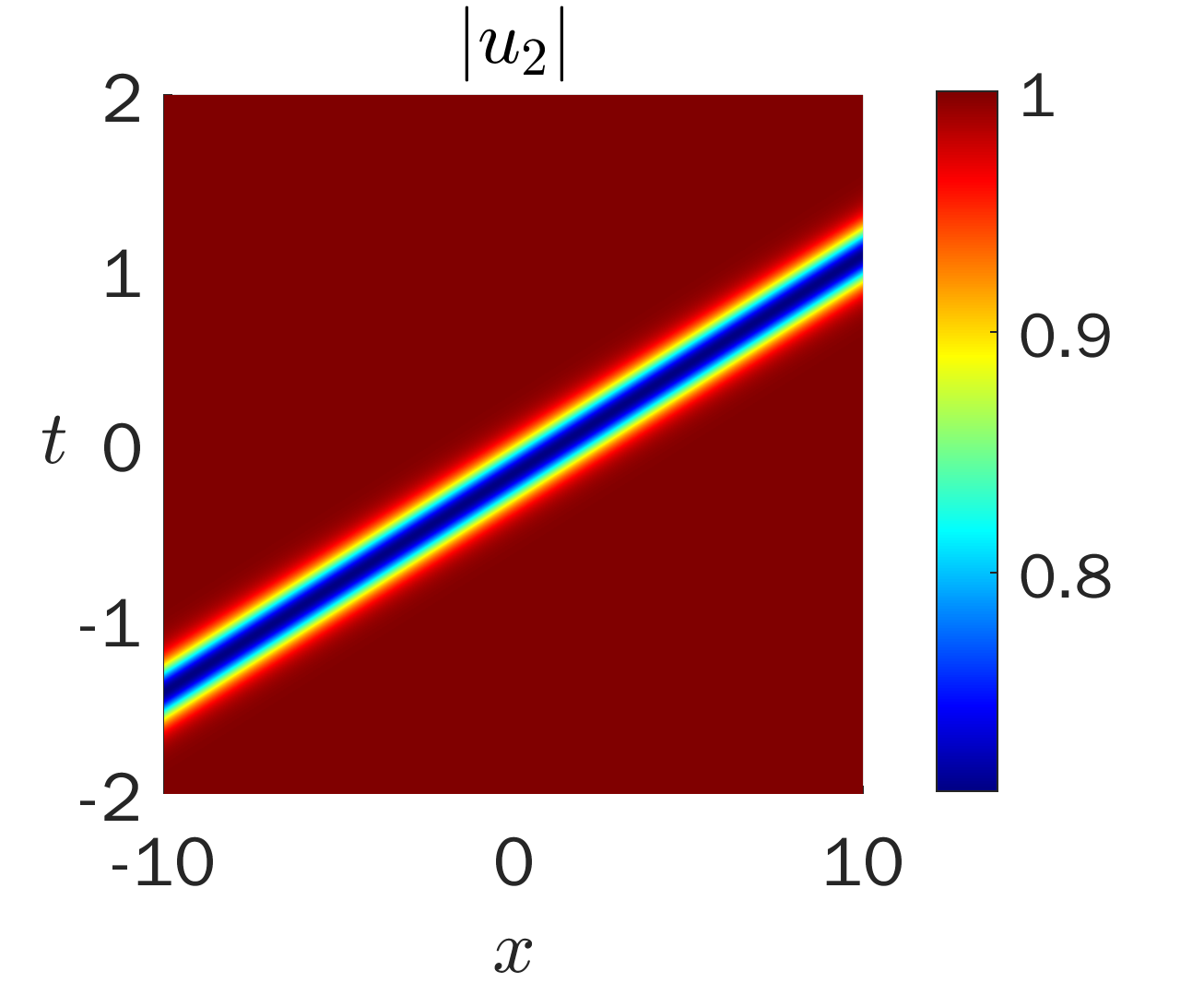}}
\subfigure[]{\label{fig:f3e}
        \includegraphics[width=30mm]{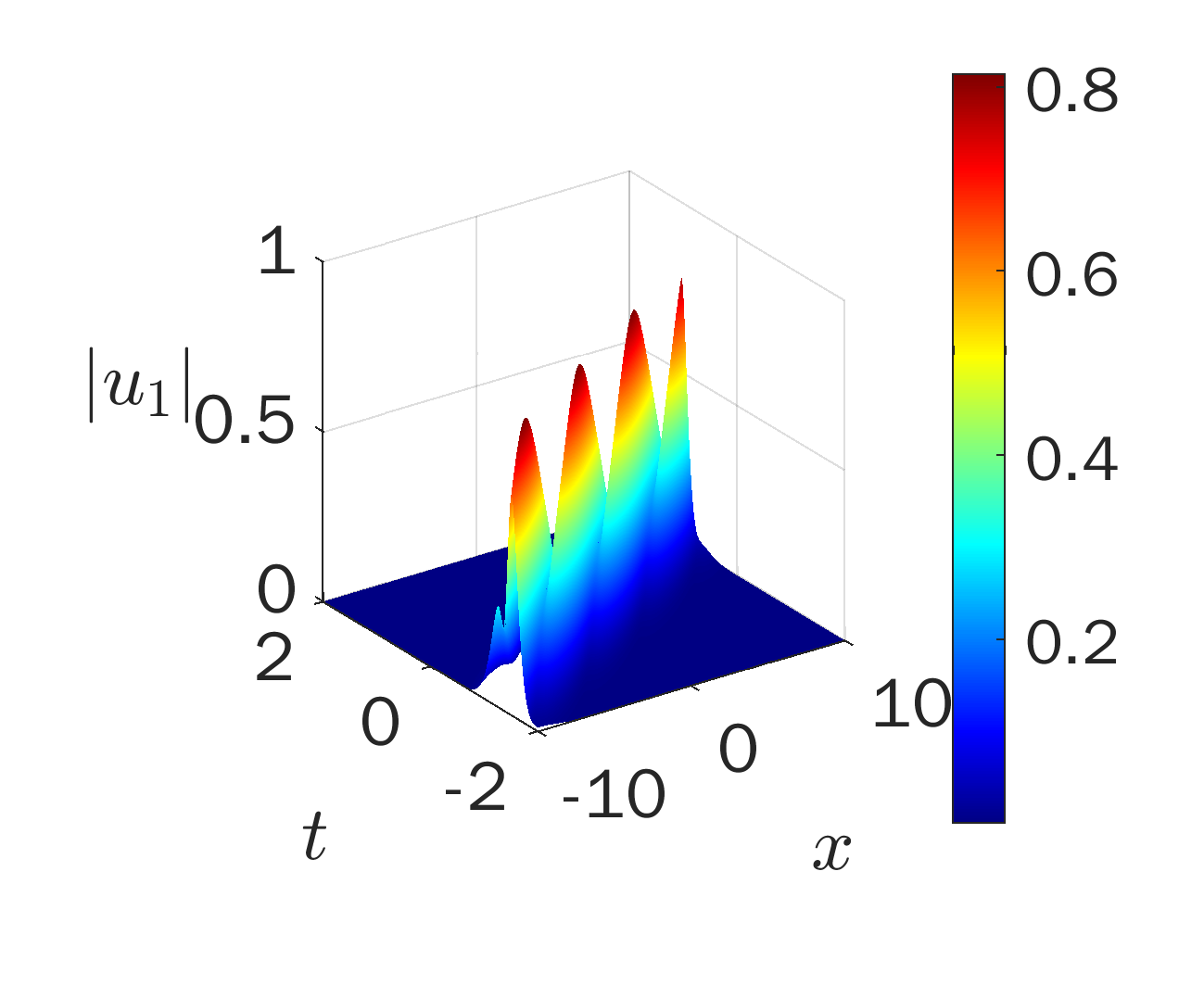}}
\subfigure[]{ 
        \includegraphics[width=30mm]{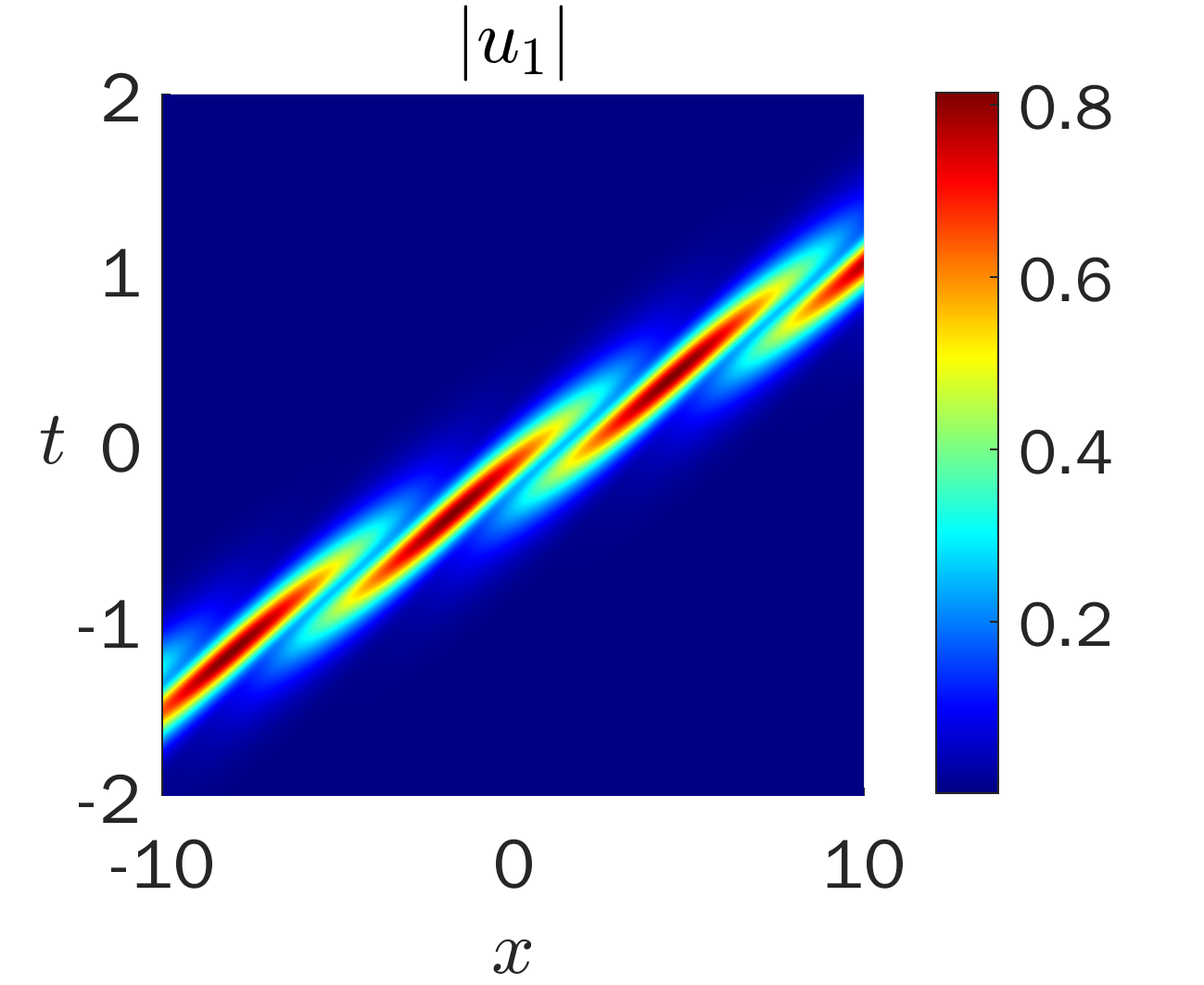}}
\subfigure[]{\label{fig:f3g}
        \includegraphics[width=30mm]{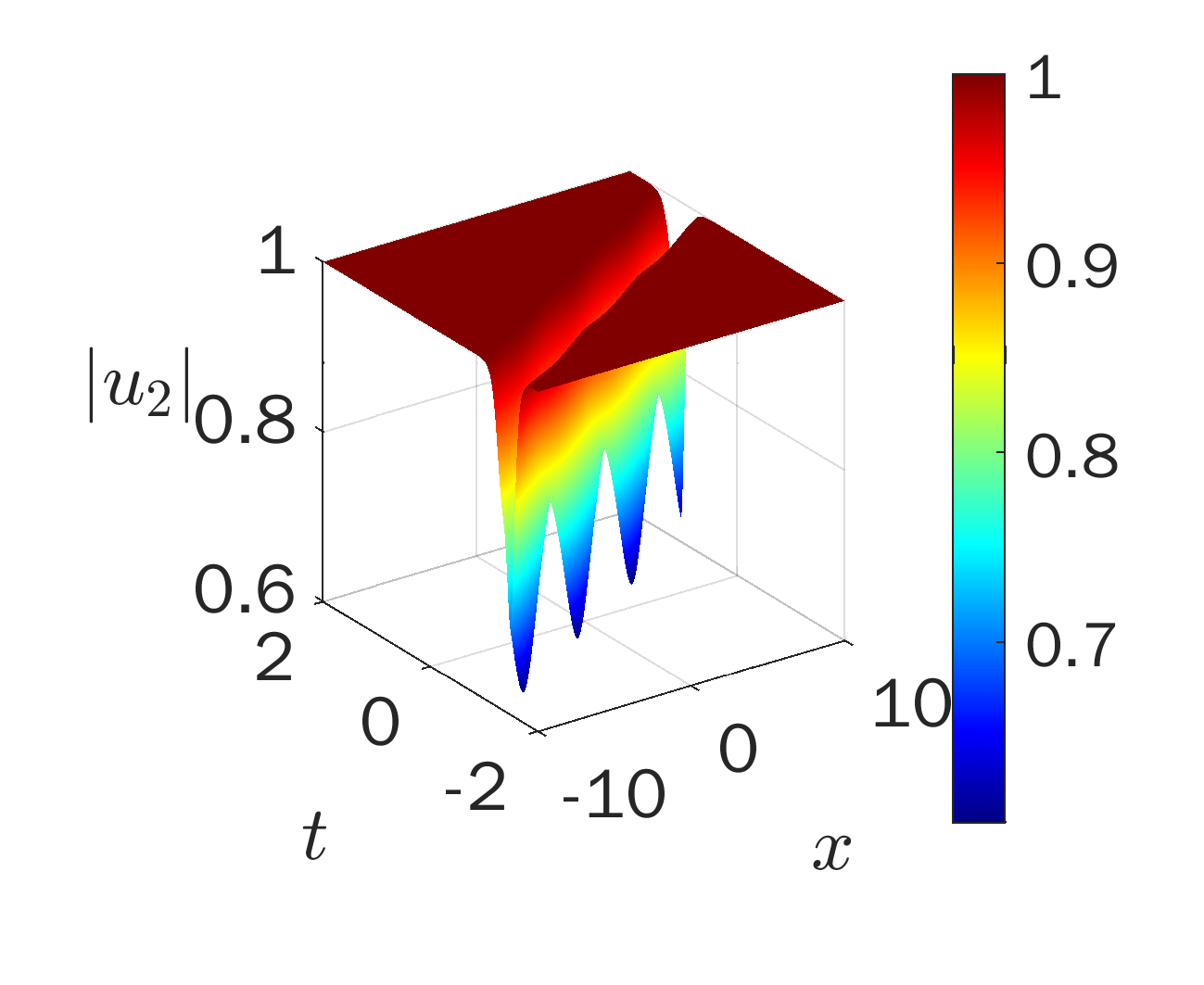}}
\subfigure[]{
        \includegraphics[width=30mm]{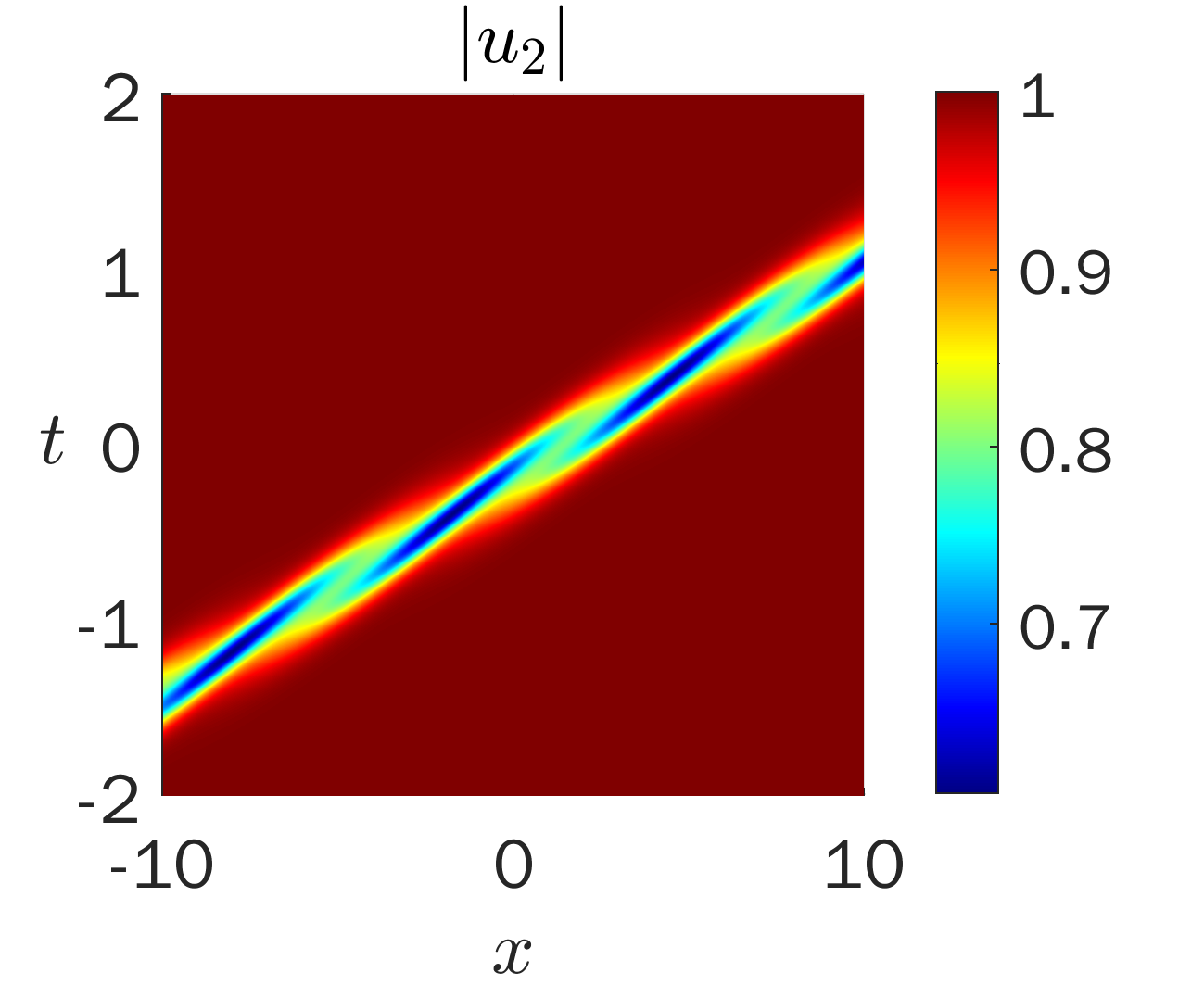}}
    \caption{
    One bright-dark solutions to the coupled Sasa-Satsuma equation under parameters \( N = 2 \), \( \alpha = 2 \), \( \rho = 1 \), \( \epsilon_1 = -1 \), \( \epsilon_2 = 1 \), \( p_1 = 1 + \mathrm{i},\xi_{1,0}=\xi_{2,0}=0 \) and \(C_1 = 2\) with the first row \(C_2=0\) and the second row \(C_2=1\).  (b) and (d) are the corresponding density plots of (a) and (c), respectively. (f) and (h) are the corresponding density plots of (e) and (g), respectively.
    }
    \label{fig:one bright-dark soliton for N=2 case1}
\end{figure}

\subsection{Two bright-dark solitons for $N=3$}
For \(N=3\), solutions in \cref{thm:bd_css} undergo a parameter restriction \(p_1 = p_3^*, p_2 \in \mathbb{R}\) and \(\xi_{1,0} = \xi_{3,0}^*, \xi_{2,0} \in \mathbb{R}\). In this case, two bright-dark solitons can be obtained. By taking
\begin{equation}
        \alpha=1,\quad \rho=1, \quad \epsilon_1=-1,\quad \epsilon_2=1, \quad p_1=2+\mathrm{i},\quad p_2=2,
\end{equation}
and varying values of $C_1, C_2, C_3$ and $C_4$, we can observe a soliton interacting with another soliton or a breather, as illustrated in Fig. \ref{fig:two bright-dark soliton for N=3, case1}. 
The nature of the interactions is dictated by the value of \(C_3\): 
\begin{itemize}
\item Figs. \ref{fig:N=3_alpha=1_rho=1_eps=[-1,1]_p=[2+1.0i,2,2-1.0i]_C=[1,1,0]_u1_2D} and \ref{fig:N=3_alpha=1_rho=1_eps=[-1,1]_p=[2+1.0i,2,2-1.0i]_C=[1,1,0]_u2_2D} correspond to \(C_3=0\), where inelastic collisions occur between a single soliton and a breather. In this scenario, the breather transitions into a soliton after the collision.
\item Figs. \ref{fig:N=3_alpha=1_rho=1_eps=[-1,1]_p=[2+1.0i,2,2-1.0i]_C=[1,1,1]_u1_2D} and \ref{fig:N=3_alpha=1_rho=1_eps=[-1,1]_p=[2+1.0i,2,2-1.0i]_C=[1,1,1]_u2_2D} correspond to \(C_3=1\), depicting an elastic collision between a single soliton and a breather, where both structures retain their forms after the interaction. In this case, the collision does not alter the shapes of the soliton or the breather, and only a phase shift occurs as a result of the interaction.
\end{itemize}
\begin{figure}[H]
    \centering
\subfigure[]{
        \includegraphics[width=30mm]{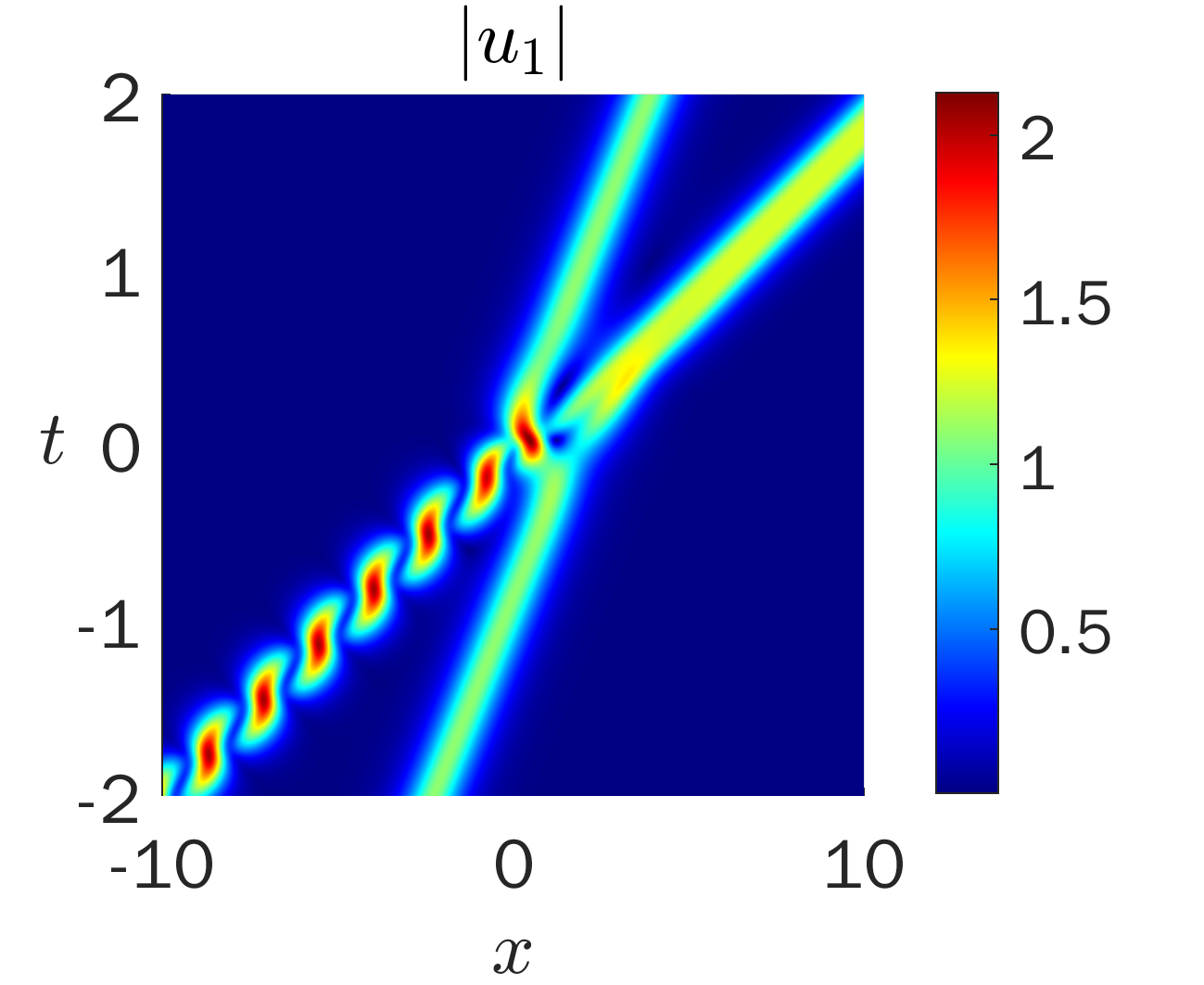}\label{fig:N=3_alpha=1_rho=1_eps=[-1,1]_p=[2+1.0i,2,2-1.0i]_C=[1,1,0]_u1_2D}}
\subfigure[]{ 
        \includegraphics[width=30mm]{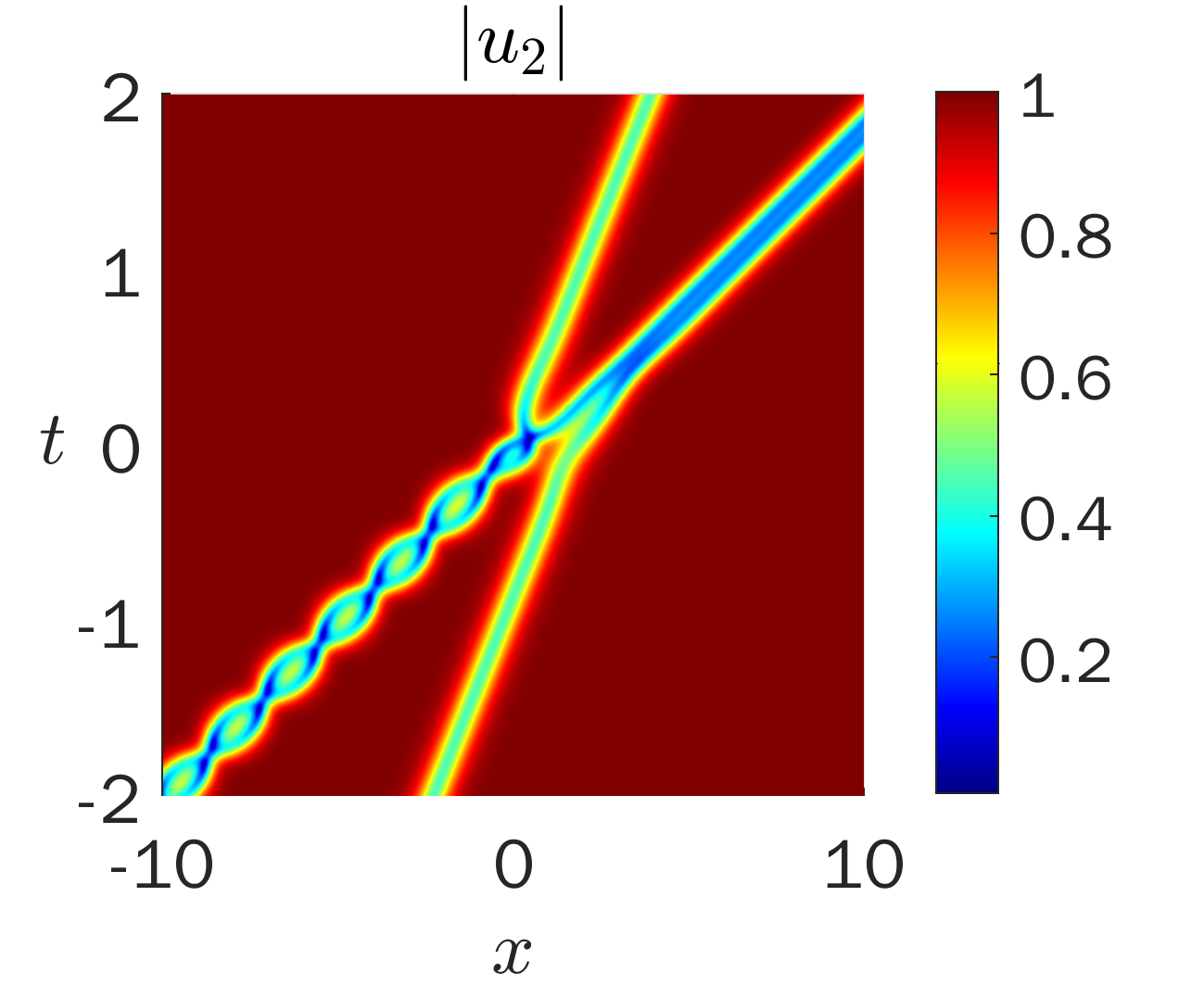}\label{fig:N=3_alpha=1_rho=1_eps=[-1,1]_p=[2+1.0i,2,2-1.0i]_C=[1,1,0]_u2_2D}}
\subfigure[]{
        \includegraphics[width=30mm]{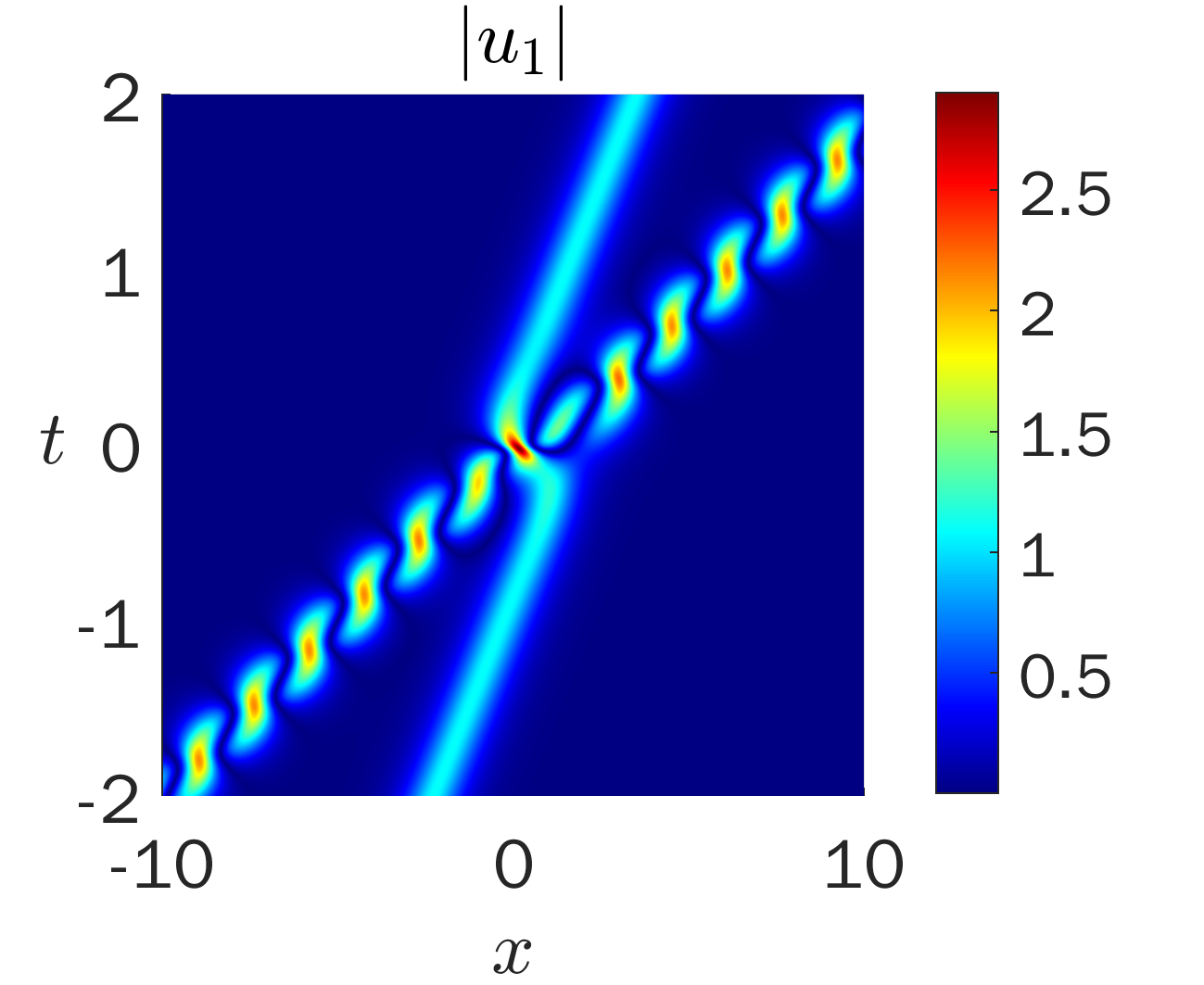}\label{fig:N=3_alpha=1_rho=1_eps=[-1,1]_p=[2+1.0i,2,2-1.0i]_C=[1,1,1]_u1_2D}}
\subfigure[]{ 
        \includegraphics[width=30mm]{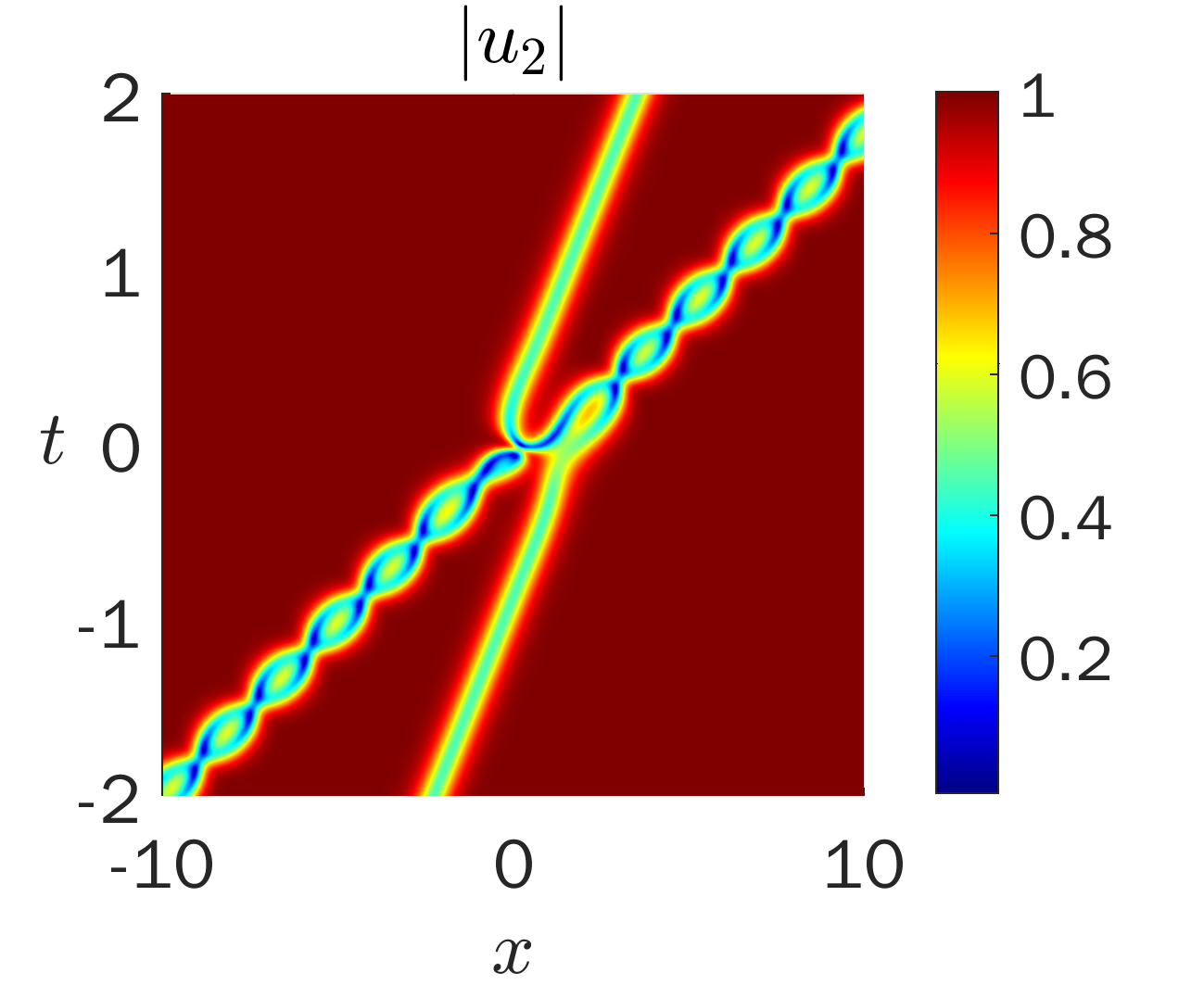}\label{fig:N=3_alpha=1_rho=1_eps=[-1,1]_p=[2+1.0i,2,2-1.0i]_C=[1,1,1]_u2_2D}}
        \caption{Two bright-dark soliton solutions to the coupled Sasa-Satsuma equation under parameters \( N=3, \alpha=1, \rho=1,  \epsilon_1=-1, \epsilon_2=1, p_1=2+\mathrm{i}, p_2=2\), \(C_1=C_2=1,\xi_{1,0}=\xi_{2,0}=\xi_{3,0}=0\) with (a)-(b) \(C_3=0\), and (c)-(d) \(C_3=1\).}
    \label{fig:two bright-dark soliton for N=3, case1}
\end{figure}

Moreover, we can also observe two parallel solitons. Note that solitons are indicated by lines \(\xi_1 + \xi_3 = 0\) and \(\xi_2 = 0\), where
\begin{equation}
    \xi_1+\xi_3=2 \Re\left(p_1\right) x+ \left(p_1^3+p_3^3 -12 \epsilon_2 \rho^2 \Re\left(p_1\right)\right)t+ \xi_{1,0}+\xi_{3,0}, 
\end{equation}
and
\begin{equation}
    \xi_2=p_2 x+ \left(p_2^3 -6 \epsilon_2 \rho^2 p_2\right)t+ \xi_{2,0}.
\end{equation}
Hence, if we take
\begin{equation}
    \alpha=2,\quad \rho=1, \quad \epsilon_1=\epsilon_2=-1, \quad p_1=2+\mathrm{i},\quad p_2=1,
\end{equation}
we have \(\xi_1 + \xi_3 = 4 \xi_2\), which means the solitons have the same speed. This kind of solution is known as the bound state soliton. As illustrated in Fig. \ref{fig:two bright-dark soliton for N=3, case2}, with \(C_1 = 0, C_2 = 1, C_3 = 1\), there are bound states between two bright-dark solitons 
(see \cref{fig:n3c2a,fig:n3c2b}).

\begin{figure}[H]
    \centering
\subfigure[]{\label{fig:n3c2a}  
        \includegraphics[width=30mm]{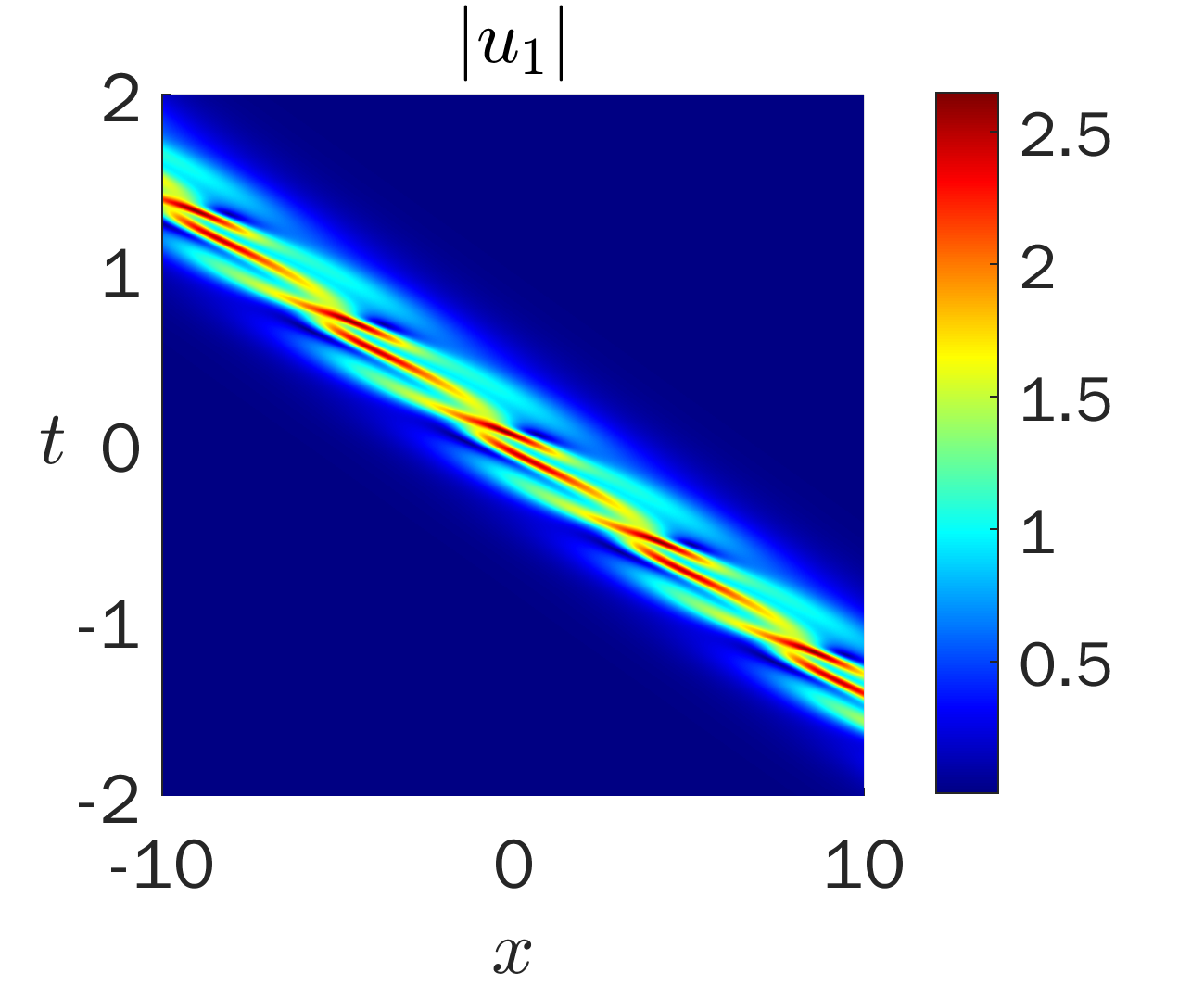}}
\subfigure[]{\label{fig:n3c2a3D}  
        \includegraphics[width=30mm]{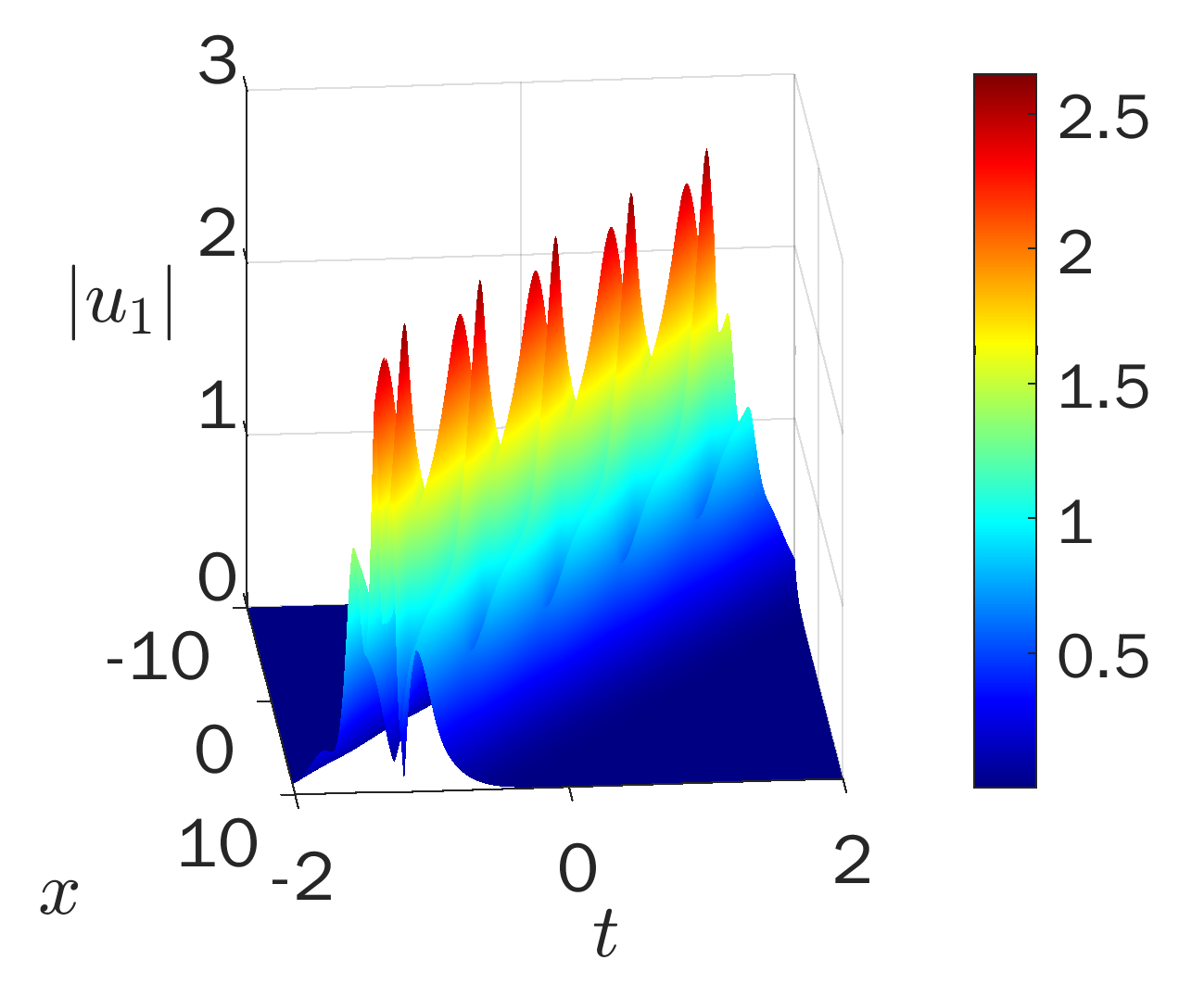}}
\subfigure[]{ \label{fig:n3c2b}     
        \includegraphics[width=30mm]{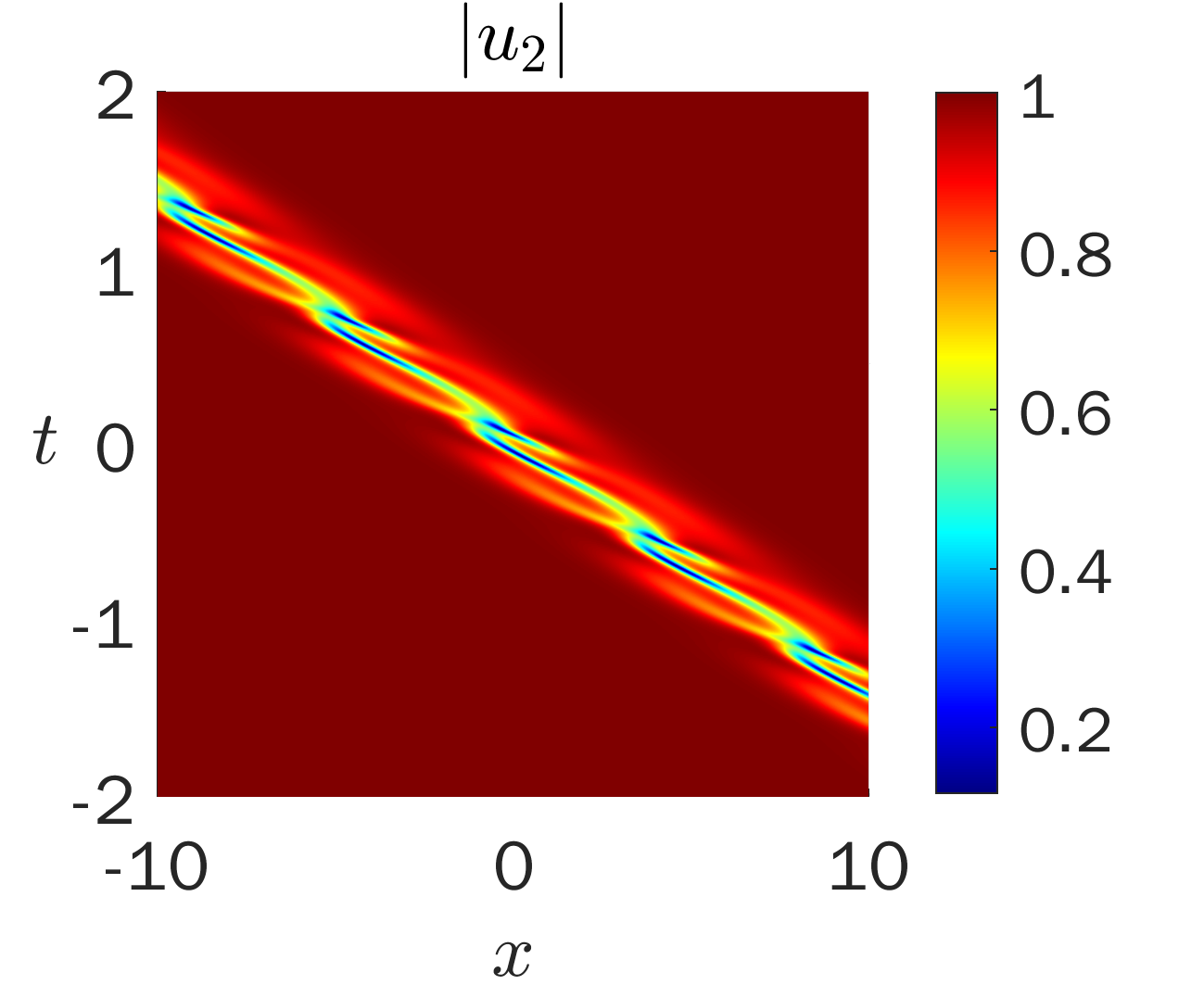}}
\subfigure[]{ \label{fig:n3c2b3D}     
        \includegraphics[width=30mm]{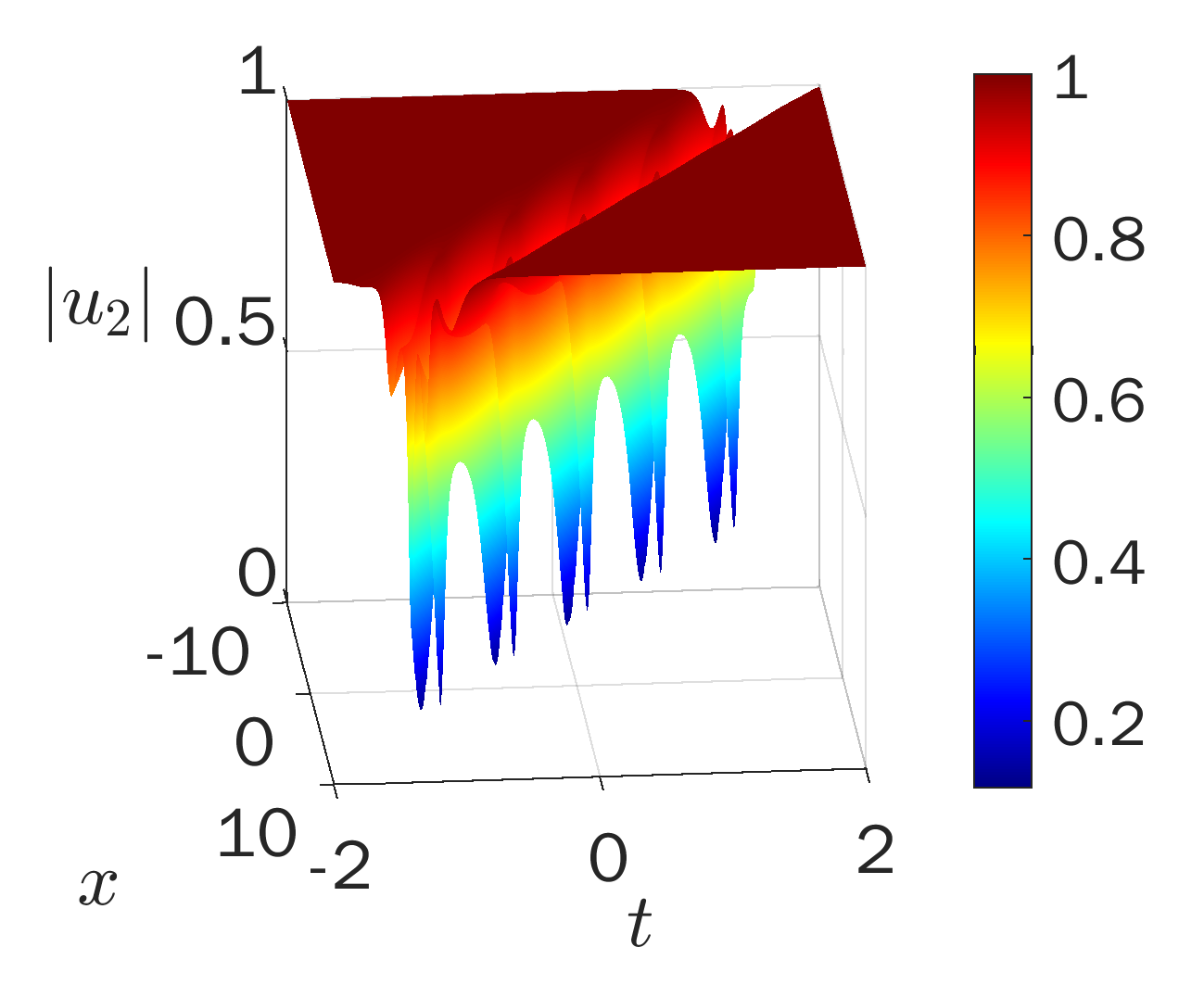}}
\caption{Bound states between two bright-dark soliton solutions to the coupled Sasa-Satsuma equation under parameters \( N=3, \alpha=2, \rho=1,  \epsilon_1= \epsilon_2=-1, p_1=2+\mathrm{i}, p_2=1\), 
\(C_1=0, C_2 = 1, C_3=1,\xi_{1,0}=\xi_{2,0}=\xi_{3,0}=0\).
}
\label{fig:two bright-dark soliton for N=3, case2}
\end{figure}

\subsection{Two bright-dark solitons for $N=4$}
When $N=4$ in \cref{thm:bd_css}, two bright-dark solitons to the coupled Sasa-Satsuma equation can be obtained, and the parameter restrictions \eqref{cmp_conj_rest} are
\begin{equation}\label{n4res}
    p_1 = p_4^*, \quad p_2 = p_3^*, \quad \xi_{1,0} = \xi_{4,0}^*, \quad \xi_{2,0} = \xi_{3,0}^*.
\end{equation}
Specifically, we set
\begin{equation}
    \alpha=1,\quad \rho=1, \quad \epsilon_1=-1, \quad \epsilon_2=1, \quad p_1=2.5-\mathrm{i},\quad p_2=3-\mathrm{i},
\end{equation}
then some figures can be obtained by setting different values of $C_i, i=1,2,3,4,$ as depicted in Fig. \ref{fig:two bright-dark soliton for N=4, case1}.

\begin{figure}[H]
    \centering
\subfigure[]{
        \includegraphics[width=30mm]{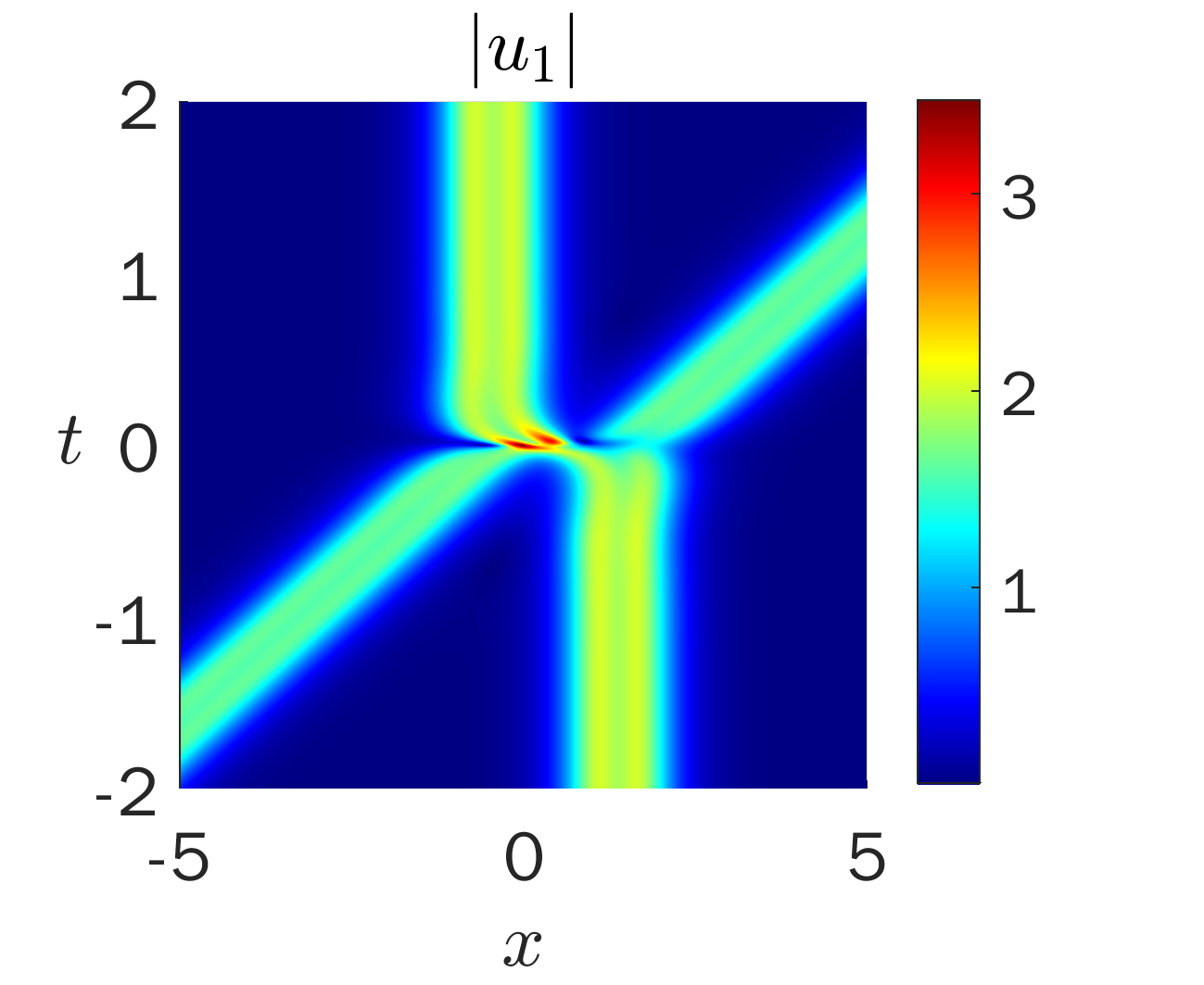}\label{fig:N=4_alpha=1_rho=1_eps=[-1,1]_p=[2.5+1.0i,3-1.0i,3+1.0i,2.5-1.0i]_C=[0,1,0,1]_u1_2D}}
\subfigure[]{ 
        \includegraphics[width=30mm]{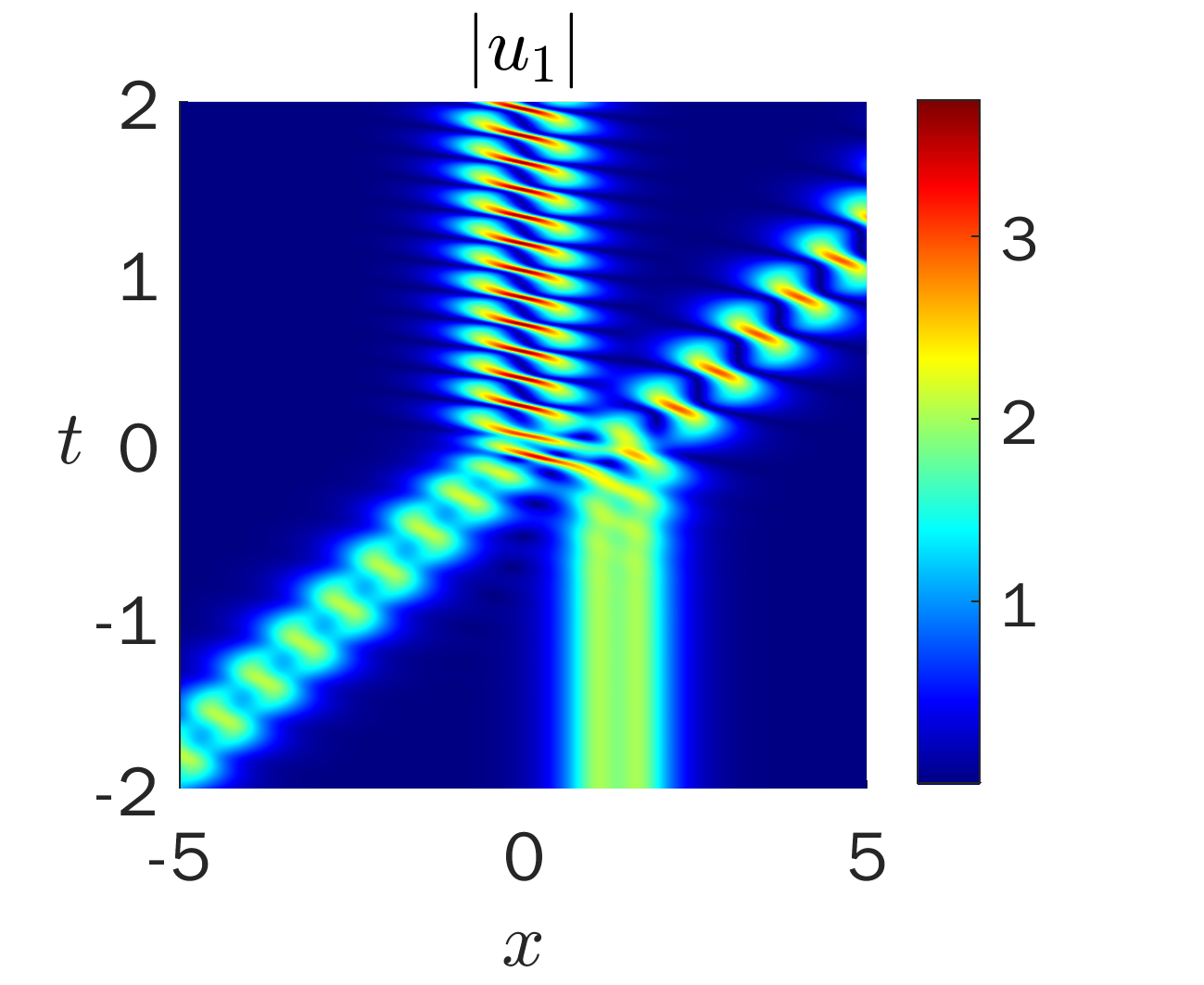}\label{fig:N=4_alpha=1_rho=1_eps=[-1,1]_p=[2.5+1.0i,3-1.0i,3+1.0i,2.5-1.0i]_C=[1,1,0,1]_u1_2D}}
\subfigure[]{
        \includegraphics[width=30mm]{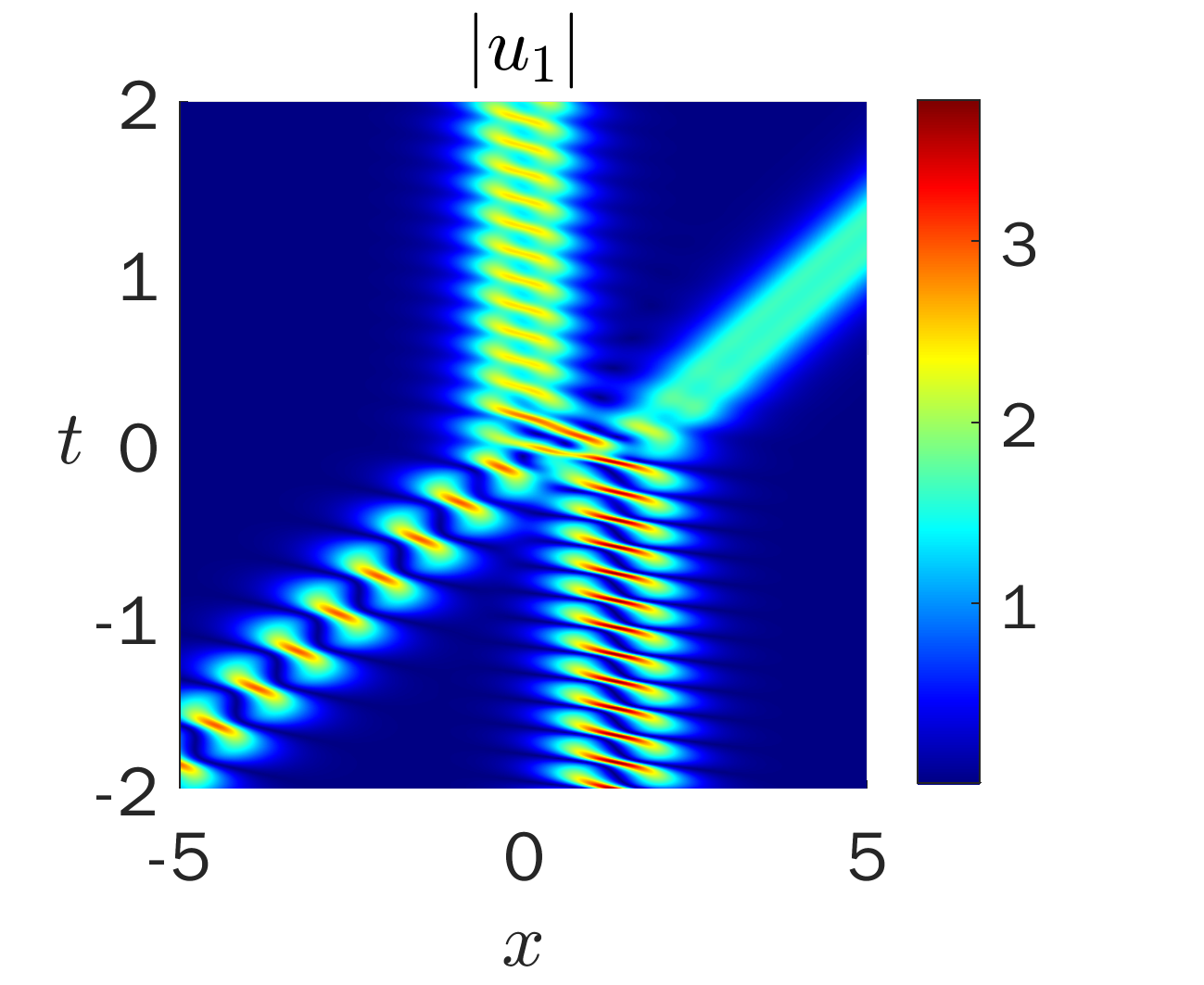}\label{fig:N=4_alpha=1_rho=1_eps=[-1,1]_p=[2.5+1.0i,3-1.0i,3+1.0i,2.5-1.0i]_C=[0,1,1,1]_u1_2D}}
\subfigure[]{ 
        \includegraphics[width=30mm]{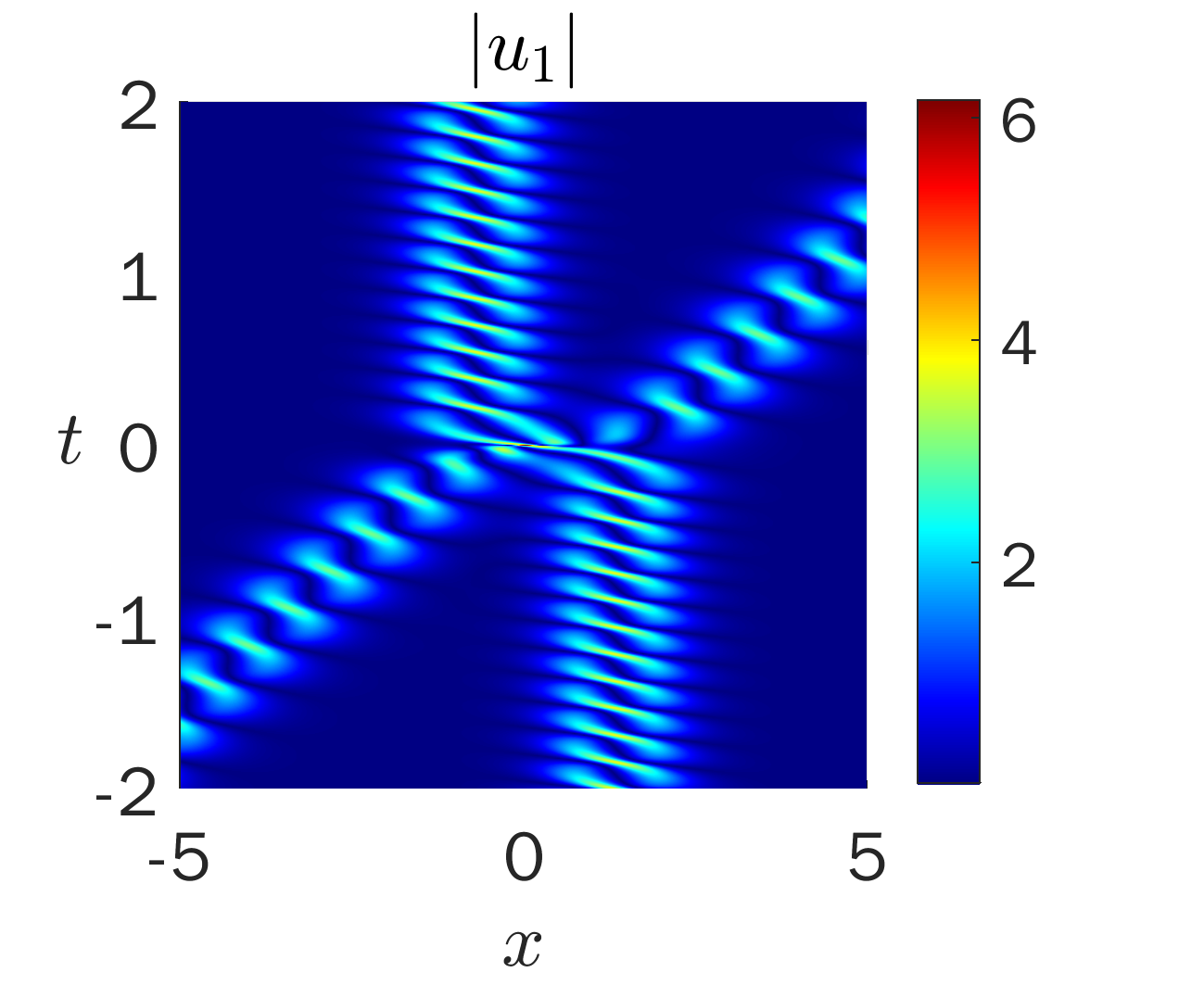}\label{fig:N=4_alpha=1_rho=1_eps=[-1,1]_p=[2.5+1.0i,3-1.0i,3+1.0i,2.5-1.0i]_C=[1,1,1,1]_u1_2D}}
\subfigure[]{
        \includegraphics[width=30mm]{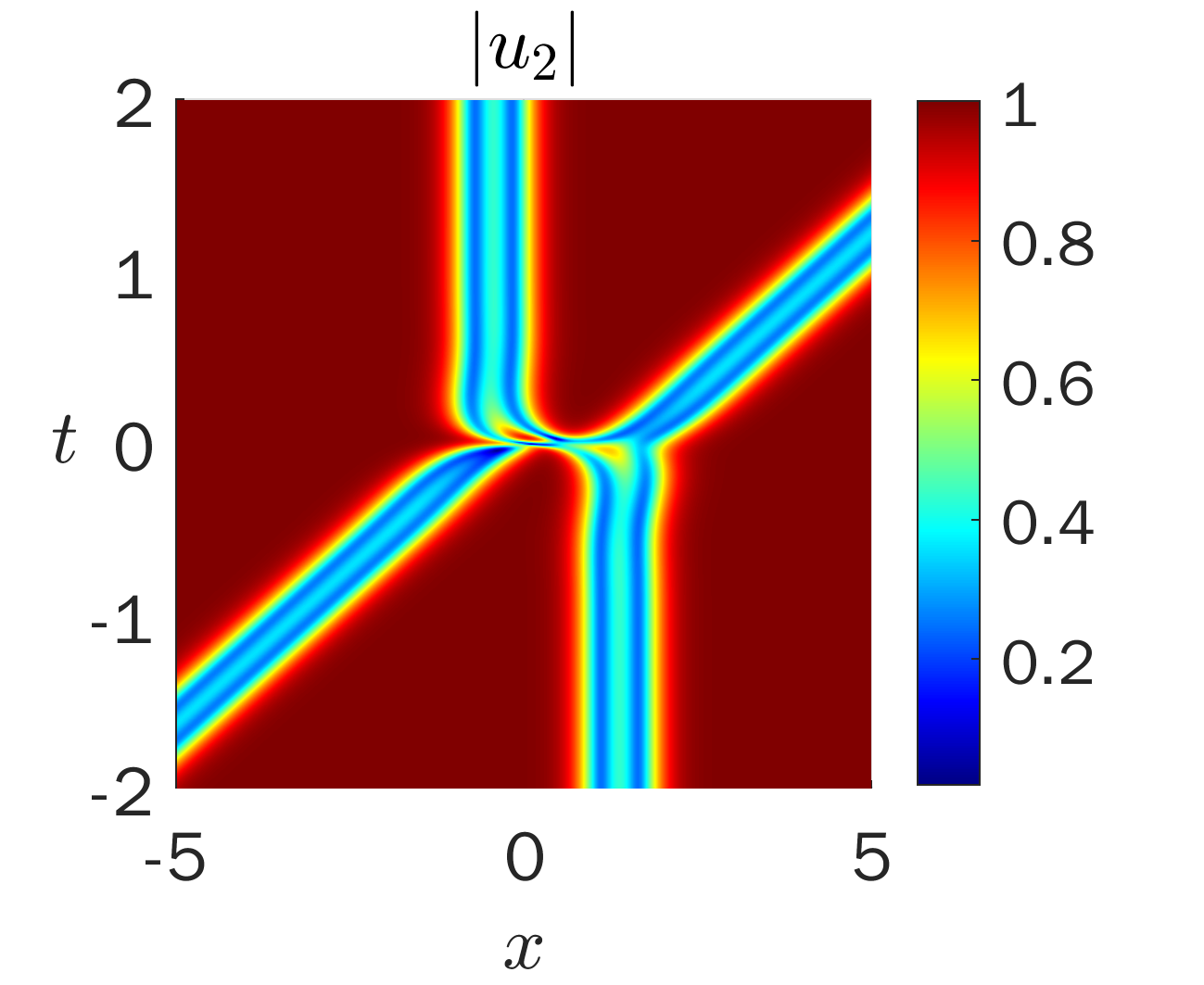}\label{fig:N=4_alpha=1_rho=1_eps=[-1,1]_p=[2.5+1.0i,3-1.0i,3+1.0i,2.5-1.0i]_C=[0,1,0,1]_u2_2D}}
\subfigure[]{ 
        \includegraphics[width=30mm]{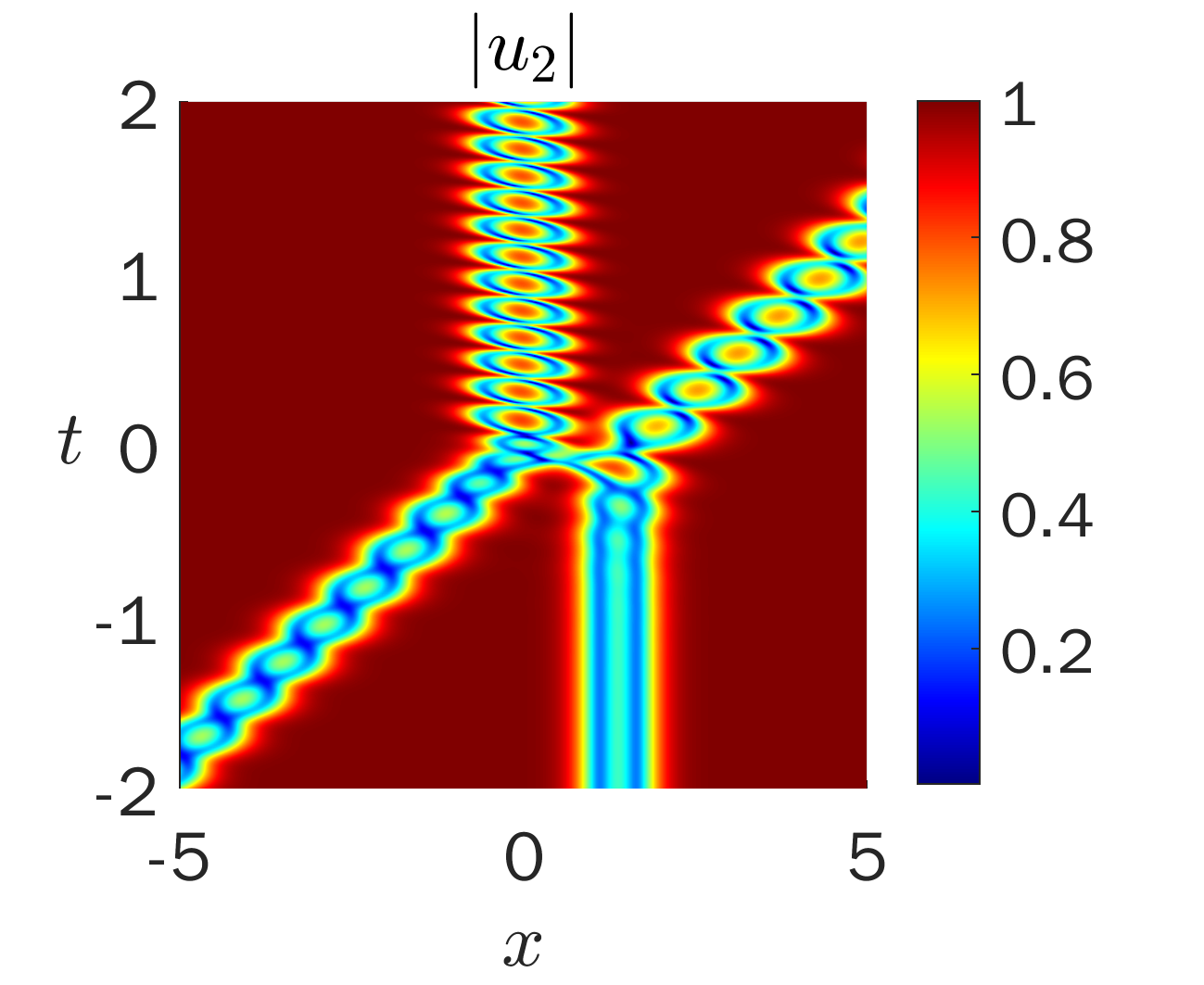}\label{fig:N=4_alpha=1_rho=1_eps=[-1,1]_p=[2.5+1.0i,3-1.0i,3+1.0i,2.5-1.0i]_C=[1,1,0,1]_u2_2D}}
\subfigure[]{
        \includegraphics[width=30mm]{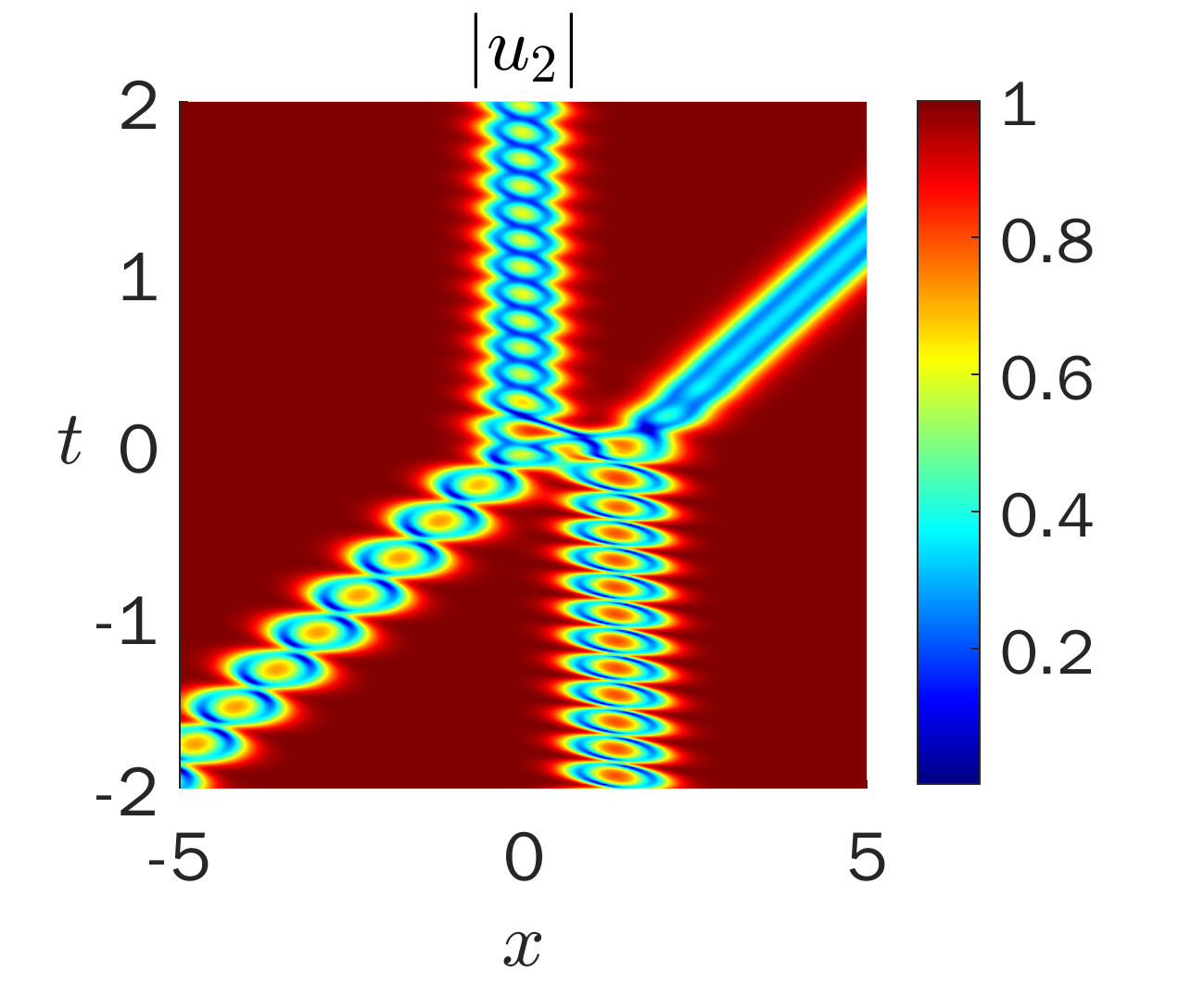}\label{fig:N=4_alpha=1_rho=1_eps=[-1,1]_p=[2.5+1.0i,3-1.0i,3+1.0i,2.5-1.0i]_C=[0,1,1,1]_u2_2D}}
\subfigure[]{ 
        \includegraphics[width=30mm]{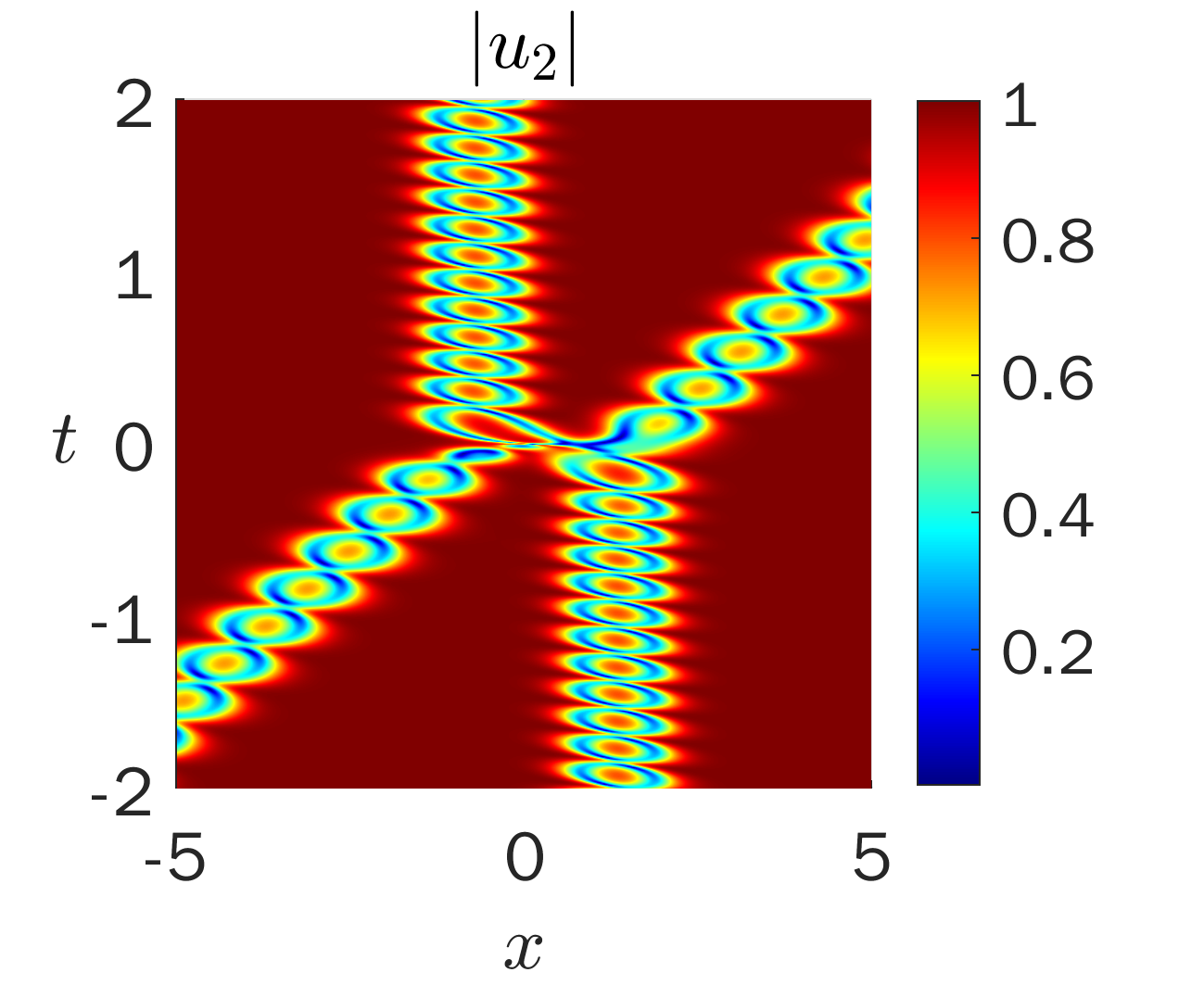}\label{fig:N=4_alpha=1_rho=1_eps=[-1,1]_p=[2.5+1.0i,3-1.0i,3+1.0i,2.5-1.0i]_C=[1,1,1,1]_u2_2D}}
    \caption{Two bright-dark soliton solutions to the coupled Sasa-Satsuma equation under parameters \( N=4, \alpha=1, \rho=1,  \epsilon_1=-1, \epsilon_2=1, p_1=2.5-\mathrm{i}, p_2=3-\mathrm{i}\), \(C_2=C_4=1,\xi_{1,0}=\xi_{2,0}=\xi_{3,0}=\xi_{4,0}=0\) with (a) and (e): \(C_1=C_3=0\); (b) and (f): \(C_1=1,C_3=0\); (c) and (g): \(C_1=0,C_3=1\); (d) and (h): \(C_1=C_3=1\).}
    \label{fig:two bright-dark soliton for N=4, case1}
\end{figure}

\begin{itemize}
\item Figs. \ref{fig:N=4_alpha=1_rho=1_eps=[-1,1]_p=[2.5+1.0i,3-1.0i,3+1.0i,2.5-1.0i]_C=[0,1,0,1]_u1_2D} and \ref{fig:N=4_alpha=1_rho=1_eps=[-1,1]_p=[2.5+1.0i,3-1.0i,3+1.0i,2.5-1.0i]_C=[0,1,0,1]_u2_2D}: \(C_1=C_3=0\), demonstrating elastic collisions between two single solitons. The solitons retain their shapes after the interaction.  
\item Figs. \ref{fig:N=4_alpha=1_rho=1_eps=[-1,1]_p=[2.5+1.0i,3-1.0i,3+1.0i,2.5-1.0i]_C=[1,1,0,1]_u1_2D} and \ref{fig:N=4_alpha=1_rho=1_eps=[-1,1]_p=[2.5+1.0i,3-1.0i,3+1.0i,2.5-1.0i]_C=[1,1,0,1]_u2_2D}: \(C_1=1, C_3=0\), illustrating shape-changing collisions between a single soliton and a breather. After the interaction, the system evolves into two distinct new breathers.  
\item Figs. \ref{fig:N=4_alpha=1_rho=1_eps=[-1,1]_p=[2.5+1.0i,3-1.0i,3+1.0i,2.5-1.0i]_C=[0,1,1,1]_u1_2D} and \ref{fig:N=4_alpha=1_rho=1_eps=[-1,1]_p=[2.5+1.0i,3-1.0i,3+1.0i,2.5-1.0i]_C=[0,1,1,1]_u2_2D}: \(C_1=0, C_3=1\), showing shape-changing collisions between two breathers. The interaction leads to the formation of a single bright-dark soliton and a new breather.  
\item Figs. \ref{fig:N=4_alpha=1_rho=1_eps=[-1,1]_p=[2.5+1.0i,3-1.0i,3+1.0i,2.5-1.0i]_C=[1,1,1,1]_u1_2D} and \ref{fig:N=4_alpha=1_rho=1_eps=[-1,1]_p=[2.5+1.0i,3-1.0i,3+1.0i,2.5-1.0i]_C=[1,1,1,1]_u2_2D}: \(C_2=C_3=1\), highlighting elastic collisions between two breathers, where both breathers retain their original forms after the collision.  
\end{itemize}

Furthermore, for the \(N=4\) case, we can also identify bound states of two bright-dark solitons or bound states formed by a bright-dark soliton and a breather. Fig. \ref{fig:two bright-dark soliton for N=4, case2} presents two such examples, generated using the following parameters
\begin{equation}\label{n4bdstate_parameters}
\begin{aligned}
    &N=4, \quad \alpha=2, \quad\rho=1, \quad\epsilon_1=\epsilon_2=-1, \quad p_1=1-\mathrm{i}, \quad p_2=2-\sqrt{2}\mathrm{i}, \\
    &C_1 = 0, \quad C_2 = 1,
\quad C_4 = 1.
\end{aligned}
\end{equation}
These examples demonstrate different dynamics depending on the values of \(C_3\). Specifically, \cref{n4bdstate_a,n4bdstate_b} illustrate the dynamics when \(C_3 = 0\), corresponding to the bound state of two bright-dark solitons, while \cref{n4bdstate_c,n4bdstate_d} show the dynamics when \(C_3 = 1\), representing the bound state of the soliton-breather interaction.

\begin{figure}[H]
    \centering
    \subfigure[]{\label{n4bdstate_a}
        \includegraphics[width=30mm]{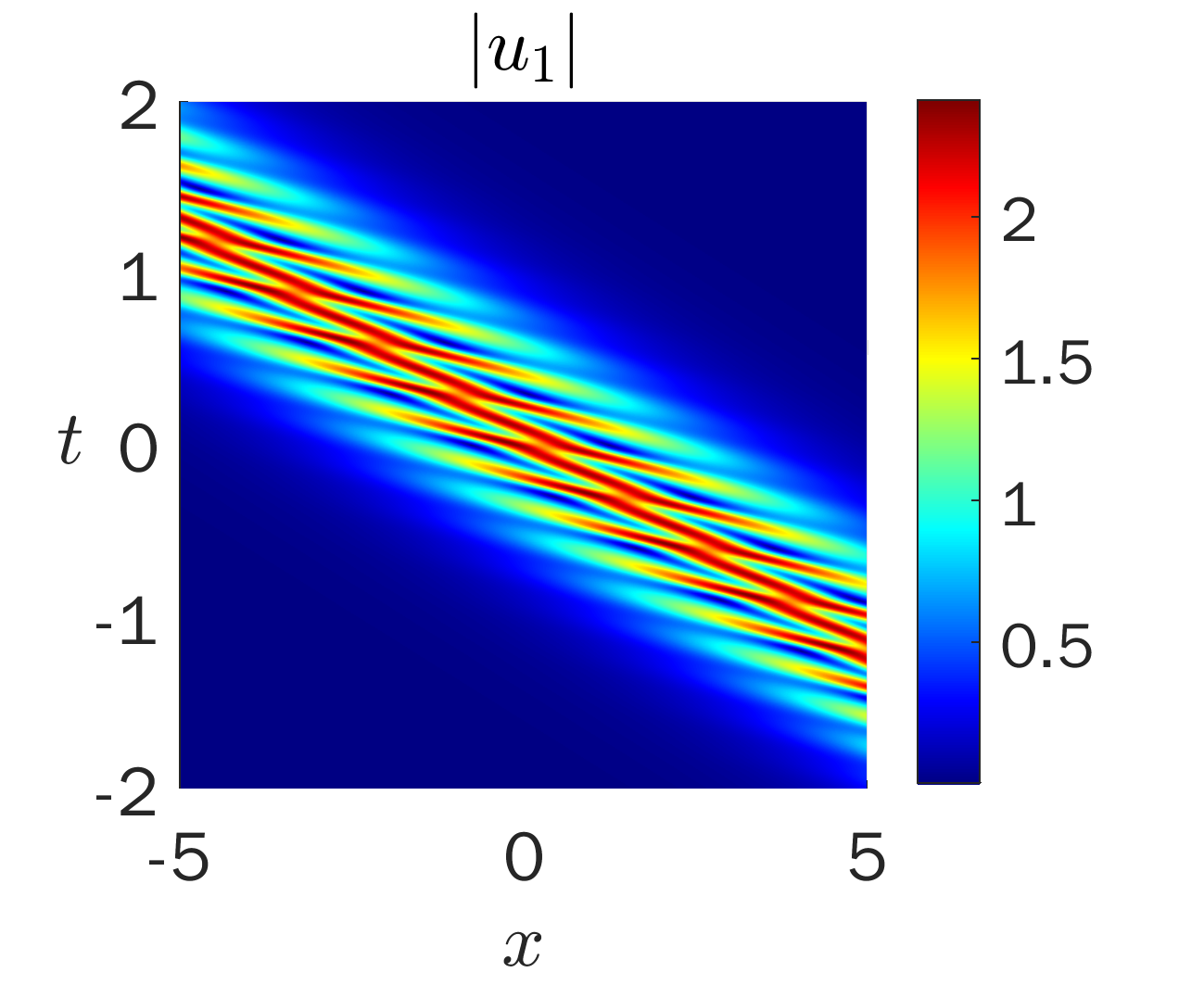}}
    \subfigure[]{\label{n4bdstate_b}
        \includegraphics[width=30mm]{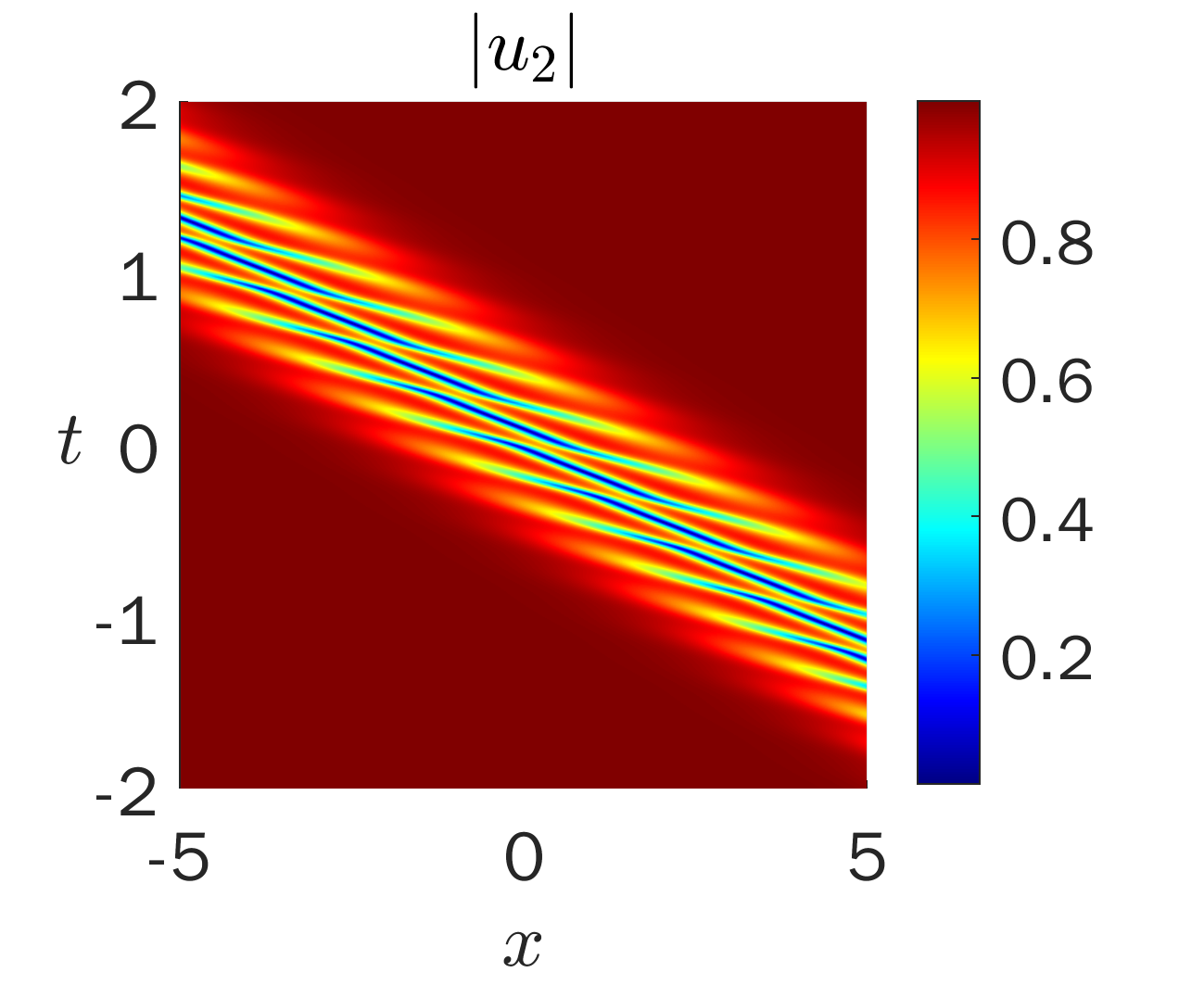}}
    \subfigure[]{\label{n4bdstate_c}
        \includegraphics[width=30mm]{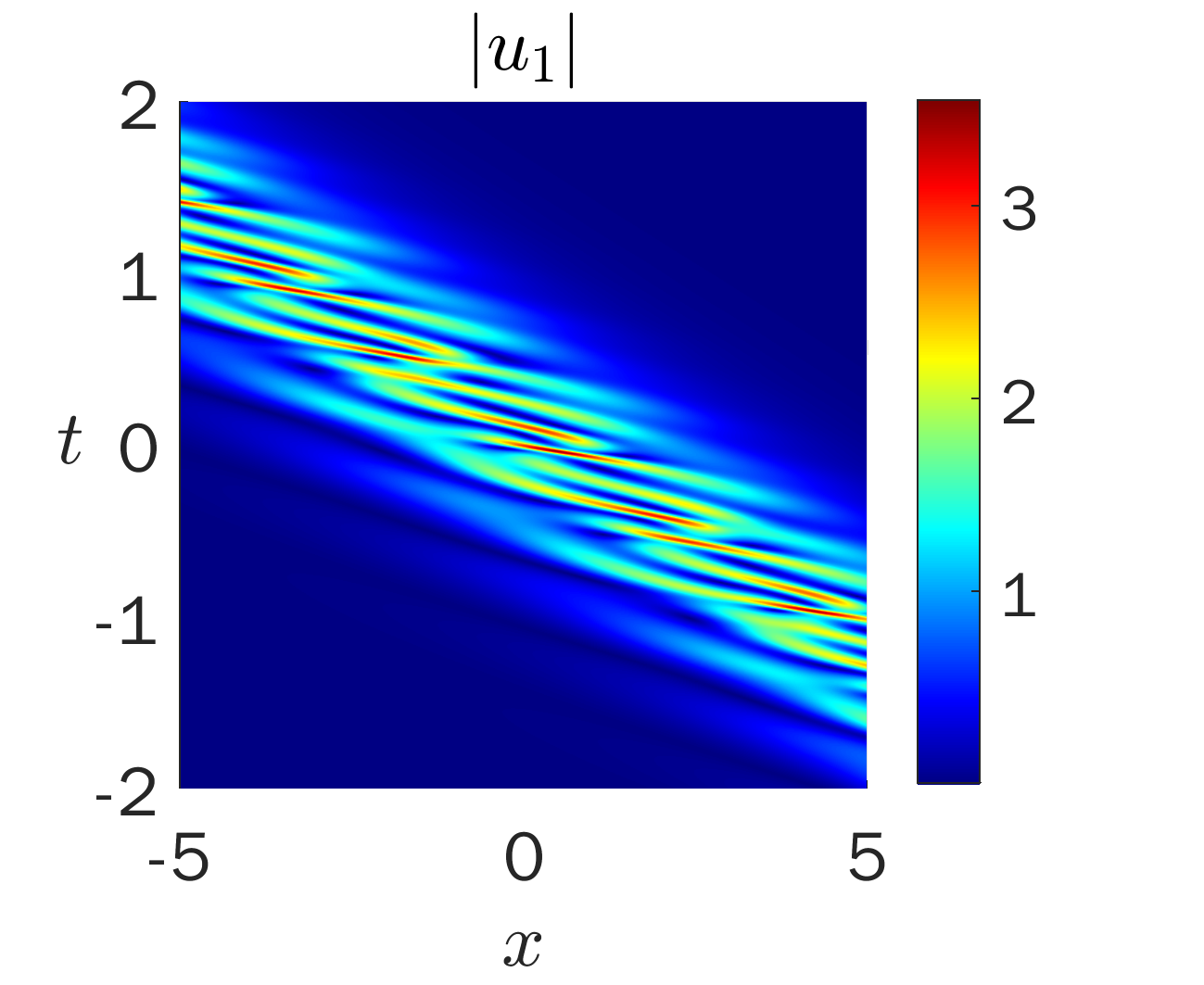}}
    \subfigure[]{\label{n4bdstate_d}
        \includegraphics[width=30mm]{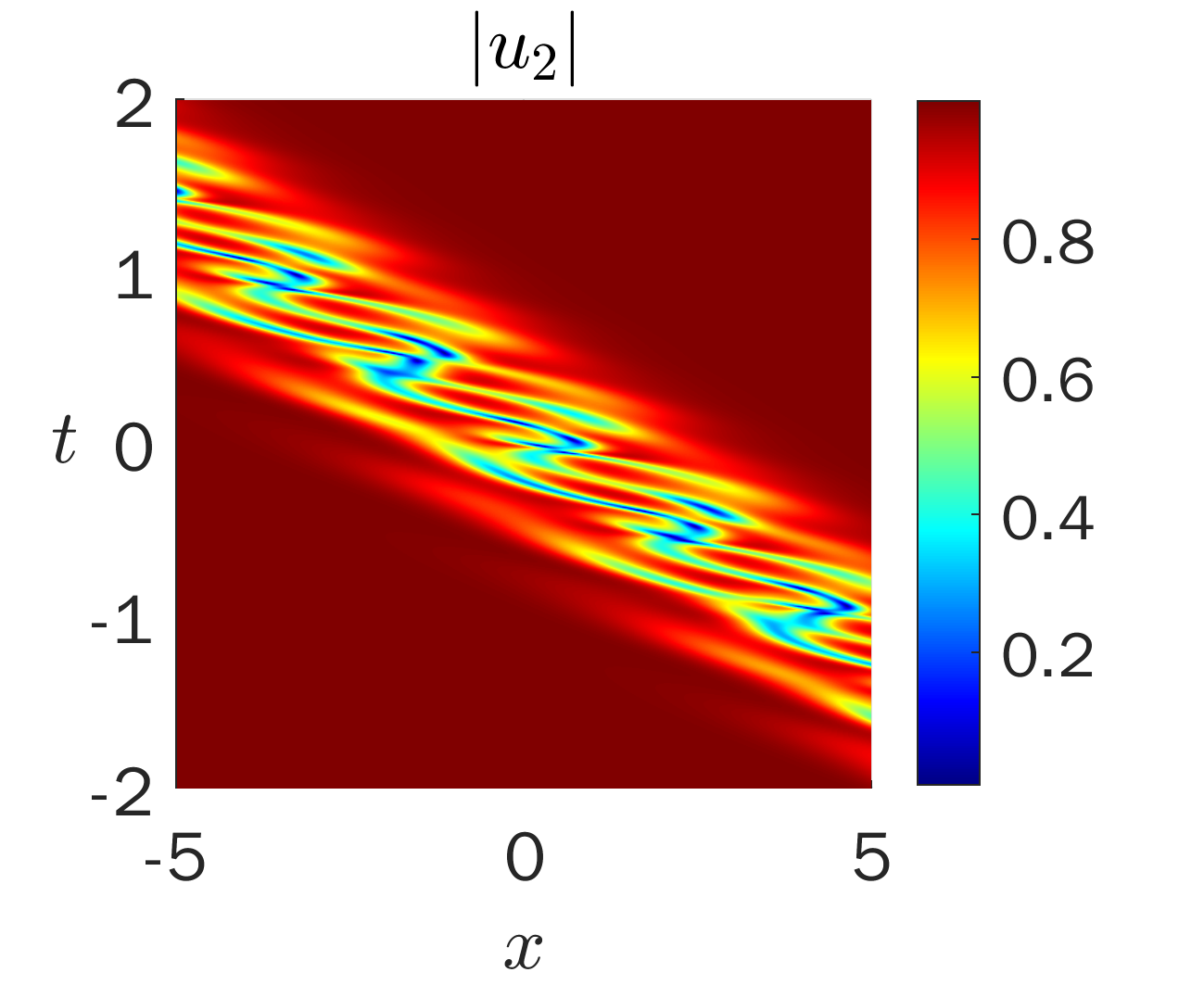}}
    \caption{Bound states between two bright-dark soliton solutions or soliton-breather interaction to the coupled Sasa-Satsuma equation under parameters \eqref{n4bdstate_parameters} with \(\xi_{1,0}=\xi_{2,0}=\xi_{3,0}=\xi_{4,0}=0\).  Panels (a) and (b)  correspond to the case \(C_3=0\), while panels  (c) and (d) illustrate the case \(C_3=1\).
    }
    \label{fig:two bright-dark soliton for N=4, case2}
\end{figure}

\section{Derivation of the bright-dark soliton solution}\label{sect:bd}

\subsection{Proof of \cref{thm:2b2d_4cmkdv}} \label{sec:deriva_2b2d_4cmkdv}

We begin by introducing a set of bilinear equations from the KP-Toda hierarchy (\cref{kp_toda}) and subsequently reduce them to the bilinear equations corresponding to the four-component Hirota equation (\cref{dim_reduction_4cmkdv,cmp_reduction_4cmkdv}). This process ultimately leads to the derivation of two-bright-two-dark soliton solutions for \eqref{4cmkdv} with $n=4$.

\begin{lemma}\label{kp_toda}
The bilinear equations
{\allowdisplaybreaks
\begin{align}
    & \left(D_{x_1}^3+3D_{x_1}D_{x_2}-4D_{x_3}\right)\tau_{k,l}^{(1)} \cdot \tau_{k,l}^{(0)}=0,
    \label{KP_bright_dark_1}\\
    \newsubeqblock
    \mysubeq & \left(D_{x_{-1}^{(1)}}\left(D_{x_1}^2 - D_{x_2}\right) - 4 (D_{x_1} -a )\right)\tau_{k,l}^{(1)}\cdot \tau_{k,l}^{(0)} - 4 a \tau_{k+1,l}^{(1)} \tau_{k-1,l}^{(0)} = 0,\label{KP_bright_dark_2_1}\\
    \mysubeq & \left(D_{x_{-1}^{(2)}}\left(D_{x_1}^2 - D_{x_2}\right) - 4 (D_{x_1} -b )\right)\tau_{k,l}^{(1)}\cdot \tau_{k,l}^{(0)} - 4 b \tau_{k,l+1}^{(1)} \tau_{k,l-1}^{(0)} = 0,\label{KP_bright_dark_2_2}\\
    \mysubeq & D_{y_1^{(1)}}\left(D_{x_1}^2-D_{x_2}\right)\tau_{k,l}^{(1)} \cdot \tau_{k,l}^{(0)}=0,\label{KP_bright_dark_2_3}\\
    \mysubeq & D_{y_1^{(2)}}\left(D_{x_1}^2-D_{x_2}\right)\tau_{k,l}^{(1)} \cdot \tau_{k,l}^{(0)}= -4 \tau_{k,l}^{(1,2)} \bar{\tau}_{k,l}^{(2)},\label{KP_bright_dark_2_4}\\
    & \left(D_{x_1}^3+3D_{x_1}D_{x_2}-4D_{x_3}\right)\tau_{k,l}^{(2)} \cdot \tau_{k,l}^{(0)}=0,
    \label{KP_bright_dark_3}\\
    \newsubeqblock 
    \mysubeq & \left(D_{x_{-1}^{(1)}}\left(D_{x_1}^2 - D_{x_2}\right) - 4 (D_{x_1} -a )\right)\tau_{k,l}^{(2)}\cdot \tau_{k,l}^{(0)} - 4 a \tau_{k+1,l}^{(2)} \tau_{k-1,l}^{(0)} = 0,\label{KP_bright_dark_4_1}\\
    \mysubeq & \left(D_{x_{-1}^{(2)}}\left(D_{x_1}^2 - D_{x_2}\right) - 4 (D_{x_1} -b )\right)\tau_{k,l}^{(2)}\cdot \tau_{k,l}^{(0)} - 4 b \tau_{k,l+1}^{(2)} \tau_{k,l-1}^{(0)} = 0,\label{KP_bright_dark_4_2}\\
    \mysubeq & D_{y_1^{(1)}}\left(D_{x_1}^2-D_{x_2}\right)\tau_{k,l}^{(2)} \cdot \tau_{k,l}^{(0)}= 4 \tau_{k,l}^{(1,2)} \bar{\tau}_{k,l}^{(2)},\label{KP_bright_dark_4_3}\\
    \mysubeq & D_{y_1^{(2)}}\left(D_{x_1}^2-D_{x_2}\right)\tau_{k,l}^{(2)} \cdot \tau_{k,l}^{(0)}=0,\label{KP_bright_dark_4_4}\\
    & \left(D_{x_1}^2-D_{x_2}+2 a D_{x_1}\right) \tau_{k+1,l}^{(0)} \cdot \tau_{k,l}^{(0)}=0, \label{KP_bright_dark_5}\\
    & \left(D_{x_1}^3+3 D_{x_1} D_{x_2}-4 D_{x_3}+3 a\left(D_{x_1}^2+D_{x_2}\right)+6 a^2 D_{x_1}\right) \tau_{k+1,l}^{(0)} \cdot \tau_{k,l}^{(0)}=0, \label{KP_bright_dark_6}\\
    \newsubeqblock
    \mysubeq & \left(D_{x_{-1}^{(1)}}\left(D_{x_1}^2-D_{x_2}+2 a D_{x_1}\right)-4D_{x_1}\right) \tau_{k+1,l}^{(0)} \cdot \tau_{k,l}^{(0)} =0, \label{KP_bright_dark_7_1}\\
     & \left(D_{x_{-1}^{(2)}}\left(D_{x_1}^2-D_{x_2}+2 a D_{x_1}\right)-4\left(D_{x_1} + a - b\right) \right) \tau_{k+1,l}^{(0)} \cdot \tau_{k,l}^{(0)} \nonumber\\
    \mysubeq &\qquad+ 4(a-b)\tau_{k+1,l+1}^{(0)} \cdot \tau_{k,l-1}^{(0)} =0, \label{KP_bright_dark_7_2}\\
    \mysubeq & \left(D_{y_1^{(1)}}\left(D_{x_1}^2 - D_{x_2} + 2a D_{x_1}\right)\right)\tau_{k+1,l}^{(0)} \cdot \tau_{k,l}^{(0)} + 4 a \tau_{k+1,l}^{(1)} \bar{\tau}_{k,l}^{(1)} = 0, \label{KP_bright_dark_7_3}\\
    \mysubeq & \left(D_{y_1^{(2)}}\left(D_{x_1}^2 - D_{x_2} + 2a D_{x_1}\right)\right)\tau_{k+1,l}^{(0)} \cdot \tau_{k,l}^{(0)} + 4 a \tau_{k+1,l}^{(2)} \bar{\tau}_{k,l}^{(2)} = 0, \label{KP_bright_dark_7_4}\\
    & \left(D_{x_1}^2-D_{x_2}+2 b D_{x_1}\right) \tau_{k,l+1}^{(0)} \cdot \tau_{k,l}^{(0)}=0, \label{KP_bright_dark_8}\\
    & \left(D_{x_1}^3+3 D_{x_1} D_{x_2}-4 D_{x_3}+3 b\left(D_{x_1}^2+D_{x_2}\right)+6 b^2 D_{x_1}\right) \tau_{k,l+1}^{(0)} \cdot \tau_{k,l}^{(0)}=0, \label{KP_bright_dark_9}\\
    \newsubeqblock
     & \left(D_{x_{-1}^{(1)}}\left(D_{x_1}^2-D_{x_2}+2 b D_{x_1}\right)-4\left(D_{x_1} + b - a\right) \right) \tau_{k,l+1}^{(0)} \cdot \tau_{k,l}^{(0)} \nonumber\\
    \mysubeq & \qquad+ 4(b-a)\tau_{k+1,l+1}^{(0)} \cdot \tau_{k-1,l}^{(0)} =0, \label{KP_bright_dark_10_1}\\
    \mysubeq & \left(D_{x_{-1}^{(2)}}\left(D_{x_1}^2-D_{x_2}+2 b D_{x_1}\right)-4D_{x_1}\right) \tau_{k,l+1}^{(0)} \cdot \tau_{k,l}^{(0)} =0, \label{KP_bright_dark_10_2}\\
    \mysubeq & \left(D_{y_1^{(1)}}\left(D_{x_1}^2 - D_{x_2} + 2b D_{x_1}\right)\right)\tau_{k,l+1}^{(0)} \cdot \tau_{k,l}^{(0)} + 4 b \tau_{k,l+1}^{(1)} \bar{\tau}_{k,l}^{(1)} = 0, \label{KP_bright_dark_10_3}\\
    \mysubeq & \left(D_{y_1^{(2)}}\left(D_{x_1}^2 - D_{x_2} + 2b D_{x_1}\right)\right)\tau_{k,l+1}^{(0)} \cdot \tau_{k,l}^{(0)}  + 4 b \tau_{k,l+1}^{(2)} \bar{\tau}_{k,l}^{(2)} = 0, \label{KP_bright_dark_10_4}\\
    \newsubeqblock
    \mysubeq & D_{y_1^{(1)}}D_{x_1}\tau_{k,l}^{(0)} \cdot \tau_{k,l}^{(0)}=-2\tau_{k,l}^{(1)} \bar{\tau}_{k,l}^{(1)},\label{KP_bright_dark_11_1}\\
    \mysubeq & D_{y_1^{(2)}}D_{x_1}\tau_{k,l}^{(0)} \cdot \tau_{k,l}^{(0)}=-2\tau_{k,l}^{(2)} \bar{\tau}_{k,l}^{(2)},\label{KP_bright_dark_11_2}\\
    \mysubeq & \left(D_{x_{-1}^{(1)}} D_{x_1}-2\right) \tau_{k,l}^{(0)} \cdot \tau_{k,l}^{(0)}=-2 \tau_{k+1,l}^{(0)} \tau_{k-1,l}^{(0)}, \label{KP_bright_dark_11_3} \\
    \mysubeq & \left(D_{x_{-1}^{(2)}} D_{x_1}-2\right) \tau_{k,l}^{(0)} \cdot \tau_{k,l}^{(0)}=-2 \tau_{k,l+1}^{(0)} \tau_{k,l-1}^{(0)}, \label{KP_bright_dark_11_4} \\
    & D_{x_1}\tau_{k,l}^{(1)} \cdot \tau_{k,l}^{(2)} = \tau_{k,l}^{(1,2)} \tau_{k,l}^{(0)}, \label{KP_bright_dark_12}\\
    & \left(D_{x_1} + a\right)\tau_{k+1,l}^{(0)} \cdot \tau_{k,l}^{(1)} = a \tau_{k+1,l}^{(1)} \tau_{k,l}^{(0)},\label{KP_bright_dark_13}\\
    & \left(D_{x_1} + a\right)\tau_{k+1,l}^{(0)} \cdot \tau_{k,l}^{(2)} = a \tau_{k+1,l}^{(2)} \tau_{k,l}^{(0)},\label{KP_bright_dark_14}\\
    & \left(D_{x_1} + b\right)\tau_{k,l+1}^{(0)} \cdot \tau_{k,l}^{(1)} = b \tau_{k,l+1}^{(1)} \tau_{k,l}^{(0)},\label{KP_bright_dark_15}\\
    & \left(D_{x_1} + b\right)\tau_{k,l+1}^{(0)} \cdot \tau_{k,l}^{(2)} = b \tau_{k,l+1}^{(2)} \tau_{k,l}^{(0)},\label{KP_bright_dark_16}\\
    & \left(D_{x_1} + a-b\right)\tau_{k+1,l}^{(0)} \cdot \tau_{k,l+1}^{(0)} = (a-b) \tau_{k+1,l+1}^{(0)} \tau_{k,l}^{(0)},\label{KP_bright_dark_17}
\end{align}
}
where \(k,l=1,\ldots, N\), are satisfied by the following \(\tau\)-functions
\begin{align}\label{tau_bright_dark}
    \begin{split}
        &\tau_{k,l}^{(0)} = \left|M_{k,l}\right|,
        \quad
        \tau_{k,l}^{(1)} = \begin{vmatrix}
            M_{k,l} & \Phi_{k,l} \\
            -\left(\bar{\Psi}\right)^T & 0
        \end{vmatrix},
        \quad 
        \bar{\tau}_{k,l}^{(1)} = \begin{vmatrix}
            M_{k,l} & \Psi \\
            -\left(\bar{\Phi}_{k,l}\right)^T & 0
        \end{vmatrix},
        \\
        &\tau_{k,l}^{(2)} = \begin{vmatrix}
            M_{k,l} & \Phi_{k,l} \\
            -\left(\bar{\Upsilon}\right)^T & 0
        \end{vmatrix},
        \quad 
        \bar{\tau}_{k,l}^{(2)} = \begin{vmatrix}
            M_{k,l} & \Upsilon \\
            -\left(\bar{\Phi}_{k,l}\right)^T & 0
        \end{vmatrix},
        \quad
        \tau_{k,l}^{(1,2)} = \begin{vmatrix}
            M_{k,l} & \Phi_{k,l} & \partial_{x_1} \Phi_{k,l} \\
            -\bar{\Upsilon}^T & 0 & 0 \\
            -\bar{\Psi}^T & 0 & 0 
        \end{vmatrix},
    \end{split}
\end{align}
where \(M_{k,l} = \left(m_{ij}^{k,l}\right)\) is an \(N \times N\) matrix, and \(\Phi_{k,l}\), \(\bar{\Phi}_{k,l}\), \(\Psi\), \(\bar{\Psi}\), \(\Upsilon\), \(\bar{\Upsilon}\) are vectors defined as
{\allowdisplaybreaks\begin{align}
    m_{ij}^{k,l}&=\frac{1}{p_i+\bar{p}_j}\left(-\frac{p_i - a}{\bar{p}_j + a}\right)^k \left(-\frac{p_i - b}{\bar{p}_j + b}\right)^l e^{\xi_i+\bar{\xi}_j}+\frac{\tilde{C}_i \bar{C}_j}{q_i+\bar{q}_j}e^{\eta_i+\bar{\eta}_j}+\frac{\tilde{D}_i \bar{D}_j}{r_i+\bar{r}_j}e^{\chi_i+\bar{\chi}_j},\\
    \Phi_{k,l} &= \left( \left(1-p_1 a^{-1}\right)^k \left(1-p_1 b^{-1}\right)^l e^{\xi _{1}}, \left(1-p_2 a^{-1}\right)^k \left(1-p_2 b^{-1}\right)^l e^{\xi _2},\ldots, \right. \nonumber\\
    &\left.\left(1-p_N a^{-1}\right)^k \left(1-p_N b^{-1}\right)^l e^{\xi _{N}}\right)^T, \\
    \bar{\Phi}_{k,l} &= \left( \left(1+\bar{p}_1 a^{-1}\right)^{-k} \left(1+\bar{p}_1 b^{-1}\right)^{-l} e^{\bar{\xi}_{1}}, \left(1+\bar{p}_2 a^{-1}\right)^{-k} \left(1+\bar{p}_2 b^{-1}\right)^{-l} e^{\bar{\xi}_2},\ldots ,\right. \nonumber\\
    &\left. \left(1+\bar{p}_N a^{-1}\right)^{-k} \left(1+\bar{p}_N b^{-1}\right)^{-l} e^{\bar{\xi}_{N}}\right)^T, \quad\\
    \Psi &=\left(\tilde{C}_1 e^{\eta _{1}}, \tilde{C}_2 e^{\eta _2},\ldots, \tilde{C}_N e^{\eta _{N}}\right)^T,\quad
    \bar{\Psi}=\left(\bar{C}_1 e^{\bar{\eta}_1}, \bar{C}_2 e^{\bar{\eta}_2},\ldots, \bar{C}_N e^{\bar{\eta}_{N}}\right)^T,\\
    \Upsilon &=\left(\tilde{D}_1 e^{\chi _{1}}, \tilde{D}_2 e^{\chi _2},\ldots, \tilde{D}_N e^{\chi _{N}}\right)^T,\quad
    \bar{\Upsilon}=\left(\bar{D}_1 e^{\bar{\chi}_1}, \bar{D}_2 e^{\bar{\chi}_2},\ldots, \bar{D}_N e^{\bar{\chi}_{N}}\right)^T,\\
    \xi_i&=p_i x_1 + p_i^2 x_2 + p_i^3 x_3 +\frac{1}{p_i - a} x_{-1}^{(1)} +\frac{1}{p_i - b} x_{-1}^{(2)} +\xi_{i0},\\
    \bar{\xi}_i&=\bar{p}_i x_1 - \bar{p}_i^2 x_2 + \bar{p}_i^3 x_3 +\frac{1}{\bar{p}_i + a} x_{-1}^{(1)}+\frac{1}{\bar{p}_i + b} x_{-1}^{(2)}+\bar{\xi}_{i0},\\
    \eta_i&=q_i y_1^{(1)},\quad \bar{\eta}_i=\bar{q}_i y_1^{(1)}, \quad \chi_i=r_i y_1^{(2)},\quad \bar{\chi}_i=\bar{r}_i y_1^{(2)}.
\end{align}}
\end{lemma}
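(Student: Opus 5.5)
We prove \cref{kp_toda} with the standard Gram-determinant technique of the KP reduction method: every bilinear identity is reduced to a Jacobi (Plücker-type) identity for bordered determinants. The first step is to record how the entries and the bordering vectors change under differentiation and under the discrete shifts. Writing $\varphi_i^{(k,l)}=(1-p_i a^{-1})^k(1-p_i b^{-1})^l e^{\xi_i}$ and $\psi_j^{(k,l)}=(1+\bar p_j a^{-1})^{-k}(1+\bar p_j b^{-1})^{-l}e^{\bar\xi_j}$, the three pieces of $m_{ij}^{k,l}$ satisfy
\begin{align*}
\partial_{x_1} m_{ij}^{k,l} &= \varphi_i^{(k,l)}\psi_j^{(k,l)}, \qquad \partial_{x_2} m_{ij}^{k,l} = (\partial_{x_1}\varphi_i^{(k,l)})\psi_j^{(k,l)} - \varphi_i^{(k,l)}\partial_{x_1}\psi_j^{(k,l)},\\
\partial_{x_3} m_{ij}^{k,l} &= (\partial_{x_1}^2\varphi_i^{(k,l)})\psi_j^{(k,l)} - (\partial_{x_1}\varphi_i^{(k,l)})(\partial_{x_1}\psi_j^{(k,l)}) + \varphi_i^{(k,l)}\partial_{x_1}^2\psi_j^{(k,l)},\\
\partial_{x_{-1}^{(1)}} m_{ij}^{k,l} &= -a^{-1}\varphi_i^{(k-1,l)}\psi_j^{(k+1,l)}, \qquad m_{ij}^{k+1,l} = m_{ij}^{k,l} - a^{-1}\varphi_i^{(k,l)}\psi_j^{(k+1,l)} \ \text{(up to the standard normalization)},\\
\partial_{y_1^{(1)}} m_{ij}^{k,l} &= \tilde C_i e^{\eta_i}\,\bar C_j e^{\bar\eta_j}.
\end{align*}
The relations for $x_{-1}^{(2)}$, for the $l$-shift (with $b$ in place of $a$), and for $y_1^{(2)}$ are analogous. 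Only the first exponential part depends on $x_n$, $x_{-1}^{(\cdot)}$, $k$ and $l$, while the $C$- and $D$-parts depend only on $y_1^{(1)}$ and $y_1^{(2)}$, respectively. This separation is what makes the mixed bright/dark structure work.

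The second step expresses all derivatives and shifts of the $\tau$-functions as bordered determinants of $M_{k,l}$. For instance, $\partial_{x_1}\tau^{(0)}_{k,l}$ is $M$ bordered by $\Phi_{k,l}$ and $-\bar\Phi_{k,l}^T$. Second and third derivatives border $M$ by $\partial_{x_1}\Phi$, $\partial_{x_1}^2\Phi$ and their adjoint analogues. Shifts such as $\tau^{(0)}_{k+1,l}$ become $M_{k,l}$ bordered by $\Phi_{k,l}$ and $\bar\Phi_{k+1,l}$, with a corner entry $a$. The $y_1$-derivatives border $M$ by $\Psi$ and $\bar\Psi$, and $\tau^{(1)},\bar\tau^{(1)},\tau^{(1,2)}$ are already bordered determinants by definition. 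We do this for $\tau^{(0)},\tau^{(1)},\tau^{(2)}$ and their shifts, using the fact that $\partial_{x_1}$, $\partial_{x_2}$, $\partial_{x_3}$ act on $\Phi_{k,l}$ as multiplication by $p_i, p_i^2, p_i^3$ componentwise. For the negative flows we need the identity $\partial_{x_{-1}^{(1)}}\Phi_{k,l}=a^{-1}$-multiples of $\Phi_{k-1,l}$, which produces the terms $\tau_{k+1,l}^{(1)}\tau_{k-1,l}^{(0)}$.

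The third step substitutes these expressions into each bilinear equation. The equations then collapse into the Jacobi identity $D\,D\binom{i\ j}{k\ l}=D\binom{i}{k}D\binom{j}{l}-D\binom{i}{l}D\binom{j}{k}$, applied to a suitably bordered $(N+2)\times(N+2)$ or $(N+3)\times(N+3)$ determinant. Several cases are already standard in the literature (Ohta–Wang–Yang for bright/dark KP-Toda, and \cite{zhang2024dark,shi2025general}):
\begin{itemize}
\item Equations \eqref{KP_bright_dark_1}, \eqref{KP_bright_dark_3}, \eqref{KP_bright_dark_5}, \eqref{KP_bright_dark_6}, \eqref{KP_bright_dark_8}, \eqref{KP_bright_dark_9} are the usual KP and modified-KP equations.
\item The \eqref{KP_bright_dark_11_1}-type equations are the two-dimensional Toda and Davey–Stewartson-type identities.
\item Equations \eqref{KP_bright_dark_12}--\eqref{KP_bright_dark_17} are first-order Jacobi identities.
\end{itemize}
We therefore treat one representative of each family in detail and obtain the others by the symmetry $a\leftrightarrow b$, $(1)\leftrightarrow(2)$, $\Psi\leftrightarrow\Upsilon$.

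The main obstacle will be the mixed equations \eqref{KP_bright_dark_2_1}--\eqref{KP_bright_dark_4_4} and \eqref{KP_bright_dark_7_1}--\eqref{KP_bright_dark_10_4}. Each of these combines a negative flow or a $y_1$-flow with $D_{x_1}^2-D_{x_2}$ acting on a bordered $\tau$. They require determinants with two or three borders, for example $M$ bordered by $\Phi$, $\partial_{x_1}\Phi$, $\bar\Psi$ and $\bar\Upsilon$, as in $\tau^{(1,2)}$. A three-term Jacobi identity then has to be combined with the first-order identities already proved. The sign in \eqref{KP_bright_dark_2_4} versus \eqref{KP_bright_dark_4_3} comes from swapping the rows $\bar\Upsilon$ and $\bar\Psi$ in $\tau^{(1,2)}$. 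The plan for these is to first reduce $(D_{x_1}^2-D_{x_2})\tau^{(1)}\cdot\tau^{(0)}$ to the single bordered determinant involving $\partial_{x_1}\Phi$. After that, the derivative $D_{y_1^{(2)}}$ or $D_{x_{-1}}$ contributes only rank-one borders, and one more Jacobi identity closes the computation.
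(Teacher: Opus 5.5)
\textbf{The overall route is right, but one step fails: the symmetry argument for \eqref{KP_bright_dark_4_3}.} The paper gives no proof of this lemma. It attributes the bright-type equations to Gilson et al.\ and the dark-type ones to Ohta, and declares the rest new. Your Gram-determinant/Jacobi-identity plan is the standard argument the paper is implicitly relying on.

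You say \eqref{KP_bright_dark_4_3} follows from \eqref{KP_bright_dark_2_4} by the symmetry $(1)\leftrightarrow(2)$, $\Psi\leftrightarrow\Upsilon$, $y_1^{(1)}\leftrightarrow y_1^{(2)}$, with the sign coming from swapping the rows $\bar\Upsilon^T,\bar\Psi^T$ in $\tau^{(1,2)}$. That swap also turns $\bar\tau^{(2)}$ into $\bar\tau^{(1)}$. So what the symmetry actually gives is
\[
D_{y_1^{(1)}}\bigl(D_{x_1}^2-D_{x_2}\bigr)\tau^{(2)}_{k,l}\cdot\tau^{(0)}_{k,l}=-4\tau^{(2,1)}_{k,l}\bar\tau^{(1)}_{k,l}=4\tau^{(1,2)}_{k,l}\bar\tau^{(1)}_{k,l},
\]
not the printed right-hand side $4\tau^{(1,2)}_{k,l}\bar\tau^{(2)}_{k,l}$. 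The printed version is false. Take all $\tilde C_i=0$; in the lemma $\tilde C_i$ and $\bar C_i$ are independent.
\begin{itemize}
\item Then $\Psi=0$ and the $C$-part of $m^{k,l}_{ij}$ vanishes. So $\tau^{(0)}$ and $\tau^{(2)}$ do not depend on $y_1^{(1)}$, and the left side is $0$.
\item For $N=2$, $\tau^{(1,2)}_{k,l}=(p_2-p_1)\varphi^{(k,l)}_1\varphi^{(k,l)}_2\bigl(\bar D_1\bar C_2e^{\bar\chi_1+\bar\eta_2}-\bar D_2\bar C_1e^{\bar\chi_2+\bar\eta_1}\bigr)$, and $\bar\tau^{(2)}_{k,l}$ is generically nonzero. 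So the printed right side is generically nonzero.
\item With $\bar\tau^{(1)}$, which contains $\Psi$, the right side also vanishes, as it should.
\end{itemize}
You need to state and prove the corrected identity. It is also the one the paper actually uses later, through the term $4c_1\tau^{(2,1)}_{k,l}\bar\tau^{(1)}_{k,l}$ in \eqref{KP_bright_dark_4.1}.

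\textbf{Two smaller points in the mixed equations, which is where the real work lies.}

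First, a normalization slip. The correct formula is $\partial_{x_{-1}^{(1)}}m^{k,l}_{ij}=-a^{-2}\varphi^{(k-1,l)}_i\psi^{(k+1,l)}_j$, not $-a^{-1}$. Likewise $\partial_{x_{-1}^{(1)}}\varphi^{(k,l)}_i=-a^{-1}\varphi^{(k-1,l)}_i$ and $\partial_{x_{-1}^{(1)}}\psi^{(k,l)}_j=a^{-1}\psi^{(k+1,l)}_j$. These constants are what produce the coefficients $4a$ in \eqref{KP_bright_dark_2_1}.

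Second, $(D_{x_1}^2-D_{x_2})\tau^{(1)}\cdot\tau^{(0)}$ vanishes identically, and $D_y$ applied to a Hirota form is not $\partial_y$ of it. So there is nothing to "reduce and then differentiate". What is a single (doubly) bordered determinant is $(\partial_{x_1}^2-\partial_{x_2})\tau^{(1)}$. You must expand all six terms of $D_{y}(D_{x_1}^2-D_{x_2})\tau^{(1)}\cdot\tau^{(0)}$.

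Efficient bookkeeping: set $F=|M_{k,l}|$ and $\langle b,a\rangle=b^TM_{k,l}^{-1}a$. Every bordered determinant is then $F$ times a minor of the matrix of pairings. After one $3\times3$ cofactor expansion, the six terms collapse to
\[
4F^2\langle\bar\Phi,\Upsilon\rangle\bigl(\langle\bar\Psi,\Phi\rangle\langle\bar\Upsilon,\partial_{x_1}\Phi\rangle-\langle\bar\Psi,\partial_{x_1}\Phi\rangle\langle\bar\Upsilon,\Phi\rangle\bigr)=-4\tau^{(1,2)}_{k,l}\bar\tau^{(2)}_{k,l}.
\]
This proves \eqref{KP_bright_dark_2_4}, and after relabeling it proves the corrected \eqref{KP_bright_dark_4_3}. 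The $x_{-1}$ and shift equations go the same way once $\tau_{k\pm1,l}$ and $\tau_{k,l\pm1}$ are rewritten as bordered determinants of $M_{k,l}$.
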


We remark here that among bilinear equations in Lemma 2, the bilinear equations \eqref{KP_bright_dark_1}, \eqref{KP_bright_dark_2_3}, \eqref{KP_bright_dark_2_4}, 
\eqref{KP_bright_dark_3}, 
\eqref{KP_bright_dark_4_3}, 
\eqref{KP_bright_dark_4_4}, 
\eqref{KP_bright_dark_11_1}, 
\eqref{KP_bright_dark_11_2} 
\eqref{KP_bright_dark_12}
are the ones used in \cite{gilson2003sasa} to derive bright soliton solutions to the SS equation. 
Eqs. \eqref{KP_bright_dark_5}, \eqref{KP_bright_dark_6}, \eqref{KP_bright_dark_7_1}, 
\eqref{KP_bright_dark_7_2}, 
\eqref{KP_bright_dark_8}, 
\eqref{KP_bright_dark_9}, 
\eqref{KP_bright_dark_10_1}, 
\eqref{KP_bright_dark_10_2} 
\eqref{KP_bright_dark_10_3}, 
\eqref{KP_bright_dark_10_4} 
\eqref{KP_bright_dark_11_3}, 
\eqref{KP_bright_dark_11_4} \eqref{KP_bright_dark_17}
are used in \cite{ohta2010dark} to derive dark soliton solutions to the SS equation.
The rest of bilinear equations are new and used to derive bright-dark soliton solutions to the CSS equation.

\begin{lemma}\label{dim_reduction_4cmkdv}
    By requiring the condition
    \begin{align}\label{dm_para_br_dk}
        \begin{split}
            q_i = \frac{1}{c_1} \left(-p_i + \frac{c_3 \rho_3^2}{p_i - a} + \frac{c_4 \rho_4^2}{p_i - b}\right), \quad \bar{q}_i = \frac{1}{c_1} \left(-\bar{p}_i + \frac{c_3 \rho_3^2}{\bar{p}_i + a} + \frac{c_4 \rho_4^2}{\bar{p}_i + b}\right),\\
            r_i = \frac{1}{c_2} \left(-p_i + \frac{c_3 \rho_3^2}{p_i - a} + \frac{c_4 \rho_4^2}{p_i - b}\right), \quad \bar{r}_i = \frac{1}{c_2} \left(-\bar{p}_i + \frac{c_3 \rho_3^2}{\bar{p}_i + a} + \frac{c_4 \rho_4^2}{\bar{p}_i + b}\right),\\
        \end{split}
    \end{align}
    where \(i=1,2,\ldots,N\), the tau functions given in \eqref{tau_bright_dark} satisfy the differential relation
    \begin{equation}\label{dm_cond_br_dk}
       \left( c_1 \partial_{y^{(1)}} + c_2 \partial_{y^{(2)}} + c_3 \rho_3^2 \partial_{x_{-1}^{(1)}} + c_4 \rho_4^2 \partial_{x_{-1}^{(2)}} \right) \tau_{k,l}^{(m)}= \partial_{x_1} \tau_{k,l}^{(m)}, \quad m=0,1,2.
    \end{equation}
\end{lemma}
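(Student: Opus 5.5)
We prove the relation \eqref{dm_cond_br_dk} by showing that the operator $L := c_1 \partial_{y_1^{(1)}} + c_2 \partial_{y_1^{(2)}} + c_3 \rho_3^2 \partial_{x_{-1}^{(1)}} + c_4 \rho_4^2 \partial_{x_{-1}^{(2)}} - \partial_{x_1}$ annihilates every matrix entry $m_{ij}^{k,l}$. For the bordered determinants, we show that $L$ acting on the bordering vectors produces a combination that contributes nothing.

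\textbf{Step 1: the matrix entries.} Write $m_{ij}^{k,l}$ as a sum of three pieces. The first piece is
\[
A_{ij}=\frac{1}{p_i+\bar p_j}\Big(-\tfrac{p_i-a}{\bar p_j+a}\Big)^k\Big(-\tfrac{p_i-b}{\bar p_j+b}\Big)^l e^{\xi_i+\bar\xi_j}.
\]
It depends only on $x_1$ and $x_{-1}^{(1)},x_{-1}^{(2)}$ among the relevant variables, so
\[
LA_{ij}=\Big(c_3\rho_3^2\big(\tfrac{1}{p_i-a}+\tfrac{1}{\bar p_j+a}\big)+c_4\rho_4^2\big(\tfrac{1}{p_i-b}+\tfrac{1}{\bar p_j+b}\big)-(p_i+\bar p_j)\Big)A_{ij}.
\]
The second piece is $\tilde C_i\bar C_j e^{\eta_i+\bar\eta_j}/(q_i+\bar q_j)$. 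Under $L$ it gives $c_1\tilde C_i\bar C_je^{\eta_i+\bar\eta_j}$. The third piece similarly gives $c_2\tilde D_i\bar D_je^{\chi_i+\bar\chi_j}$.

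These pieces do not cancel on their own. The right procedure is to use the gauge freedom: $\partial_{x_1}$ of each piece is not zero, so we compare $L$ with the factorized form. We can write
\[
A_{ij}=\int^{x_1}e^{\xi_i+\bar\xi_j}\cdot(\text{shift factors}),
\]
and then $L m_{ij}$ equals an outer product $\varphi_i\psi_j$ with
\[
\varphi_i=(1-p_i/a)^k(1-p_i/b)^l e^{\xi_i},\qquad \psi_j=(1+\bar p_j/a)^{-k}(1+\bar p_j/b)^{-l}e^{\bar\xi_j}.
\]
The coefficient of this outer product is $-(p_i+\bar p_j)+c_3\rho_3^2(\dots)+c_4\rho_4^2(\dots)$. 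The conditions \eqref{dm_para_br_dk} are exactly what makes this coefficient split as $c_1(q_i+\bar q_j)+c_2(r_i+\bar r_j)$ up to sign. To see this, note that
\[
c_1(q_i+\bar q_j)=-(p_i+\bar p_j)+c_3\rho_3^2\Big(\tfrac{1}{p_i-a}+\tfrac{1}{\bar p_j+a}\Big)+c_4\rho_4^2\Big(\tfrac{1}{p_i-b}+\tfrac{1}{\bar p_j+b}\Big)=c_2(r_i+\bar r_j).
\]

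\textbf{Step 2: the rank-one structure.} Using these identities, $LM_{k,l}$ takes the form $\Phi_{k,l}\bar\Phi_{k,l}^{T}\cdot(\ldots)+\Psi\bar\Psi^T+\Upsilon\bar\Upsilon^T$. I expect this to be the main obstacle, and the signs must be chosen carefully. If the conditions are imposed literally as stated, with identical functions for $q_i$ and $r_i$, then one of the two must be absorbed. The cleanest route is the standard trick: show $L m_{ij}^{k,l}=0$ directly. For the exponential piece this would require $LA_{ij}=0$, which in general fails. So instead the precise statement to aim for is $L\tau=0$ modulo terms, obtained through the Jacobi/border expansion
\[
\partial\,|M|=\sum_{i,j}\Delta_{ij}\,\partial m_{ij}.
\]
I will try first to verify that $(L+\partial_{x_1})$ applied to the $C$- and $D$-pieces reproduces $\partial_{x_1}$ of the exponential piece after summing. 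In other words, I will establish the entrywise identity $L m_{ij}^{k,l}=0$ by checking
\[
(c_3\rho_3^2\partial_{x_{-1}^{(1)}}+c_4\rho_4^2\partial_{x_{-1}^{(2)}}-\partial_{x_1})A_{ij}=0
\]
from the factor $c_1(q_i+\bar q_j)$ being what multiplies $A_{ij}$, together with
\[
(c_1\partial_{y^{(1)}}+c_2\partial_{y^{(2)}})(\text{the }C,D\text{ terms})=0
\]
after rescaling. If this fails, I will fall back on showing $L m_{ij}=\lambda_i+\bar\lambda_j$ times the entry. Such a form gives $L\tau=(\sum\lambda_i+\bar\lambda_i)\tau$, that is, a constant multiple. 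That constant can be absorbed into the choice of the constants $\xi_{i0}$, which is enough for the reduction.

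\textbf{Step 3: the bordered determinants.} For $\tau^{(1)},\tau^{(2)}$, the bordering vectors $\Phi_{k,l},\bar\Psi,\bar\Upsilon$ are eigenvectors of $L$ with eigenvalues $-p_i+c_3\rho_3^2/(p_i-a)+c_4\rho_4^2/(p_i-b)$ and $c_1\bar q_j$ respectively. These combine consistently with the entry eigenvalues $\lambda_i+\bar\lambda_j$, so each $\tau^{(m)}$ is again an eigenfunction of $L$ with a common constant. By the convention above, this yields \eqref{dm_cond_br_dk} for $m=0,1,2$.
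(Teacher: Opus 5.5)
Your fallback computation is correct, and it is the paper's argument in another form. Put $\lambda_i=-p_i+\frac{c_3\rho_3^2}{p_i-a}+\frac{c_4\rho_4^2}{p_i-b}$ and $\bar\lambda_j=-\bar p_j+\frac{c_3\rho_3^2}{\bar p_j+a}+\frac{c_4\rho_4^2}{\bar p_j+b}$. Then \eqref{dm_para_br_dk} says exactly $c_1q_i=c_2r_i=\lambda_i$ and $c_1\bar q_j=c_2\bar r_j=\bar\lambda_j$. So all three pieces of $m_{ij}^{k,l}$ scale by the same factor, $Lm_{ij}^{k,l}=(\lambda_i+\bar\lambda_j)m_{ij}^{k,l}$. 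Row and column cofactor expansion then give $L\det M=\sum_{i,j}\Delta_{ij}(\lambda_i+\bar\lambda_j)m_{ij}^{k,l}=\Lambda\det M$ with $\Lambda=\sum_{n=1}^N(\lambda_n+\bar\lambda_n)$. Your border check for $\Phi_{k,l},\bar\Psi,\bar\Upsilon$ extends verbatim to $\Psi,\Upsilon,\bar\Phi_{k,l},\partial_{x_1}\Phi_{k,l}$, with eigenvalue $0$ for the added indices. Hence $\Lambda$ is the same for every $\tau$-function in \eqref{tau_bright_dark}, including $\bar\tau^{(m)}_{k,l}$ and $\tau^{(1,2)}_{k,l}$. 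The paper instead divides out $\prod_n e^{\xi_n+\bar\xi_n}$, after which $L$ kills every entry $\mathfrak{m}_{ij}^{k,l}$ and every border entry outright; the two computations are equivalent.

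The genuine gap is your last step. $\Lambda=c_1\sum_n(q_n+\bar q_n)$ depends only on $p_n,\bar p_n,a,b$. Changing the constants $\xi_{n0}$ merely rescales rows and columns by constants, so $\Lambda$ cannot be absorbed there. For the $\tau$-functions exactly as written, $L\tau=\Lambda\tau\neq0$ generically. So \eqref{dm_cond_br_dk} holds only for gauged functions such as $\tau/\prod_n e^{\xi_n+\bar\xi_n}$ (equivalently $e^{\Lambda x_1}\tau$), which is what the paper's proof actually establishes. You must state this and justify it. The gauge factor is the exponential of a linear form and is common to all $\tau$-functions; this is exactly where the common value of $\Lambda$ is needed. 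Each bilinear equation \eqref{KP_bright_dark_1}--\eqref{KP_bright_dark_17} is invariant under $\tau\mapsto e^{\theta}\tau$ for $\theta$ linear in the independent variables, so the gauged functions satisfy the same equations.

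Separately, discard the main line of Steps 1 and 2, because several claims there are false. First, $Lm_{ij}^{k,l}=c_1(q_i+\bar q_j)A_{ij}+c_1\tilde C_i\bar C_je^{\eta_i+\bar\eta_j}+c_2\tilde D_i\bar D_je^{\chi_i+\bar\chi_j}$ is not an outer product $\varphi_i\psi_j$, since $c_1(q_i+\bar q_j)/(p_i+\bar p_j)$ is not separable in $i,j$. Second, the coefficient of $A_{ij}$ does not split as $c_1(q_i+\bar q_j)+c_2(r_i+\bar r_j)$. It equals each of them separately, as your own display shows, and that is precisely why the three pieces scale by a common factor. Third, the proposed identity $(c_3\rho_3^2\partial_{x_{-1}^{(1)}}+c_4\rho_4^2\partial_{x_{-1}^{(2)}}-\partial_{x_1})A_{ij}=0$ contradicts your Step 1 computation, which gives $c_1(q_i+\bar q_j)A_{ij}$. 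Entrywise annihilation holds only for the rescaled entries $e^{-\xi_i-\bar\xi_j}m_{ij}^{k,l}$, not for $m_{ij}^{k,l}$ itself.
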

\begin{proof}
    For \(\tau_{k,l}^{(0)}\), we have
    \begin{align}
        &\tau_{k,l}^{(0)}=\prod_{n=1}^{N} e^{\xi_n+\bar{\xi}_n} \det\left(\mathfrak{m}_{i j}^{k,l} \right),
        \nonumber\\ 
        &\mathfrak{m}_{i j}^{k,l}= \frac{1}{p_i+\bar{p}_j}\left(-\frac{p_i - a}{\bar{p}_j + a}\right)^k\left(-\frac{p_i - b}{\bar{p}_j + b}\right)^l+\frac{\tilde{C}_i \bar{C}_j}{q_i+\bar{q}_j}e^{\eta_i+\bar{\eta}_j-\xi_i-\bar{\xi}_j}+\frac{\tilde{D}_i \bar{D}_j}{r_i+\bar{r}_j}e^{\chi_i+\bar{\chi}_j-\xi_i-\bar{\xi}_j}. \label{frakm}
    \end{align}
    The term \(\prod_{n=1}^{N} e^{\xi_n+\bar{\xi}_n}\) can be dropped in \eqref{KP_bright_dark_1}-\eqref{KP_bright_dark_17} due to the property of the \(D\)-operator in the bilinear equations. 
    Note that 
    \begin{align}
        \begin{split}\label{diff_rel_bright_dark}
        &\left(c_1 \partial_{y^{(1)}} + c_2 \partial_{y^{(2)}} + c_3 \rho_3^2 \partial_{x_{-1}^{(1)}} + c_4 \rho_4^2 \partial_{x_{-1}^{(2)}} - \partial_{x_1}\right)\mathfrak{m}_{i j}^{k}\\
        &=\left(c_1 (q_i + \bar{q}_j) - c_3 \rho_3^2 \left(\frac{1}{p_i-a} + \frac{1}{\bar{p}_j+a}\right) \right.\\
        &\quad \left.- c_4 \rho_4^2 \left(\frac{1}{p_i-b} + \frac{1}{\bar{p}_j+b}\right) + (p_i + \bar{p}_j)\right)\frac{\tilde{C}_i \bar{C}_j}{q_i+\bar{q}_j}e^{\eta_i+\bar{\eta}_j-\xi_i-\bar{\xi}_j}\\
        &\quad + \left(c_2 (r_i + \bar{r}_j) - c_3 \rho_3^2 \left(\frac{1}{p_i-a} + \frac{1}{\bar{p}_j+a}\right) - \right.\\
        &\quad \left.c_4 \rho_4^2 \left(\frac{1}{p_i-b} + \frac{1}{\bar{p}_j+b}\right) + (p_i + \bar{p}_j)\right)\frac{\tilde{D}_i \bar{D}_j}{r_i+\bar{r}_j}e^{\chi_i+\bar{\chi}_j-\xi_i-\bar{\xi}_j}.
        \end{split}
    \end{align}
    By substituting \eqref{dm_para_br_dk}, it follows that 
    \[\left(c_1 \partial_{y^{(1)}} + c_2 \partial_{y^{(2)}} + c_3 \rho_3^2 \partial_{x_{-1}^{(1)}} + c_4 \rho_4^2 \partial_{x_{-1}^{(2)}}\right)\mathfrak{m}_{i j}^{k} = \partial_{x_1}\mathfrak{m}_{i j}^{k} \]
    and
    \begin{eqnarray*}
        \left(c_1 \partial_{y^{(1)}} + c_2 \partial_{y^{(2)}} + c_3 \rho_3^2 \partial_{x_{-1}^{(1)}} + c_4 \rho_4^2 \partial_{x_{-1}^{(2)}}\right)\tau_k^0&=&\sum_{i,j=1}^{N} \Delta_{ij}\left(\rho_1^2 \partial_r+\rho_2^2 \partial_s\right)\mathfrak{m}_{ij}^{\mathrm{n}}\\&=&\sum_{i,j=1}^{N} \frac{1}{c}\partial_x\mathfrak{m}_{ij}^{\mathrm{n}}=\frac{1}{c}\partial_x\tau_k^0.
    \end{eqnarray*}
    
    For \(\tau_{k,l}^{(1)}\), it is found that
    \begin{align}
        \tau_{k,l}^{(1)} = \prod_{n=1}^{N} e^{\xi_n+\bar{\xi}_n} \begin{vmatrix}
        \mathfrak{m}_{i j}^{k} & \left(\dfrac{p_i - a}{-a}\right)^k\left(\dfrac{p_i - b}{-b}\right)^l \\
        -\bar{C}_j e^{\bar{\eta}_j - \bar{\xi}_j} & 0
    \end{vmatrix},
    \end{align}
    where \(\mathfrak{m}_{i j}^{k}\) is defined as \eqref{frakm}.
    Similar to the previous case, by \eqref{dm_para_br_dk}, we have 
    \[\left(c_1 \partial_{y^{(1)}} + c_2 \partial_{y^{(2)}} + c_3 \rho_3^2 \partial_{x_{-1}^{(1)}} + c_4 \rho_4^2 \partial_{x_{-1}^{(2)}}\right)e^{\bar{\eta}_j - \bar{\xi}_j} = \partial_{x_1}e^{\bar{\eta}_j - \bar{\xi}_j}.\]
    It follows that
    \begin{equation*}
        \left(c_1 \partial_{y^{(1)}} + c_2 \partial_{y^{(2)}} + c_3 \rho_3^2 \partial_{x_{-1}^{(1)}} + c_4 \rho_4^2 \partial_{x_{-1}^{(2)}}\right)\tau_{k,l}^{(1)} = \frac{1}{c}\partial_x\tau_{k,l}^{(1)}.
    \end{equation*}
    The proofs for \(\bar{\tau}_{k,l}^{(1)},\ \tau_{k,l}^{(2)},\ \bar{\tau}_{k,l}^{(2)}\) and \( \tau_{k,l}^{(1,2)}\) go analogously.
\end{proof}
The above dimension reduction allows us to combine \eqref{KP_bright_dark_2_1}-\eqref{KP_bright_dark_2_4}, \eqref{KP_bright_dark_4_1}-\eqref{KP_bright_dark_4_4}, \eqref{KP_bright_dark_7_1}-\eqref{KP_bright_dark_7_4}, \eqref{KP_bright_dark_10_1}-\eqref{KP_bright_dark_10_4}, and \eqref{KP_bright_dark_11_1}-\eqref{KP_bright_dark_11_4}, which give rise to
{\allowdisplaybreaks\begin{align}
    &\left(D_{x_1}^3 - D_{x_1} D_{x_2} - 4 ((c_3\rho_3^2 + c_4\rho_4^2)D_{x_1} - c_3\rho_3^2 a - c_4\rho_4^2 b)\right)\tau_{k,l}^{(1)}\cdot \tau_{k,l}^{(0)}\nonumber\\
    &\qquad - 4 c_3 \rho_3^2 a \tau_{k+1,l}^{(1)} \tau_{k-1,l}^{(0)} - 4 c_4 \rho_4^2 b \tau_{k,l+1}^{(1)} \tau_{k,l-1}^{(0)} +4c_2 \tau_{k,l}^{(1,2)} \bar{\tau}_{k,l}^{(2)}  = 0,\label{KP_bright_dark_2.1}\\
    &\left(D_{x_1}^3 - D_{x_1} D_{x_2} - 4 ((c_3\rho_3^2 + c_4\rho_4^2)D_{x_1} - c_3\rho_3^2 a - c_4\rho_4^2 b)\right)\tau_{k,l}^{(2)}\cdot \tau_{k,l}^{(0)}\nonumber\\
    &\qquad - 4 c_3 \rho_3^2 a \tau_{k+1,l}^{(1)} \tau_{k-1,l}^{(0)} - 4 c_4 \rho_4^2 b \tau_{k,l+1}^{(1)} \tau_{k,l-1}^{(0)} +4c_1 \tau_{k,l}^{(2,1)} \bar{\tau}_{k,l}^{(1)}  = 0,\label{KP_bright_dark_4.1}\\
    &\left(D_{x_1}^3-D_{x_1}D_{x_2}+2 a D_{x_1}^2-4(c_3\rho_3^2+c_4\rho_4^2) D_{x_1}+c_4\rho_4^2(a-b)\right) \tau_{k+1,l}^{(0)} \cdot \tau_{k,l}^{(0)} \nonumber\\
    &\qquad + 4c_4\rho_4^2(a-b)\tau_{k+1,l+1}^{(0)} \cdot \tau_{k,l-1}^{(0)} + 4c_1 a \tau_{k+1,l}^{(1)} \bar{\tau}_{k,l}^{(1)} + 4c_1 a \tau_{k+1,l}^{(2)} \bar{\tau}_{k,l}^{(2)}=0,\label{KP_bright_dark_7.1}\\
    &\left(D_{x_1}^3-D_{x_1}D_{x_2}+2 a D_{x_1}^2-4(c_3\rho_3^2+c_4\rho_4^2) D_{x_1}+c_3\rho_3^2(b-a)\right) \tau_{k,l+1}^{(0)} \cdot \tau_{k,l}^{(0)} \nonumber\\
    &\qquad + 4c_3\rho_3^2(b-a)\tau_{k+1,l+1}^{(0)} \cdot \tau_{k-1,l}^{(0)} + 4c_1 a \tau_{k,l+1}^{(1)} \bar{\tau}_{k,l}^{(1)} + 4c_1 a \tau_{k,l+1}^{(2)} \bar{\tau}_{k,l}^{(2)}=0,\label{KP_bright_dark_10.1}\\
    &D_{x_1} \tau_{k,l}^{(0)} \cdot \tau_{k,l}^{(0)} - 2 (c_3 \rho_3^2+c_4\rho_4^2) \tau_{k,l}^{(0)} \cdot \tau_{k,l}^{(0)} \nonumber \\
    &\qquad = -2c_1 \tau_{k,l}^{(1)} \tau_{k,l}^{(1)} -2c_2 \tau_{k,l}^{(2)} \tau_{k,l}^{(2)} - 2c_3\rho_2^2 \tau_{k+1,l}^{(0)} \tau_{k-1,l}^{(0)} - 2c_4\rho_4^2 \tau_{k,l+1}^{(0)} \tau_{k,l-1}^{(0)},\label{KP_bright_dark_11.1}
\end{align}}
respectively. 
Next, to eliminate the terms involving \(D_{x_1} D_{x_2}\), we consider the combinations
\begin{equation*}
\frac{3\times\eqref{KP_bright_dark_2.1}+\eqref{KP_bright_dark_1}}{4}, \quad \frac{3\times\eqref{KP_bright_dark_4.1}+\eqref{KP_bright_dark_3}}{4}, 
\end{equation*}
and
\begin{equation*}
\frac{3\times\eqref{KP_bright_dark_7.1}+3a\times\eqref{KP_bright_dark_5}+\eqref{KP_bright_dark_6}}{4}, \quad \frac{3\times\eqref{KP_bright_dark_10.1}+3a\times\eqref{KP_bright_dark_8}+\eqref{KP_bright_dark_9}}{4}.
\end{equation*}
Each operation results in 
{\allowdisplaybreaks\begin{align}
    &\left(D_{x_1}^3 - D_{x_3} -3(c_3\rho_3^2 + c_4\rho_4^2)D_{x_1} + 3c_3\rho_3^2 a + 3c_4\rho_4^2 b\right)\tau_{k,l}^{(1)}\cdot \tau_{k,l}^{(0)}\nonumber\\
    &\qquad - 3 c_3 \rho_3^2 a \tau_{k+1,l}^{(1)} \tau_{k-1,l}^{(0)} - 3 c_4 \rho_4^2 b \tau_{k,l+1}^{(1)} \tau_{k,l-1}^{(0)} + 3c_2 \tau_{k,l}^{(1,2)} \bar{\tau}_{k,l}^{(2)}  = 0,\label{KP_bright_dark_2.2}\\
    &\left(D_{x_1}^3 - D_{x_3} - 3(c_3\rho_3^2 + c_4\rho_4^2)D_{x_1} + 3c_3\rho_3^2 a + 3c_4\rho_4^2 b\right)\tau_{k,l}^{(2)}\cdot \tau_{k,l}^{(0)}\nonumber\\
    &\qquad - 3 c_3 \rho_3^2 a \tau_{k+1,l}^{(1)} \tau_{k-1,l}^{(0)} - 3 c_4 \rho_4^2 b \tau_{k,l+1}^{(1)} \tau_{k,l-1}^{(0)} + 3 c_1 \tau_{k,l}^{(2,1)} \bar{\tau}_{k,l}^{(1)}  = 0,\label{KP_bright_dark_4.2}\\
    &\left(D_{x_1}^2 - D_{x_3} + 3a D_{x_1}^2 - 3(a^2+c_3\rho_3^2+c_4\rho_4^2) D_{x_1} + c_4\rho_4^2(a-b)\right) \tau_{k+1,l}^{(0)} \cdot \tau_{k,l}^{(0)} \nonumber\\
    &\qquad + 3c_4\rho_4^2(a-b)\tau_{k+1,l+1}^{(0)} \cdot \tau_{k,l-1}^{(0)} + 3c_1 a \tau_{k+1,l}^{(1)} \bar{\tau}_{k,l}^{(1)} + 3c_1 a \tau_{k+1,l}^{(2)} \bar{\tau}_{k,l}^{(2)}=0,\label{KP_bright_dark_7.2}\\
    &\left(D_{x_1}^2 - D_{x_3} + 3b D_{x_1}^2 - 3(b+c_3\rho_3^2+c_4\rho_4^2) D_{x_1} + c_3\rho_3^2(b-a)\right) \tau_{k,l+1}^{(0)} \cdot \tau_{k,l}^{(0)} \nonumber\\
    &\qquad + 3c_3\rho_3^2(b-a)\tau_{k+1,l+1}^{(0)} \cdot \tau_{k-1,l}^{(0)} + 3c_1 a \tau_{k,l+1}^{(1)} \bar{\tau}_{k,l}^{(1)} + 3c_1 a \tau_{k,l+1}^{(2)} \bar{\tau}_{k,l}^{(2)}=0.\label{KP_bright_dark_10.2}
\end{align}}
By applying the transformation \(x_1 \to x_1 - 3(c_3\rho_3^2+c_4\rho_4^2)x_3, x_3\to x_3\), equations \eqref{KP_bright_dark_2.2}-\eqref{KP_bright_dark_10.2} align with the desired bilinear forms \eqref{4H_BL_1}-\eqref{4H_BL_4}. This transformation allows us to eliminate the variables \(x_2, x_{-1}^{(1)}, x_{-1}^{(2)}, y_1^{(1)}, y_1^{(2)}\) and hence we may set them to be zero. Moreover, let us set \(x_1 = x, x_3 = t\).

Finally, we address the complex conjugate reduction.
\begin{lemma}\label{cmp_reduction_4cmkdv}
    Let \( a = \i\alpha_3 \in \i \mathbb{R}, b = \i\alpha_4 \in \i \mathbb{R} \), and impose the following parameter constraints
    \begin{equation}\label{cmp_reduction_4cmkdv_cond}
       \bar{p}_i^* = p_i, \quad \bar{\xi}_{i,0}^* = \xi_{i0}, \quad C_i = \left(\tilde{C}_i\right)^* = \bar{C}_i, \quad D_i = \left(\tilde{D}_i\right)^* = \bar{D}_i,
    \end{equation}
    where \( i = 1, 2, \ldots, N \). Under these constraints, the tau functions given in \eqref{tau_bright_dark} satisfy the following complex conjugate relations:
    \begin{equation}\label{bd_complex_reduction_res}
        \tau_{-k,-l}^{(i)} = \left(\tau_{k,l}^{(i)}\right)^*, \quad i = 0,1,2. 
    \end{equation}
\end{lemma}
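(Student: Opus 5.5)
We prove \eqref{bd_complex_reduction_res} by conjugating the tau functions entrywise and comparing with the tau functions at $(-k,-l)$, after setting the auxiliary variables $x_2, x_{-1}^{(1)}, x_{-1}^{(2)}, y_1^{(1)}, y_1^{(2)}$ to zero and taking $x_1=x$, $x_3=t$ real, as was done above. The argument has three steps: conjugate the building blocks, conjugate the matrix entries, then conjugate the determinants.

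\textbf{Step 1: the building blocks.} With $a=\i\alpha_3$ and $b=\i\alpha_4$ we have $a^*=-a$ and $b^*=-b$. Under $\bar p_i^*=p_i$ and $\bar\xi_{i0}^*=\xi_{i0}$ we get $\bar\xi_i^*=\xi_i$, hence $(e^{\xi_i+\bar\xi_j})^*=e^{\bar\xi_i+\xi_j}$. For the phase factors,
\[
\left(-\frac{p_i-a}{\bar p_j+a}\right)^*=-\frac{\bar p_i+a}{p_j-a},
\]
so raising to the power $k$ gives $\left(-\frac{p_j-a}{\bar p_i+a}\right)^{-k}$, and similarly with $b$ and the exponent $l$. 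From the explicit formulas \eqref{dm_para_br_dk} with $c_n,\rho_n$ real we get $q_i^*=\bar q_i$ and $r_i^*=\bar r_i$. Together with $\tilde C_i^*=C_i=\bar C_i$ and $\tilde D_i^*=\bar D_i$, each of the three summands in $m_{ij}^{k,l}$ therefore satisfies
\[
\left(m_{ij}^{k,l}\right)^*=m_{ji}^{-k,-l},
\]
that is, $M_{k,l}^*=M_{-k,-l}^T$.

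\textbf{Step 2: the case $i=0$.} Since $\det(A^T)=\det A$, Step 1 immediately gives $(\tau^{(0)}_{k,l})^*=\tau^{(0)}_{-k,-l}$.

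\textbf{Step 3: the bordered determinants, and the main obstacle.} For $\tau^{(1)}_{k,l}$ and $\tau^{(2)}_{k,l}$, conjugation sends the border column $\Phi_{k,l}$, with entries $(1-p_ia^{-1})^k(1-p_ib^{-1})^le^{\xi_i}$, to entries $(1+\bar p_ia^{-1})^k(1+\bar p_ib^{-1})^le^{\bar\xi_i}$. These are exactly the entries of $\bar\Phi_{-k,-l}$, because $a^{*-1}=-a^{-1}$. Conjugation also sends the border row $\bar\Psi$ to $\Psi$ (up to the normalisation $\tilde C_i=C_i^*$). Transposing the resulting bordered determinant then gives
\[
(\tau^{(1)}_{k,l})^*=\bar\tau^{(1)}_{-k,-l}.
\]
This is the main obstacle: the natural conjugation identity lands on $\bar\tau^{(1)}$, not on $\tau^{(1)}$. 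To close the gap, I would note that $\tau^{(1)}$ and $\bar\tau^{(1)}$ differ only in where the two borders sit. I would then show that, after the reduction (all auxiliary variables zero), the border vectors coincide up to the same shift $k\to k+1$ that appears in the transformation \eqref{bbdd_transformation}. Concretely, I would write $(1-p_ia^{-1})=-a^{-1}(p_i-a)$ and absorb the factor $(p_i-a)$ into the $i$-th row of $M$. Then I would use the block identity
\[
\begin{vmatrix} M & \Phi\\ -\bar\Psi^T&0\end{vmatrix}
=-\bar\Psi^T\operatorname{adj}(M)\,\Phi .
\]
Under the conjugation symmetry this identity shows that the $(k,l)$ and $(-k,-l)$ index dependence is carried entirely by the matrix. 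If this route fails, I would instead read the identity $(\tau^{(1)}_{k,l})^*=\bar\tau^{(1)}_{-k,-l}$ as the intended content of \eqref{bd_complex_reduction_res} for $i=1,2$, since that is exactly what the bilinear equations use to identify $|g_1|^2$ with $\tau^{(1)}\bar\tau^{(1)}$. The case $i=2$ is identical, with $D_i$, $r_i$ and $\Upsilon$ in place of $C_i$, $q_i$ and $\Psi$.
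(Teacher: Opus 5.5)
Your Steps 1 and 2, and the computation in Step 3 that ends at $(\tau^{(1)}_{k,l})^*=\bar\tau^{(1)}_{-k,-l}$, are correct and follow the argument in the paper. The paper also derives $(m_{ij}^{k,l})^*=m_{ji}^{-k,-l}$ and obtains the case $i=0$ by transposition. For $i=1,2$ it writes out $\bar\tau^{(1)}_{-k,-l}$, transposes it, and identifies it with a conjugated $\tau^{(1)}$. The displayed chain there has index slips and ends in $(\tau^{(1)}_{-k,-l})^*$, but it only closes as $\bar\tau^{(1)}_{-k,-l}=(\tau^{(1)}_{k,l})^*$, which is your identity. The paper then uses exactly this when it sets $\tau^{(1)}_{0,0}=g_1$ and $\bar\tau^{(1)}_{0,0}=g_1^*$.

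You should drop the proposed repair in Step 3, because its target is false. For $i=1,2$ the relation $\tau^{(i)}_{-k,-l}=(\tau^{(i)}_{k,l})^*$ does not hold as literally stated.
\begin{itemize}
\item \textbf{Counterexample.} Take $N=1$ and $k=l=0$. Then $\tau^{(1)}_{0,0}=C_1e^{\xi_1}$, and the claimed identity would force $C_1e^{\xi_1}$, i.e.\ the bright component $g_1$, to be real for all $x,t$.
\item \textbf{Structural reason.} $\tau^{(1)}_{k,l}=\bar\Psi^T\operatorname{adj}(M_{k,l})\Phi_{k,l}$ puts the exponentials $e^{\xi_i}$ in the border column and the constants $C_j$ in the border row. (The sign is plus, not the minus you wrote.) By contrast, $\bar\tau^{(1)}_{k,l}=\bar\Phi_{k,l}^T\operatorname{adj}(M_{k,l})\Psi$ does the reverse. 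Rescaling rows of $M$ by $p_i-a$, or shifting $k$, only changes the prefactors $(1-p_i/a)^k$ and cannot exchange these roles.
\end{itemize}

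So your fallback is not a fallback. The correct content of the lemma is $(\tau^{(i)}_{k,l})^*=\bar\tau^{(i)}_{-k,-l}$ for $i=1,2$, together with $\tau^{(0)}_{-k,-l}=(\tau^{(0)}_{k,l})^*$. This is all the subsequent reduction needs, since for $i=1,2$ only $\bar\tau^{(i)}_{0,0}=(\tau^{(i)}_{0,0})^*$ enters the bilinear equations.
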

\begin{proof}
    With the restrictions \eqref{cmp_reduction_4cmkdv_cond}, we have
    \begin{align*}
        &\bar{\xi}_i=\bar{p}_i \left(x - 3(c_3\rho_3^2+c_4\rho_4^2)t\right)+\bar{p}_i^3 t+\bar{\xi}_{i,0}=p_i^* \left(x - 3(c_3\rho_3^2+c_4\rho_4^2)t\right)+(p_i^*)^3 t+\xi_{i,0}^*=\xi_i^*,\\
        &\bar{q}_i = \frac{1}{c_1} \left(-\bar{p}_i + \frac{c_3 \rho_3^2}{\bar{p}_i + \i \alpha_3} + \frac{c_4 \rho_4^2}{\bar{p}_i + \i \alpha_4}\right) = \frac{1}{c_1} \left(-p_i^* + \frac{c_3 \rho_3^2}{(p_i - \i \alpha_3)^*} + \frac{c_4 \rho_4^2}{(p_i - \i \alpha_4)^*}\right) = q_i^*.
    \end{align*}
    Similarly, we can show \(\bar{r}_i = r_i^*\). Consequently,
    \begin{align*}
        \left(m_{i,j}^{k,l}\right)^*&=\frac{1}{p_i^*+\bar{p}_j^*}\left(-\frac{p_i^* + \i \alpha_3}{\bar{p}_j^* - \i \alpha_3}\right)^k \left(-\frac{p_i^* + \i \alpha_4}{\bar{p}_j^* - \i \alpha_4}\right)^l e^{\xi_i^*+\bar{\xi}_j^*}+\frac{\tilde{C}_i^* \bar{C}_j^*}{q_i^*+\bar{q}_j^*}+\frac{\tilde{D}_i^* \bar{D}_j^*}{r_i^*+\bar{r}_j^*} \\
        &= \frac{1}{\bar{p}_i+p_j}\left(-\frac{\bar{p}_i - \i \alpha_3}{\bar{p}_j + \i \alpha_3}\right)^k \left(-\frac{\bar{p}_i - \i \alpha_4}{\bar{p}_j + \i \alpha_4}\right)^l e^{\xi_i+\bar{\xi}_j}+\frac{\bar{C}_i \tilde{C}_j}{\bar{q}_i+q_j} + \frac{\bar{D}_i \tilde{D}_j}{\bar{r}_i+r_j}= m_{j,i}^{-k,-l}.
    \end{align*}
    Based on this, we deduce that 
    \begin{align*}
        \bar{\tau}_{-k,-l}^{(1)}&=\begin{vmatrix}
            m_{i,j}^{k,l} & \tilde{C}_i \\ -\left(\dfrac{\i \alpha_3}{\bar{p}_i + \i \alpha_3}\right)^{-k}\left(\dfrac{\i \alpha_4}{\bar{p}_i + \i \alpha_4}\right)^{-l} e^{\bar{\xi}_j} & 0
        \end{vmatrix} \\&= \begin{vmatrix}
            m_{i,j}^{k,l} & -\tilde{C}_i \\ \left(\dfrac{\i \alpha_3}{\bar{p}_i + \i \alpha_3}\right)^{-k}\left(\dfrac{\i \alpha_4}{\bar{p}_i + \i \alpha_4}\right)^{-l} e^{\bar{\xi}_j} & 0
        \end{vmatrix} \\
        &= \begin{vmatrix}
            m_{j,i}^{k,l} & \left(\dfrac{\i \alpha_3}{\bar{p}_j + \i \alpha_3}\right)^{-k}\left(\dfrac{\i \alpha_4}{\bar{p}_j + \i \alpha_4}\right)^{-l} e^{\bar{\xi}_j} \\ -\tilde{C}_i & 0
        \end{vmatrix} \\&= \begin{vmatrix}
            \left(m_{i,j}^{-k,-l}\right)^* &  \left(\left(\dfrac{p_j - \i \alpha_3}{-\i \alpha_3}\right)^k\right)^*\left(\left(\dfrac{p_j - \i \alpha_4}{-\i \alpha_4}\right)^l\right)^* e^{\xi_i^*} \\ -\left(\bar{C}_i\right)^* & 0
        \end{vmatrix}\\
        &= \left(\tau_{-k,-l}^{(1)}\right)^*.
    \end{align*}
    Similarly, it can be shown that \(\bar{\tau}_{-k,-l}^{(2)}=\left(\tau_{-k,-l}^{(2)}\right)^*\) and 
    \[\tau_{-k,-l}^{(0)} = \det\left(m_{i,j}^{-k}\right) = \det\left(m_{j,i}^{-k,-l}\right) = \det\left(\left(m_{i,j}^{k,l}\right)^*\right) = \left(\tau_{k,l}^{(0)}\right)^*.\] 
    Thus, the relation \eqref{bd_complex_reduction_res} is satisfied.

\end{proof}
Now, we set
\begin{align*}
    &\tau_{0,0}^{(1)} = g_1 = \bar{\tau}_{0,0}^{(1)} = g_1^*, \quad \tau_{0,0}^{(2)} = g_2 = \bar{\tau}_{0,0}^{(2)} = g_2^*, \\ &\tau_{1,0}^{(0)} = h_3 = \bar{\tau}_{1,0}^{(0)} = h_3^*, \quad \tau_{0,1}^{(0)} = h_4 = \bar{\tau}_{0,1}^{(0)} = h_4^*, \\
    &\tau_{0,0}^{(1,2)} = s_{12} = -s_{21}, \quad \tau_{1,0}^{(1)} = r_{13} = r_{31},\quad \tau_{0,1}^{(1)} = r_{14} = r_{41}, \\
    &\tau_{1,0}^{(2)} = r_{23} = r_{32},\quad \tau_{0,1}^{(2)} = r_{24} = r_{42}, \quad \tau_{1,1}^{(0)} = r_{34} = r_{43},\quad \tau_{0,0}^{(0)} = f,
\end{align*}
where \(f\) is a real-valued function. Then the bilinear equations \eqref{KP_bright_dark_2.2}-\eqref{KP_bright_dark_10.2} reduce to \eqref{4H_BL_1}-\eqref{4H_BL_4}, \eqref{KP_bright_dark_11.1} reduces to \eqref{4H_BL_5}, \eqref{KP_bright_dark_12}-\eqref{KP_bright_dark_17} reduce to \eqref{4H_BL_6}-\eqref{4H_BL_11}, respectively. As a consequence, the two-bright-two-dark soliton solutions of the four-component Hirota equation are derived.

\subsection{Proof of \cref{thm:bd_css}} \label{sec:deriva_bd_css}
\begin{lemma}\label{lma:4cmkdv_to_css}
    Let \(\epsilon_1 = c_1 = c_2\), \(\epsilon_2 = c_3 = c_4\),  \(\alpha = \alpha_3 = -\alpha_4\) and \( \rho = \rho_3 = \rho_4\), and impose the following parameter constraints
    \begin{align}
        p_i = p_{N+1-i}^*, \quad \xi_{i,0} = \xi_{N+1-i,0}, \quad C_i = D_{N+1-i}^*,
    \end{align}
  then we have \(g_1 = g_2^*, h_3 = h_4^*\).
\end{lemma}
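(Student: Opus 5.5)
The plan is to compare, entry by entry, the complex conjugates of the matrices in \cref{thm:2b2d_4cmkdv} with the index-reversed matrices. Write $i' = N+1-i$ and let $J$ be the $N\times N$ reversal permutation matrix, $J_{ij}=\delta_{i,j'}$. The target identity is
\begin{equation*}
\left(m_{ij}^{k,l}\right)^* = m_{i'j'}^{l,k},\qquad\text{i.e.}\qquad \left(M_{k,l}\right)^* = J M_{l,k} J .
\end{equation*}
Since $\det J^2 = 1$, this gives $h_3^* = |M_{1,0}|^* = |M_{0,1}| = h_4$ at once, and also $f^*=f$.

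\textbf{Step 1: the constraints on $p_i$ and $\xi_i$.} I read the phase constraint as $\xi_{i,0}^* = \xi_{i',0}$, as in \eqref{cmp_conj_rest}; the unconjugated version in the statement of \cref{lma:4cmkdv_to_css} I take to be a misprint. Since $x,t$ and $c_3\rho_3^2+c_4\rho_4^2 = 2\epsilon_2\rho^2$ are real, $\xi_i$ is a real-coefficient polynomial in $p_i$ plus $\xi_{i,0}$. Hence $p_i^* = p_{i'}$ yields $\xi_i^* = \xi_{i'}$ and $\xi_j = \xi_{j'}^*$.

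\textbf{Step 2: the exponential part of $m_{ij}^{k,l}$.} With $\alpha_3=\alpha$ and $\alpha_4=-\alpha$, this part is
\begin{equation*}
\frac{1}{p_i+p_j^*}\left(-\frac{p_i-\i\alpha}{p_j^*+\i\alpha}\right)^k\left(-\frac{p_i+\i\alpha}{p_j^*-\i\alpha}\right)^l e^{\xi_i+\xi_j^*}.
\end{equation*}
Conjugating and substituting $p_i^*=p_{i'}$, $p_j=p_{j'}^*$ gives
\begin{equation*}
\frac{1}{p_{i'}+p_{j'}^*}\left(-\frac{p_{i'}+\i\alpha}{p_{j'}^*-\i\alpha}\right)^k\left(-\frac{p_{i'}-\i\alpha}{p_{j'}^*+\i\alpha}\right)^l e^{\xi_{i'}+\xi_{j'}^*}.
\end{equation*}
This is exactly the exponential part of $m_{i'j'}^{l,k}$: conjugation interchanges the roles of the two factors, and hence of $k$ and $l$, because $\alpha_4=-\alpha_3$.

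\textbf{Step 3: the $C$ and $D$ parts.} Because $c_1=c_2=\epsilon_1$ and $c_3\rho_3^2=c_4\rho_4^2=\epsilon_2\rho^2$, we have
\begin{equation*}
q_i = r_i = \frac{1}{\epsilon_1}\left(-p_i+\frac{2\epsilon_2\rho^2 p_i}{p_i^2+\alpha^2}\right),
\end{equation*}
a real rational function of $p_i$, so $q_i^* = q_{i'}$. Conjugating the $C$-term gives $C_iC_j^*/(q_{i'}+q_{j'}^*)$. Using $C_i = D_{i'}^*$, we get $C_iC_j^* = D_{i'}^*D_{j'}$, which is the $D$-term of $m_{i'j'}^{l,k}$; symmetrically, the conjugated $D$-term becomes the $C$-term at $(i',j')$. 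This completes the matrix identity.

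\textbf{Step 4: the bordered determinants $g_1,g_2$.} Here $\Phi^*_i = e^{\xi_{i'}}$, so $\Phi^* = J\Phi$, and $C_j^* = D_{j'}$, so $\bar\Psi^* = J\bar\Upsilon$. Therefore
\begin{equation*}
g_1^* = \begin{vmatrix} JM_{0,0}J & J\Phi\\ -\bar\Upsilon^TJ & 0\end{vmatrix}.
\end{equation*}
Conjugating by $\mathrm{diag}(J,1)$, which has determinant $\pm1$ and appears twice, turns this into $g_2$.

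The only delicate points are the bookkeeping in Step 2, namely that $\alpha_4=-\alpha_3$ is precisely what converts $(k,l)$ into $(l,k)$, and checking that $q_i$ is a real function of $p_i$ so that $q_i^*=q_{i'}$. I expect no genuine obstacle beyond these.
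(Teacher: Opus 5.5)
Your proposal is correct and follows the paper's proof. The paper likewise establishes $\xi_i^*=\xi_{N+1-i}$, $q_i^*=r_{N+1-i}$ and $(m_{ij}^{k,l})^*=m_{N+1-i,N+1-j}^{l,k}$, then reindexes the plain and bordered determinants, which you make explicit with the reversal matrix $J$; your reading of the phase constraint as $\xi_{i,0}^*=\xi_{N+1-i,0}$ is also what the paper's proof actually uses.
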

\begin{proof}
    Under the assumptions, we have 
    \begin{equation*}
        \omega_1 = \omega_2 = 3\rho^2 \epsilon_2 \alpha_3 -3\rho^2 \epsilon_2 \alpha_3 = 0, \quad \omega_3 = \alpha_3^3 + 6\epsilon_2 \rho^2 \alpha_3 = -\omega_4
    \end{equation*}
and
    {\allowdisplaybreaks\begin{align*}
        \xi_i^* ={}& p_i^* \left(x - 3(c_3\rho_3^2+c_4\rho_4^2)t\right)+(p_i^*)^3 t+\xi_{i,0}^* \\
        ={}& p_{N+1-i} \left(x - 3(c_3\rho_3^2+c_4\rho_4^2)t\right)+(p_{N+1-i})^3 t+\xi_{N+1-i,0} = \xi_{N+1-i}, \\
        q_i^* ={}& \frac{1}{c_1} \left(-p_i^* + \frac{c_3 \rho_3^2}{p_i^* + \i \alpha_3} + \frac{c_3 \rho_3^2}{p_i^* - \i \alpha_3}\right) \\
        ={}& \frac{1}{c_2} \left(-p_{N+1-i} + \frac{c_3 \rho_3^2}{p_{N+1-i} + \i \alpha_3} + \frac{c_3 \rho_3^2}{p_{N+1-i} - \i \alpha_3}\right) = r_{N+1-i},\\
        \left(m_{i,j}^{k,l}\right)^*={}&\frac{1}{p_i^*+p_j}\left(-\frac{p_i^* + \i \alpha_3}{p_j - \i \alpha_3}\right)^k \left(-\frac{p_i^* - \i \alpha_3}{p_j + \i \alpha_3}\right)^l e^{\xi_i^*+\xi_j}+\frac{C_i C_j^*}{q_i^*+q_j}+\frac{D_i D_j^*}{r_i^*+r_j} \\
        ={}& \frac{1}{p_{N+1-i}+p_{N+1-j}^*}\left(-\frac{p_{N+1-i} + \i \alpha_3}{p_{N+1-j}^* - \i \alpha_3}\right)^k \left(-\frac{p_{N+1-i} - \i \alpha_3}{p_{N+1-j}^*+ \i \alpha_3}\right)^l  e^{\xi_{N+1-i}+\xi_{N+1-j}^*}\\
        &+\frac{D_{N+1-i}^* D_{N+1-j}}{d_{N+1-i}+d_{N+1-j}^*}+\frac{C_{N+1-i} C_{N+1-j}^*}{q_{N+1-i}+q_{N+1-j}^*} \\
        ={}& m_{N+1-i,N+1-j}^{l,k}.
    \end{align*}}
    Thus, we have the equalities
    \begin{align*}
        &g_1 = \tau_{0,0}^{(1)} = \begin{vmatrix}
            m_{i,j}^{0,0} & e^{\xi_i} \\ -C_j  & 0
        \end{vmatrix} = \begin{vmatrix}
            \left(m_{N+1-i,N+1-j}^{0,0}\right)^* & e^{\xi_{N+1-j}^*} \\ -D_{N+1-j}^*  & 0
        \end{vmatrix} = \begin{vmatrix}
            \left(m_{i,j}^{0,0}\right)^* & e^{\xi_{j}^*} \\ -D_j^*  & 0
        \end{vmatrix} = g_2^*,\\
        &h_3 = \tau_{1,0}^{(0)} = \begin{vmatrix}
            m_{i,j}^{1,0}
        \end{vmatrix} = \begin{vmatrix}
            \left(m_{N+1-i,N+1-j}^{0,1}\right)^*
        \end{vmatrix} = \begin{vmatrix}
            \left(m_{i,j}^{0,1}\right)^*
        \end{vmatrix} = h_4^*.
    \end{align*}
\end{proof}
According to \cref{thm:2b2d_4cmkdv}, the conditions \eqref{cond_4cmkdv_to_css} and \cref{lma:4cmkdv_to_css}, the bright-dark soliton solutions to the coupled Sasa-Satsuma equation \eqref{css_1}-\eqref{css_2} can be derived by setting
\begin{align*}
    &u_1 = v_1 = \frac{g_1}{f} = \frac{g_2^*}{f} = v_2^*, \\
    &u_2 = v_3 = \rho \exp(\i (\alpha x - (\alpha^3 + 6\epsilon_2 \rho \alpha) t))\frac{h_3}{f} = \rho \exp(-\i (\alpha x - (\alpha^3 + 6\epsilon_2 \rho \alpha) t))\frac{h_4^*}{f} = v_4^*.
\end{align*}
\section{Conclusion}\label{sect:conclusion}
In this paper, we derived various soliton solutions to the CSS equation under mixed boundary conditions which were not considered in our previous papers. The solutions are given in terms of $N \times N $ determinants. The solution of $N=1$ seems not reported in the literature and the solutions of $N=2$ include both the bright-dark soliton and breather, whose types were identified. For higher order soliton interactions, the types can be changed due to collisions. In the form of the solutions given in the present paper, we haven't found resonant bright-dark soliton solutions. It remains an interesting topic to be explored.

The general $N$-soliton solutions of the multi-component NLS equation under mixed boundary conditions were derived in \cite{feng2014general} using the KP reduction method. In addition, soliton solutions under mixed boundary conditions have also been obtained for various other multi-component integrable systems. 




The stability of the bright-dark soliton solutions deserves a careful investigation, which is beyond the scope of the present paper. Nevertheless, the compact determinant form presented in this paper may facilitate such an analysis.

\section*{Funding}
B.F. Feng was partially supported by National Science Foundation (NSF) under Grant No. DMS-1715991 and U.S. Department of Defense (DoD), Air Force for Scientific Research (AFOSR) under grant No. W911NF2010276.
C.F. Wu was supported by the National Natural Science Foundation of China (Grant No. 12471077) and Shenzhen Natural Science Fund (Stable Support Project of Shenzhen, Grant No. 20231121103530003).  
G.X. Zhang was supported by Student Research Cultivation Project at the Institute for Advanced Study of Shenzhen University.

\section*{Conflict of interests}
The authors have no conflicts to disclose.

\section*{Data availability statements}
All the data are the Matlab codes for plotting the graphs and are available upon the request.

\catcode`'=9
\catcode``=9
\bibliographystyle{agsm}
\bibliography{references}

\end{document}